\def\llncs{0}
\def\fullpage{1}
\def\anonymous{0}
\def\authnote{0}
\def\notxfont{0}
\def\submission{0}
\def\reply{0}
\def\cameraready{0}
\def\noaux{1}
\ifnum\submission=1
\def\anonymous{1}
\def\llncs{1}
\else
\fi

\ifnum\cameraready=1
\def\llncs{1}
\def\anonymous{0}
\def\authnote{0}
\else
\fi

\ifnum\anonymous=1
\def\authnote{0}
\else
\fi

\def\mac{0}

\ifnum\mac=0
\ifnum\llncs=1
	\documentclass[envcountsect,a4paper,runningheads]{llncs}
\else
	\documentclass[letterpaper,hmargin=1.05in,vmargin=1.05in]{article}
			\ifnum\fullpage=1
		\usepackage{fullpage}
		\fi
\fi
\else
\ifnum\llncs=1
	\documentclass[dvipdfmx,envcountsect,a4paper,runningheads]{llncs}
\else
	\documentclass[dvipdfmx,letterpaper,hmargin=1.05in,vmargin=1.05in]{article}
			\ifnum\fullpage=1
		\usepackage{fullpage}
		\fi
\fi
\fi

\ifnum\reply=1
\usepackage{ulem}
\renewcommand{\emph}{\textit}
\else
\fi



\usepackage[%
  colorlinks=true,
  citecolor=darkgreen,
  linkcolor=darkblue,
  urlcolor=darkblue,
  pagebackref=true
]{hyperref}

\usepackage{amsmath, amsfonts, amssymb, mathtools,amscd}

\usepackage{amsthm}

\usepackage{lmodern}
\usepackage[T1]{fontenc}
\usepackage[utf8]{inputenc}

\usepackage{etex}

\usepackage{arydshln} 
\usepackage{url}
\usepackage{ifthen}
\usepackage{bm}
\usepackage{multirow}
\usepackage[dvips]{graphicx}
\usepackage[usenames]{color}
\usepackage{xcolor,colortbl} 
\usepackage{threeparttable}
\usepackage{comment}
\usepackage{paralist,verbatim}
\usepackage{cases}
\usepackage{booktabs}
\usepackage{braket}
\usepackage{cancel} 
\usepackage{framed}
\usepackage{authblk}
\usepackage{pifont}
\usepackage{physics}
\definecolor{darkblue}{rgb}{0,0,0.6}
\definecolor{darkgreen}{rgb}{0,0.5,0}
\definecolor{maroon}{rgb}{0.5,0.1,0.1}
\definecolor{dpurple}{rgb}{0.2,0,0.65}
\definecolor{chocolate}{rgb}{0.8,0.4,0.1}

\usepackage[capitalise,noabbrev]{cleveref}
\usepackage[absolute]{textpos}
\usepackage[final]{microtype}
\usepackage[absolute]{textpos}
\usepackage{autonum}

\usepackage{dsfont}
\DeclareMathAlphabet{\mathpzc}{OT1}{pzc}{m}{it}

\ifnum\submission=1
\renewcommand*{\backref}[1]{}
\def\notxfont{1}
\pagestyle{plain}
\else
\fi

\ifnum\llncs=1
\renewcommand{\subparagraph}{\paragraph}
\else
\ifnum\notxfont=1
\else
\usepackage{mathpazo}
\usepackage{newtxtext}
\usepackage{helvet}
\fi
\fi

\newtheoremstyle{thicktheorem}%
{\topsep}
{\topsep}
{\itshape}{}%
{\bfseries}%
{.}
{ }%
{\thmname{#1}\thmnumber{ #2}%
		\thmnote{ (#3)}%
}

\newtheoremstyle{remark}
{\topsep}
{\topsep}
	{}
	{}
	{}
	{.}
	{ }
	{\textit{\thmname{#1}}\thmnumber{ #2}
			\thmnote{ (#3)}%
	}

\ifnum\llncs=0
	\theoremstyle{thicktheorem}
	\newtheorem{theorem}{Theorem}[section]
	\newtheorem{lemma}[theorem]{Lemma}
	\newtheorem{corollary}[theorem]{Corollary}
	
	\newtheorem{definition}[theorem]{Definition}

	\theoremstyle{remark}
	\newtheorem{claim}[theorem]{Claim}
	\newtheorem{remark}[theorem]{Remark}

\fi

	\crefname{theorem}{Theorem}{Theorems}
	\crefname{assumption}{Assumption}{Assumptions}
	\crefname{construction}{Construction}{Constructions}
	\crefname{corollary}{Corollary}{Corollaries}
	\crefname{conjecture}{Conjecture}{Conjectures}
	\crefname{definition}{Definition}{Definitions}
	\crefname{exmaple}{Example}{Examples}
	\crefname{experiment}{Experiment}{Experiments}
	\crefname{counterexample}{Counterexample}{Counterexamples}
	\crefname{lemma}{Lemma}{Lemmata}
	\crefname{observation}{Observation}{Observations}
	\crefname{proposition}{Proposition}{Propositions}
	\crefname{remark}{Remark}{Remarks}
	\crefname{claim}{Claim}{Claims}
	\crefname{fact}{Fact}{Facts}
	\crefname{note}{Note}{Notes}

\ifnum\llncs=1
 \crefname{appendix}{App.}{Appendices}
 \crefname{section}{Sec.}{Sections}
\else
\fi

\ifnum\llncs=1
\pagestyle{plain}
\renewcommand*{\backref}[1]{}
\else
	\renewcommand*{\backref}[1]{(Cited on page~#1.)}
	\ifnum\notxfont=1
	\else
		\usepackage{newtxtext}
	\fi
\fi

\newcommand*{\keys}[1]{\mathsf{#1}}

\newcommand{\Oracle}[1]{O_{\mathtt{#1}}}

\newcommand*{\algo}[1]{\ensuremath{\mathsf{#1}}}
\newcommand*{\qalgo}[1]{\ensuremath{\mathpzc{#1}}}
\newcommand*{\qstate}[1]{\mathpzc{#1}}
\newcommand*{\qreg}[1]{{\color{gray}{\mathsf{#1}}}}


\newcounter{expitem}




\usepackage{mathtools}

\newcommand{\la}{\leftarrow}
\newcommand{\ra}{\rightarrow}
\renewcommand{\gets}{\leftarrow}

\newcommand{\seteq}{\coloneqq}


\newcommand{\qCh}{\qalgo{Ch}}

\newcommand{\cD}{\mathcal{D}}
\newcommand{\cE}{\mathcal{E}}
\newcommand{\cF}{\mathcal{F}}

\newcommand{\cH}{\mathcal{H}}
\newcommand{\cI}{\mathcal{I}}

\newcommand{\cM}{\mathcal{M}}
\newcommand{\cN}{\mathcal{N}}

\newcommand{\cP}{\mathcal{P}}

\newcommand{\cR}{\mathcal{R}}

\newcommand{\cX}{\mathcal{X}}
\newcommand{\cY}{\mathcal{Y}}
\newcommand{\cZ}{\mathcal{Z}}

\newcommand{\qA}{\qalgo{A}}
\newcommand{\qB}{\qalgo{B}}
\newcommand{\qC}{\qalgo{C}}
\newcommand{\qD}{\qalgo{D}}

\newcommand{\qS}{\qalgo{S}}

\def\makeuppercase#1{
\expandafter\newcommand\csname sf#1\endcsname{\mathsf{#1}}
\expandafter\newcommand\csname frak#1\endcsname{\mathfrak{#1}}
\expandafter\newcommand\csname bb#1\endcsname{\mathbb{#1}}
\expandafter\newcommand\csname bf#1\endcsname{\textbf{#1}}
}

\def\makelowercase#1{
\expandafter\newcommand\csname frak#1\endcsname{\mathfrak{#1}}
\expandafter\newcommand\csname bf#1\endcsname{\textbf{#1}}
}

\newcounter{char}
\setcounter{char}{1}

\loop
   \edef\letter{\alph{char}}
   \edef\Letter{\Alph{char}}
   \expandafter\makelowercase\letter
   \expandafter\makeuppercase\Letter
   \stepcounter{char}
   \unless\ifnum\thechar>26
\repeat

\def\makeuppercase#1{
\expandafter\newcommand\csname tl#1\endcsname{\widetilde{#1}}
}

\def\makelowercase#1{
\expandafter\newcommand\csname tl#1\endcsname{\widetilde{#1}}
}


\newcommand{\bit}{\{0,1\}}





\newcommand{\secp}{\lambda}

\newcommand{\id}{\mathsf{id}}

\newcommand{\cert}{\keys{cert}}

\newcommand{\aux}{\mathsf{aux}}








\newcommand{\expa}[2]{\mathsf{Expt}_{#1}^{\mathsf{#2}}}
\newcommand{\expb}[3]{\mathsf{Exp}_{#1}^{ \mathsf{#2} \mbox{-} \mathsf{#3}}}

\newcommand{\Hyb}{\mathsf{Hyb}}

\newcommand*{\pk}{\keys{pk}}
\newcommand*{\sk}{\keys{sk}}

\newcommand*{\vk}{\keys{vk}}

\newcommand*{\ssk}{\keys{ssk}}
\newcommand*{\svk}{\keys{svk}}

\newcommand*{\key}{\keys{k}}
\newcommand*{\msk}{\keys{msk}}

\newcommand*{\pp}{\keys{pp}}

\newcommand*{\tk}{\keys{tk}}

\newcommand*{\xk}{\keys{xk}}

\newcommand*{\ct}{\keys{ct}}

\newcommand*{\msg}{\keys{m}}


\newcommand{\qsk}{\qstate{sk}}

\newcommand{\SD}{\mathsf{SD}}
\newcommand{\TD}{\mathsf{TD}}


\newenvironment{boxfig}[2]{\begin{figure}[#1]\fbox{\begin{minipage}{0.97\linewidth}
                        \vspace{0.2em}
                        \makebox[0.025\linewidth]{}
                        \begin{minipage}{0.95\linewidth}
            {{
                        #2 }}
                        \end{minipage}
                        \vspace{0.2em}
                        \end{minipage}}
                        }
                        {\end{figure}}



\newcommand{\Mac}{\algo{Mac}}


\newcommand{\win}{\mathtt{Win}}

\newcommand{\Event}{\mathtt{E}}


\newcommand{\Setup}{\algo{Setup}}

\newcommand{\Gen}{\algo{Gen}}

\newcommand{\KG}{\algo{KG}}

\newcommand{\Sign}{\algo{Sign}}
\newcommand{\Vrfy}{\algo{Vrfy}}


\newcommand{\qKG}{\qalgo{KG}}

\newcommand{\qDel}{\qalgo{Del}}

\newcommand{\qR}{\qalgo{R}}


\newcommand{\E}{\algo{E}}





\newcommand{\PRF}{\algo{PRF}}

\newcommand{\Eval}{\algo{Eval}}



\newcommand{\Mark}{\mathsf{Mark}}






\newcommand{\negl}{{\mathsf{negl}}}
\newcommand{\nonnegl}{{\mathsf{non}\textrm{-}\mathsf{negl}}}





\newcommand{\poly}{{\mathrm{poly}}}

\newcommand{\xor}{\oplus}















\newcommand{\tlC}{\widetilde{C}}

\newcommand{\SKL}{\algo{SKL}}



%
%

\newcommand{\qEval}{\qalgo{Eval}}

\newcommand{\qP}{\qalgo{P}}

\newcommand{\MLTT}{\algo{MLTT}}

\newcommand{\TMac}{\mathsf{T}\mathsf{\Mac}}
\newcommand{\tmac}{\mathsf{tmac}}
\newcommand{\TG}{\qalgo{TG}}
\newcommand{\qSign}{\qalgo{Sign}}
\newcommand{\qtk}{\qstate{tk}}

\newcommand{\qTrace}{\qalgo{Trace}}
\newcommand{\tTrace}{\algo{Trace}}
\newcommand{\mltt}{\mathsf{mltt}}
\newcommand{\Win}{\mathsf{Win}}
\newcommand{\TPRF}{\algo{TPRF}}
\newcommand{\API}{\qalgo{API}}
\newcommand{\FC}{\algo{FC}}
\newcommand{\tprf}{\mathsf{tprf}}
\newcommand{\fc}{\mathsf{fc}}

\newcommand{\Est}{\qalgo{EST}}
\newcommand{\Repair}{\qalgo{Repair}}

\newcommand{\qF}{\qalgo{F}}
\newcommand{\qE}{\qalgo{E}}
\newcommand{\chec}{\algo{Check}}
\newcommand{\out}{\mathsf{out}}
\newcommand{\WMPRF}{\mathsf{WMPRF}}
\newcommand{\qExtract}{\qalgo{Extract}}
\newcommand{\prfk}{\mathsf{prfk}}
\newcommand{\Dchall}{D_{b}^{\mathsf{chall}}}
\newcommand{\Dwprf}{D^{\mathsf{wprf}}}
\newcommand{\MLTPRF}{\mathsf{MLT}\textrm{-}\mathsf{PRF}}
\newcommand{\ProjImp}{\mathsf{ProjImp}}
\newcommand{\counts}{\mathsf{count}}

\newcommand{\SigVrfy}{\algo{SigVrfy}}
\newcommand{\DSSKL}{\algo{DS}\textrm{-}\algo{SKL}}
\newcommand{\DSig}{\algo{DSig}}
\newcommand{\sig}{\algo{sig}}

\newcommand{\Live}{\mathtt{Live}}
\newcommand{\APILive}{\mathtt{APILive}}
\newcommand{\GoodTrace}{\mathtt{GoodTrace}}
\newcommand{\GoodExt}{\mathtt{GoodExt}}
\newcommand{\BadTrace}{\mathtt{BadTrace}}
\newcommand{\BadCode}{\mathtt{BadCode}}
\newcommand{\BadBit}{\mathtt{BadBit}}
\newcommand{\BadID}{\mathtt{BadID}}
\newcommand{\NoAbort}{\mathtt{NoAbort}}
\newcommand{\QPRF}{\mathsf{QPRF}}
\newcommand{\CorVrfy}{\algo{CorVrfy}}
\newcommand{\qg}{\qstate{g}}

\newcommand{\authornote}[3]{\textcolor{#3}{[\textsc{#1:} {#2}]}}
\ifnum\authnote=1
\newcommand{\fuyuki}[1]{\authornote{Fuyuki}{#1}{chocolate}}
\newcommand{\ryo}[1]{\authornote{Ryo}{#1}{darkblue}}
\newcommand{\nikhil}[1]{\authornote{Nikhil}{#1}{red}}
\else
\newcommand{\fuyuki}[1]{}
\newcommand{\ryo}[1]{}
\newcommand{\nikhil}[1]{}
\fi

\ifnum\llncs=1
\let\oldvec\vec
\let\vec\oldvec
%
%
\setcounter{tocdepth}{2}
\makeatletter
\renewcommand*\l@author[2]{}
\renewcommand*\l@title[2]{}
\makeatletter
\fi    

\theoremstyle{remark}


\ifnum\submission=1
\title{
\textbf{Collusion-Resistant Quantum Secure Key Leasing
Beyond Decryption}
}
\else
\title{
\textbf{Collusion-Resistant Quantum Secure Key Leasing Beyond Decryption}
}
\fi

\begin{document}

\ifnum\anonymous=1 
\ifnum\llncs=1
\author{\empty}\institute{\empty}
\else
\author{}
\fi
\else
%
%
\ifnum\llncs=1
\author{
	Fuyuki Kitagawa\inst{1,2} \and Ryo Nishimaki\inst{1,2} \and Nikhil Pappu\inst{3}
}
\institute{
	NTT Social Informatics Laboratories, Tokyo, Japan \and NTT Research Center for Theoretical Quantum Information, Atsugi, Japan \and Portland State University, USA
}
\else
%
%
\author[$\dagger$ $\diamondsuit$]{\hskip 1em Fuyuki Kitagawa}
\author[$\dagger$ $\diamondsuit$]{\hskip 1em Ryo Nishimaki}
\author[$\star$]{\hskip 1em Nikhil Pappu \thanks{Supported by the US National Science Foundation (NSF) via Fang Song's Career Award (CCF-2054758).}}
\affil[$\dagger$]{{\small NTT Social Informatics Laboratories, Tokyo, Japan}\authorcr{\small \{fuyuki.kitagawa,ryo.nishimaki\}@ntt.com}}
\affil[$\diamondsuit$]{{\small NTT Research Center for Theoretical Quantum Information, Atsugi, Japan}}
\affil[$\star$]{{\small Portland State University, USA}\authorcr{\small nikpappu@pdx.edu}}
\renewcommand\Authands{, }
\fi 
\fi

\ifnum\llncs=1
\date{}
\else
\ifnum\anonymous=0
\date{\today}
\else
\date{}
\fi
\fi

\maketitle

\begin{abstract}
Secure key leasing (SKL) enables the holder of a secret key for a
cryptographic function to temporarily lease the key using quantum
information. Later, the recipient can produce a \emph{deletion
certificate}—a proof that they no longer have access to the secret
key.  The security guarantee ensures that even a \emph{malicious}
recipient cannot continue to evaluate the function, after producing
a valid deletion certificate.

Most prior work considers an adversarial recipient that obtains a single
leased key, which is insufficient for many applications. In
the more realistic \emph{collusion-resistant} setting, security must
hold even when polynomially many keys are leased (and subsequently
deleted). However, achieving collusion-resistant SKL from standard
assumptions remains poorly understood, especially for
functionalities beyond decryption.

We improve upon this situation by introducing new pathways for
constructing collusion-resistant SKL. Our main contributions
are as follows:

\begin{itemize}
\item A generalization of quantum-secure collusion-resistant traitor
tracing called multi-level traitor tracing (MLTT), and a compiler
that transforms an MLTT scheme for a primitive $X$ into a
collusion-resistant SKL scheme for primitive $X$.

\item The first bounded collusion-resistant SKL scheme for PRFs, assuming
LWE.

\item A compiler that upgrades any single-key secure SKL scheme for
digital signatures into one with unbounded collusion-resistance,
assuming OWFs.

\item A compiler that upgrades collusion-resistant SKL schemes with
classical certificates to ones having verification-query resilience,
assuming OWFs.
\end{itemize}

\end{abstract}


\ifnum\llncs=1
\else
\newpage
\setcounter{tocdepth}{2}
\tableofcontents

\newpage
\fi


\section{Introduction}\label{sec:intro}

\textbf{Unclonable cryptography and copy protection.}
Unclonable cryptography is a prominent subfield of quantum
cryptography, which aims to leverage uniquely quantum phenomena to
achieve guarantees that are impossible in the classical world. A
central object of study in this area is quantum copy protection
\cite{Aar09}, which has attracted significant interest in recent years
\cite{C:ALLZZ21,C:CLLZ21,TCC:LLQZ22,STOC:ColGun24,CMP24,TCC:CheHerVu23,TCC:CakGoy24}. At a
high level, copy protection encodes software into quantum states in a way that
preserves functionality, while preventing the creation of functionally
equivalent copies. If realized, this primitive would be particularly
valuable to software distributors seeking to combat piracy.
Despite its appeal, currently known schemes for copy protection in the
plain model have a major limitation: they rely on the assumption of
indistinguishability obfuscation (iO) \cite{C:BGIRSV01} \footnote{The
work of Coladangelo et al. \cite{CMP24} presents a copy protection
scheme in the random oracle model from standard assumptions, without relying on more
structured oracles. However, their results are for 
some evasive functions, and not for cryptographic ones.}. iO is a strong cryptographic primitive
known to imply much of modern cryptography. As a result, constructions
based on iO are generally inefficient, and often considered “overkill”
compared to approaches based on simpler assumptions. Moreover, we do
not yet know how to construct \emph{post-quantum} iO from well-studied
assumptions.

\textbf{Secure software leasing (SSL).} A notion related to
copy protection is that of secure software leasing (SSL)
\cite{EC:AnaLaP21}. In
SSL, an entity called the \emph{lessor} can provide a quantum secret key to
an entity called the \emph{lessee}, which enables the lessee to evaluate some
function. Later, the lessee can be asked to revoke (delete) this key,
which if verified successfully, guarantees it can no longer
evaluate the function. Hence, SSL is a relaxation of copy protection
in a sense, as it does not prevent the creation of equivalent
copies. It merely ensures that an adversary creating
such copies cannot pass verification. Unfortunately,
Ananth and La Placa \cite{EC:AnaLaP21} showed that both SSL and copy protection are not
possible to achieve for some unlearnable functions \footnote{These notions are
trivially impossible for learnable functions.}. However, SSL for
certain cryptographic functionalities has been realized from standard
assumptions \cite{TCC:KitNisYam21}, unlike copy protection.

\textbf{Secure key leasing (SKL).}
Apart from the natural security guarantee of SSL, its tractability
from standard assumptions makes it particularly appealing.
However, a drawback of most SSL notions is that security is only
guaranteed for an adversary that ``honestly'' evaluates the function,
after deleting its key. This is quite unrealistic for
cryptographic settings. Hence, recent works have introduced a
notion called secure key leasing (SKL). This is essentially SSL for
cryptographic functions, except that the adversaries may try to evaluate the function
arbitrarily. Note that existing copy protection schemes 
imply SKL (As discussed in \cite{EC:AKNYY23}), but this pathway
suffers from the aforementioned iO limitation. Hence, previous works
\cite{EC:AKNYY23,TCC:AnaPorVai23,EC:CGJL25,EC:KitMorYam25,TCC:AnaHuHua24} have constructed
SKL for PKE, PRFs and signatures from standard assumptions.

\textbf{Collusion-resistant SKL.}
Even though SKL provides a more realistic guarantee than SSL, most
previous works consider an adversary that can only obtain a single
leased key. This is rather limiting for practical scenarios.
For e.g., in the case of PKE, a
service may want to broadcast encrypted data and lease out several
decryption keys to different users for a fee. The users could be
granted a refund if they return their key within some trial period.
Here, one would expect that a coalition of users (or a user requesting multiple
keys) is unable to cheat the system by retaining access without
spending money. However, these attacks are not captured by the usual
definition. This motivates a stronger notion of SKL where the
adversary is provided with polynomially many secret keys,
and security is guaranteed if all the keys are returned.
Such a setting was considered in the work of Kitagawa et al.
\cite{C:KitNisPap25}, where they constructed unbounded
collusion-resistant SKL for the decryption functionality of a PKE
scheme, based on the LWE assumption. The ``unbounded'' prefix refers to the
parameters of the scheme growing poly-logarithmically with the
collusion-bound, rather than polynomially in the ``bounded'' case. To
the best of our knowledge, this work and the work of Agrawal et al.
\cite{EC:AKNYY23} are the only ones that study collusion-resistant SKL
from weaker than iO assumptions. Even though they obtain
several positive results for variants of encryption,
their techniques do not seem
applicable to other primitives such as PRFs and signatures. Hence, we
are not aware of general techniques to achieve collusion-resistant
SKL (See Section \ref{sec:chall} for more details).

\textbf{Collusion-resistant SKL for PRFs and Signatures.}
A natural problem in the study of SKL concerns the leasing of secret-keys
of PRFs and signatures, as they form the backbone of classical cryptography. For
e.g., PRFs imply primitives such as symmetric-key
encryption and message authentication codes. Hence, SKL for
PRFs (PRF-SKL) and digital signatures (DS-SKL) can serve as
building blocks for ``higher-level'' SKL primitives. While some of
these implications follow easily, we leave the study of composition
of SKL primitives to future work.

Despite previous works obtaining PRF-SKL from standard assumptions
\cite{TCC:AnaPorVai23,EC:KitMorYam25}, we are currently unaware of
schemes that even satisfy bounded collusion-resistance. In contrast,
PKE-SKL is trivial to achieve with bounded collusion-resistance,
assuming a single-key secure PKE-SKL scheme. In-short, this is because
the core challenge in obtaining bounded collusion-resistant SKL can be
avoided for PKE-SKL, due to a workaround. However, it is not clear if such workarounds exist for other primitives such as PRFs.
For a detailed discussion, see Section \ref{sec:chall}.

One can also envision several use cases for bounded
collusion-resistant PRF-SKL, apart from its use in higher-level
primitives. For e.g., one can embed PRF keys in
trial versions of video games that are to be returned.
Security can
then be at-least heuristically argued, by tying the functionality
to the evaluation of the PRF. It is natural to expect
collusion-resistance here, as subsets of users may be capable of
colluding to cheat the system.
We also expect PRF-SKL to find applications in
multi-party protocols where PRFs are ubiquitous. For
e.g., one can imagine an MPC protocol where subsets of users
share common PRF keys, which are used to obtain common
randomness. However, in a dynamic system where some users may
eventually leave, it could be necessary to ensure 
revocation of their keys for the security of future iterations. The other
alternative is to update the entire setup when a user leaves,
which is not ideal. Clearly, a single-key secure PRF-SKL scheme is
insufficient due to the possibility of collusion. While
achieving unbounded collusion-resistance is ideal, 
an improvement from single-key to bounded security is also
substantial for applications, perhaps more than the jump from bounded
to unbounded security. Hence, obtaining any form of
collusion-resistant PRF-SKL is an important step in the study of SKL.

Similar to PRF-SKL, DS-SKL was also constructed from standard
assumptions \cite{EC:KitMorYam25}, but even bounded
collusion-resistance remains unclear. This is also a natural
primitive, as one can consider applications where 
employees need to sign on behalf of a company, but may eventually 
leave the company. In light
of the above discussion, we pose the following questions regarding
collusion-resistant SKL:

\begin{center}
\emph{1) What pathways exist for constructing collusion-resistant
SKL schemes from standard assumptions?}

\emph{2) Is it possible
to construct collusion-resistant SKL from standard assumptions for:
i) PRFs;
ii) Digital Signatures?
}
\end{center}

In this work, we present a new approach to constructing collusion-resistant SKL,
based on the notion of quantum-secure collusion-resistant
traitor-tracing \cite{TCC:Zhandry20,C:Zhandry23}. We believe this
advances the current understanding of the first question.
Traitor tracing is a primitive that enables the generation of secret-keys
meant for different users (identities), where all the keys allow to evaluate
some common functionality. The interesting aspect is the security
notion, which disincentivizes (possibly colluding) users from leaking
their keys. Essentially, this is achieved with the help of a tracing
algorithm that recovers the identity of at least one cheating user
(a traitor), from any pirate program capable of evaluating the
functionality. While this is a classical primitive, its quantum-secure
variant requires that a traitor be identified even if the pirate program
is a quantum state. Hence, the tracing algorithm is a quantum one, with all
other algorithms being classical.

We then utilize this approach to construct bounded
collusion-resistant PRF-SKL based on the LWE
assumption. While this pathway could also be employed for digital
signatures, we identify a simpler approach in this case that
provides unbounded collusion-resistance. Combined with a result of
prior work \cite{EC:KitMorYam25}, this gives us unbounded
collusion-resistant signatures (with the desirable notion of static
signing-keys) from the SIS assumption, answering our second question
in the affirmative. We now describe our contributions in more
detail.

\subsection{Our Contributions}

\begin{enumerate}[(1)]
\item \emph{New Definitions:} First, we define a generic
collusion-resistant SKL scheme that captures SKL for different
primitives as special cases. Similar
generic definitions for variants of copy protection were provided in
the work of Aaronson et al. \cite{C:ALLZZ21}. Furthermore, we abstract a new
primitive called multi-level traitor tracing (MLTT), which generalizes the 
notion of (quantum-secure collusion-resistant) traitor
tracing\footnote{Unlike standard traitor tracing, an MLTT adversary
only receives keys for randomly chosen identities, as it
suffices for the purpose of SKL. On the other hand, we require a
certain deterministic evaluation property (See Section \ref{sec:abs}).
}.
Our definition for MLTT is also a generic one that captures
different cryptographic applications as special cases.

\item \emph{A Compiler for SKL:} Then, we construct a
collusion-resistant SKL scheme from any MLTT scheme, without
other cryptographic assumptions. Based on certain parameters of the MLTT
scheme, the resulting SKL scheme offers either bounded or unbounded
collusion-resistance.

\item \emph{Generalized Hardcore Bit Property:} The main
quantum ingredient of our compiler are quantum states of the form
$\frac{1}{\sqrt2}(\ket{x} + (-1)^{b}\ket{y})$ where $x, y$ are random
strings and $b$ is a random bit. Previous works showed that no
adversary can produce one among $\{x, y\}$ along with a value $d$ such
that $d\cdot(x \xor y) = b$, with probability significantly more than
$1/2$. We generalize this result to the setting where the adversary
receives $q = \poly(\secp)$ many such states. Our result shows that no
adversary can produce \emph{one} of the $2q$ superposition terms,
along with values $d_1, \ldots, d_q$ consistent with the phases $b_1,
\ldots, b_q$ of each of the states, with probability much more than
$1/2$. 

\item \emph{Collusion-Resistant SKL for PRFs:} We construct a PRF-SKL
scheme with bounded collusion-resistance from the post-quantum
security of LWE with sub-exponential modulus. We achieve this by first
constructing an MLTT scheme for PRFs, followed by instantiating our
SKL compiler with this scheme. Constructing this MLTT scheme
presents several challenges due to the requirement of tracing quantum
adversaries. We overcome them by using
a quantum-secure traceable PRF by Kitagawa and Nishimaki
\cite{EC:KitNis22} as a building block, along with quantum 
tracing techniques of Zhandry \cite{TCC:Zhandry20,C:Zhandry23}.

\item \emph{Collusion-Resistant SKL for Signatures:} We present a
generic transformation that upgrades any single-key secure DS-SKL
scheme into one satisfying unbounded collusion-resistance, by
assuming the existence of post-quantum digital signatures. Based on
prior work \cite{EC:KitMorYam25}, this implies unbounded
collusion-resistant DS-SKL with static signing keys, based on the SIS assumption.

\item \emph{A Compiler for Verification-Oracle Security:} The
aforementioned SKL notion does not provide the adversary with oracle access
to the verification algorithm. However, the result of
verification is often leaked, making this impractical for 
applications. Hence, we consider a stronger notion called security
under
verification-oracle aided key-leasing attacks (Definition
\ref{def:vo-kla}), based on a similar notion by Kitagawa et al.
\cite{C:KitNisPap25}. We show that any SKL
scheme can be upgraded to satisfy this stronger security, assuming the
base scheme satisfies classical revocation. Indeed, our SKL compiler
satisfies this property. Hence, we are able to upgrade the results
from bullets (2), (4) and (5) to this stronger notion, 
assuming OWFs exist (this assumption is
redundant for (4) and (5)).

\end{enumerate}

\subsection{Related Work}

\paragraph{Compilers for Copy-Detection and
SSL.}\label{sec:compilers}

In the work of Aaronson et al.  \cite{C:ALLZZ21}, a relaxation of
copy-protection called copy-detection was introduced. They
showed a compiler that transforms any publicly-extractable
watermarking scheme for some application into a copy-detection scheme
for the same application, by assuming public-key quantum
money. Note that watermarking is a classical primitive
similar to traitor tracing. It consists of
an extraction algorithm that is analogous to the tracing algorithm,
and public extraction refers to the fact that the algorithm does not
utilize secret information.
In a concurrent work, Kitagawa,
Nishimaki and Yamakawa \cite{TCC:KitNisYam21} showed a similar compiler for SSL.
Interestingly, they showed that for SSL, the public-key quantum money
assumption can be replaced by a weaker primitive called two-tier
quantum lightning, which is implied by LWE. They also showed that the
watermarking assumption can be weakened to a notion called
relaxed-watermarking. These compilers
allows to obtain collusion-resistant copy-detection/SSL as well, by
relying on collusion-resistant watermarking.

Although our compiler is similar to these in spirit, it is both
conceptually and technically different. We now mention some of the key
differences:

\begin{enumerate}
\item Our compiler achieves SKL, a significantly
challenging task compared to SSL. This stems from the fact that
SKL schemes cannot enforce structure on the adversary's
post-revocation evaluation attempts, unlike SSL/copy-detection.

\item Our compiler only needs a form of
traitor-tracing with private-tracing, which is analogous to
private-extractable watermarking. In contrast, the aforementioned
compilers need public-extractable watermarking. In the
collusion-resistant setting, this makes a difference because many
collusion-resistant tracing/watermarking schemes satisfy only private
tracing/extraction.

\item Our compiler needs a
special kind of traitor-tracing we define called multi-level
traitor-tracing (MLTT) (Section \ref{sec:abs}), while the previous
compilers could rely on the standard notion of watermarking. However,
we believe MLTT to be a natural and feasible generalization of traitor
tracing. We demonstrate this by constructing an MLTT scheme for PRFs
in this work. Our compiler also requires
quantum-secure tracing unlike the SSL/copy-detection ones, as
they can simply enforce the adversary to output classical pirate
programs.

\item Our compiler does not require additional assumptions. In
contrast, the aforementioned compilers \cite{C:ALLZZ21,TCC:KitNisYam21}
utilized either public-key quantum money or two-tier quantum
lightning. We achieve this
using a new information-theoretic guarantee provided by two-superposition
states (Section \ref{sec:two-sup}).
\end{enumerate}

\paragraph{Secure Key Leasing.}
The notion of SKL was introduced by the concurrent works of Agrawal et
al. \cite{EC:AKNYY23} and Ananth et al. \cite{TCC:AnaPorVai23}. The
former constructed SKL for PKE (PKE-SKL) from any PKE scheme,
while the latter constructed PKE-SKL based on LWE. The former work
also presented SKL for other encryption notions (such as ABE
and PKFE), while the latter also presented a PRF-SKL scheme from
LWE. Note that the SKL scheme for PKFE 
of Agrawal et al. relies on PKFE, which implies iO.  The work
of Bartusek et al. \cite{EC:BGKMRR24} showed SKL based on a new primitive they
constructed called differing inputs obfuscation with certified
deletion. Hence, their scheme provides SKL for all differing input
circuit families. This allows them to achieve SKL for PKFE, and also
PRFs, but their SKL schemes inherently rely on iO.  The recent work of
Kitagawa, Morimae and Yamakawa \cite{EC:KitMorYam25} showed a simple
framework for constructing SKL schemes for PKE, PRFs and 
signatures based on standard assumptions and certified deletion properties of BB84 states. The work of Chardouvelis et al. \cite{EC:CGJL25} showed
that PKE-SKL can be realized by using only classical communication,
based on the hardness of LWE.  While the work of Ananth et al.
\cite{TCC:AnaPorVai23} relied on a complexity theoretic conjecture
apart from LWE, this was removed in the work of Ananth, Hu and Huang
\cite{TCC:AnaHuHua24}, thereby obtaining PKE-SKL and PRF-SKL from LWE
alone. Recently, Kitagawa, Nishimaki and Pappu
\cite{C:KitNisPap25} constructed unbounded collusion-resistant
PKE-SKL from LWE. This is currently the only work that obtains this
notion from a weaker than iO (or PKFE) assumption. We are not aware
of any works that study collusion-resistant SKL for primitives
other than encryption.

\subsubsection*{Collusion-Resistant Copy Protection.}
The first collusion resistant copy protection schemes in the plain
model were shown in the work of Liu et al. \cite{TCC:LLQZ22}. They
constructed copy protection schemes for PKE, PRFs and digital
signatures that are $k \ra k+1$ secure, i.e., an adversary receiving
$k = \poly(\secp)$ many copies cannot produce $k+1$ copies.
Importantly, their scheme is bounded collusion-resistant, i.e., the
parameter sizes grow linearly with the collusion-bound $k$. In the work of
{\c C}akan and Goyal~\cite{TCC:CakGoy24}, unbounded collusion-resistant copy protection
schemes were constructed in the plain model for PKE, PKFE, PRFs, and
digital signatures. Although these schemes imply SKL with similar
collusion-resistance guarantees (See the discussion in
\cite{EC:AKNYY23}), they all rely on iO. Since, copy protection is known
to imply public-key quantum money in general, achieving it with weaker than iO
assumptions is a major open problem.

\subsubsection*{Quantum-Secure Traitor Tracing.}

Traitor tracing in the quantum setting was first explored in the work
of Zhandry \cite{TCC:Zhandry20}. The work identifies several
challenges of dealing with quantum adversaries that output quantum
pirate programs. Firstly, it is not possible to know the success
probability of a quantum pirate until it is measured. This is because
the pirate may be in a superposition of ``successful'' and
``unsuccessful'' pirates.  Moreover, a measurement may disturb the
pirate and render it useless. Consequently, the work presented
workarounds for such definitional issues. Additionally, estimating the
success probability is also challenging, as it requires testing the
adversary on several samples from a distribution. The work shows an
efficient quantum procedure for this task by building on the work of
Marriott and Watrous \cite{CC:MarWat05}. Furthermore, the work shows a useful
property: consecutive estimations of the adversary's success
probability on computationally close distributions produce similar
outcomes. These quantum tools were leveraged to extend classical
private linear broadcast encryption (PLBE) based tracing schemes
\cite{EC:BonSahWat06} to the quantum setting. In a followup work by Zhandry
\cite{C:Zhandry23}, a new quantum rewinding technique due to Chiesa et
al. \cite{FOCS:CMSZ21} was utilized to further expand the kind of
probability estimations that can be performed without destroying the
pirate. The work utilized this
technique to extend several more classical tracing
schemes for PKE to the quantum setting, including several collusion-resistant ones.

The work of Kitagawa and Nishimaki \cite{EC:KitNis22} explores the setting of
watermarking in the quantum setting, which is a notion similar to that
of traitor tracing. Specifically, they showed a watermarkable PRF
secure against quantum adversaries (that output quantum pirate
programs), based on the hardness of LWE. We utilize this PRF as a
building block in our MLTT construction for PRFs (Section
\ref{sec:mlt-prf}).  Recently, Kitagawa and Nishimaki
\cite{EPRINT:KitNis25}
constructed a watermarkable digital signature scheme that is secure
against quantum adversaries, in the setting of white-box
traitor-tracing introduced by Zhandry \cite{C:Zhandry21}.

\subsubsection*{Certified Deletion.}

The notion of certified deletion for encryption was introduced in the
work of Broadbent and Islam \cite{TCC:BroIsl20}. This primitive allows
the generation of quantum ciphertexts, which can be provably deleted
by presenting a classical certificate. After deletion, even if the
secret-key is revealed, one cannot learn the contents of the
ciphertext they once held. Observe that the difference between this
notion and SKL is that here, it is access to secret data that is being
``revoked'', rather than the ability to evaluate some function.
Following this work, several other works have studied certified
deletion for different primitives
\cite{AC:HMNY21,ITCS:Poremba23,C:BarKhu23,EC:HKMNPY24,EC:BGKMRR24} and
with publicly-verifiable deletion
\cite{AC:HMNY21,EC:BGKMRR24,TCC:KitNisYam23,TCC:BKMPW23}. Recently,
the work of Ananth, Mutreja and Poremba \cite{EPRINT:AnaMutPor24}
introduced multi-copy revocable encryption.  This is a notion similar
to certified deletion but guarantees security in a setting where
multiple copies of the quantum ciphertext are provided to the
adversary. Note that this is in contrast to the collusion-resistant
setting we consider, where multiple i.i.d leased-keys are provided,
instead of identical copies of the same quantum state.

\section{Technical Overview}\label{sec:to}

\subsection{Collusion-Resistant SKL}

We will begin by defining the notion of SKL in the collusion setting. A
collusion-resistant SKL scheme $\SKL$ for a cryptographic application
$(\qF, \qE, t)$ (Definition \ref{def:app}) consists of five
algorithms $(\Setup, \qKG, \qEval, \qDel\allowbreak, \Vrfy)$. The
setup algorithm takes a
collusion-bound $q$ as input,\footnote{$q = \bot$ is a valid input, which is
meant for the unbounded collusion setting.} and
outputs a tuple $(\msk, f, \aux_f)$.  Here, $\msk$ is a master
secret-key, $f \in \cF$ is a function that is to be leased, and
$\aux_f$ is some auxiliary information that is to be made public. For
instance, in the case of PKE, $f$ is a decryption function described
by a PKE decryption key, while $\aux_f$ contains the public encryption
key. Now, the quantum key generation algorithm $\qKG$ takes as input
$\msk$ (we will assume $\msk$ implicitly includes $f$ and $\aux_f$),
and outputs a quantum secret-key $\qsk$ along with a classical
verification-key $\vk$.

Consider now a setting where an entity called the \emph{lessor} samples
$(\msk, f, \aux_f) \allowbreak \la \Setup(1^\secp, q)$ and $(\qsk, \vk) \la
\qKG(\msk)$. The lessor then provides an entity called the
\emph{lessee} with $\qsk$ along with $\aux_f$. It should be feasible
for the lessee to evaluate $f$ using $\qsk$ using the algorithm
$\qEval$, i.e., $\qEval(\qsk, x)$ should output $y = f(x)$ with
overwhelming probability. This correctness guarantee is specified by
the quantum predicate $\qF$ (Definition \ref{def:cor-pred}).
At a later point, the lessee can be asked to ``revoke'' its
secret-key. Then, the lessee can use the deletion algorithm to compute
a classical certificate $\cert \la \qDel(\qsk)$, which can be verified
by the lessor by evaluating the verification algorithm $\Vrfy(\vk,
\cert)$ with the key $\vk$. We require,
verification correctness, i.e., $\cert$ produced as
above should be accepted.

The crucial part is the security guarantee where we consider a QPT
adversary $\qA$ that receives $q = \poly(\secp)$ many leased
secret-keys $(\qsk_1, \ldots, \qsk_q)$ generated as $(\qsk_i, \vk_i)
\la \qKG(\msk)$ for each $i \in [q]$. In the bounded collusion
setting, $q$ is determined by the scheme, while it is
specified by the adversary in the unbounded case. Then, the adversary
produces (possibly malformed) certificates $(\cert_1, \ldots,
\cert_q)$. If $\Vrfy(\vk_i, \cert_i) = \top$ for each $i \in [q]$,
i.e., if the adversary successfully revokes all the leased keys, then
it should lose the ability to evaluate $f$.
Care must be taken in formalizing the inability to evaluate $f$
because in several applications like PKE, PRFs etc, we need a stronger
requirement than the inability to produce outputs of $f$. In the PKE
case, we would hope that the lessor cannot distinguish ciphertexts of
different messages. In the PRF case, the approach used by
prior works \cite{TCC:AnaPorVai23,EC:KitMorYam25} is that the lessor
cannot distinguish a random function from the PRF, when
given access to either function on random inputs. We formalize the
security using the quantum predicate $\qE$ (Definition
\ref{def:sec-pred}).

In more detail, we require $\qA$ to output a quantum program $\qP^* =
(U^*, \rho^*)$ (Definition \ref{def:program}) described by unitary
$U^*$ and quantum state $\rho^*$, along with the certificates
$(\cert_1, \ldots, \cert_q)$. If all the certificates are valid, then
the challenger of the security game tests whether $\qP^*$ is
$\epsilon$-good wrt $(f, \qE, t)$ or not (Definition \ref{def:good}),
where $\epsilon$ is a parameter of the experiment. Intuitively, this
test performs a measurement on the adversary, and if the measurement
accepts, it is guaranteed that the residual state of $\qP^*$ can
evaluate $f$ (as defined by $\qE$) with probability greater than $t +
\epsilon$. If the certificates are all valid and this test also
passes, $\qA$ is said to win the game. The security requirement is
that $\qA$ wins with at-most $\negl(\secp)$ probability for every
parameter $\epsilon$ such that $\epsilon = 1/\poly(\secp)$. We 
call this notion standard key leasing attack (standard-KLA)
security (Definition \ref{def:std-kla}).

The reason we introduced the
``$\epsilon$-good test'' is because estimating the success
probability of quantum programs is tricky (See \cite{TCC:Zhandry20} for a
detailed discussion). At a high level, the issues stem from the fact
that a quantum adversary may be in a superposition of ``successful'' and
``unsuccessful'' adversaries, and measuring the adversary in an ad-hoc
way may render it useless. However, these issues
were resolved in the work of Zhandry \cite{TCC:Zhandry20} by utilizing
a measurement procedure called projective implementation (Definition
\ref{def:pi}). Hence, we abstract
the details of performing such a measurement as part of our
$\epsilon$-good test (Definition \ref{def:good}). In the next subsection, we discuss the challenges in achieving this
security notion.

\subsection{Challenges in Achieving Collusion-Resistance}\label{sec:chall}

The difficulty in obtaining collusion-resistance
arises from the fact that an adversary can try to correlate all its
leased keys, before ``deleting'' any of them.
Sometimes the adversary can even learn a classical description of the
function $f$ without disturbing the states noticeably, due
to the gentle measurement lemma. In some cases, one
can circumvent the problem by making sure that all the
leased keys are uncorrelated, even if they provide a common
functionality. For instance, in the case of bounded
collusion-resistant PKE-SKL, one can use multiple instances of a
single-key PKE-SKL scheme to generate public-keys $\{\pk_i\}_{i\in[q]}$
and corresponding leased decryption keys $\{\qsk_i\}_{i\in[q]}$.
Then, the ciphertexts can include ciphertexts $\{\ct_i\}_{i\in[q]}$
under each of the public-keys $\{\pk_i\}_{i\in[q]}$ so that
each of the leased keys can decrypt. However, such an approach does
not seem to generalize to other primitives. For instance, consider the case of SKL for PRFs,
where each of the $q$ leased keys $\{\qsk_i\}_{i\in[q]}$ must 
evaluate a common PRF $f(\cdot) =
F_k(\cdot)$. Clearly, the leased keys must be correlated. Still, it could
be possible that the adversary cannot exploit these correlations.
Despite this hope, we find that it is not clear how to extend previous works on
PRF-SKL \cite{TCC:AnaPorVai23,EC:KitMorYam25}, even to the
setting of bounded-collusions. We now discuss some of these issues:

\paragraph{Constructions based on BB84 States.}

The work of Kitagawa, Morimae and Yamakawa \cite{EC:KitMorYam25}
constructed SKL schemes for PKE, PRFs and signatures. Their approach
is modular and makes use of a certified deletion property of BB84
states~\cite{C:BarKhu23}. Using this approach, they
built PRF-SKL from a primitive called two-equivocal PRFs (TEPRFs), which are
known from OWFs \cite{C:HJOSW16}. This approach requires first
sampling a BB84 state $\ket{x}_\theta$ where $x, \theta \la
\bit^\ell$. Recall that a BB84 state $\ket{x}_\theta$ is the state
$\big(H^{\theta_1} \otimes \ldots \otimes H^{\theta_\ell}\big) \big(\ket{x[1]}
\otimes \ldots \otimes \ket{x[\ell]}\big)$, where $H$ denotes the
Hadamard transform. Then, for each $i \in
[\ell]$, they compute $\rho_i$ as:

$$ \rho_i \seteq
\begin{cases}
        \ket{x[i]}\ket{\sk_{i,x[i]}} & \textrm{if}~~ \theta[i]=0\\
        \frac{1}{\sqrt2}\Big(\ket{0}\ket{\sk_{i,0}}+(-1)^{x[i]}\ket{1}\ket{\sk_{i,1}}\Big) & \textrm{if}~~ \theta[i]=1,
\end{cases}
$$

Here, $\sk_{i,0}$ and $\sk_{i,1}$ are correlated secret keys of a
TEPRF.
The overall leased key $\qsk$ is computed as $\qsk \seteq
(\rho_i)_{i\in[\ell]}$.
We leave out the details of what the keys exactly are, and what a TEPRF is.
The crucial point is that the construction
exploits a certain security property of TEPRF (called differing point
hiding) which is only guaranteed if one of the keys $\sk_{i,0}$ and
$\sk_{i,1}$ is hidden from the adversary. Consequently, the work
invokes this security guarantee only for the computational basis
positions ($i : \theta[i] = 0$) where the adversary receives
information of only one
of the two TEPRF keys. This security guarantee enables them to extract
the values associated with the computational basis positions
(given $\theta$),
from an adversary that is able to evaluate the PRF. On the other hand,
the deletion certificate requires the adversary to measure all the
qubits in the Hadamard basis, from which one can extract the values
of the Hadamard basis positions ($i: \theta[i]=1$). 
Hence, they are able to reduce to the certified deletion
property of BB84 states \cite{C:BarKhu23}. This property ensures that
if the adversary produces correct values wrt the Hadamard basis positions, then
even if $\theta$ is later revealed, it cannot output all the
values corresponding to the computational basis positions.

Observe now that the PRF is described by the TEPRF keys
$\{\sk_{i,j}\}_{i\in[\ell], j \in \bit}$. However, it is not clear how
to use the same TEPRF keys with multiple leased keys, to ensure the
different leased keys can evaluate the same PRF. Even if independently
sampled BB84 states are used each time, for an adversary with
polynomially many leased keys, it is extremely unlikely that there
exists $i \in [\ell]$ such that only one of $\sk_{i,0}, \sk_{i,1}$ was
obtained. This means we cannot invoke the security of TEPRF to hope to
extract the computational basis values of one of the BB84 states.
Notice that restricting one of the positions $i \in [\ell]$ to be
$\ket{0}$ across all the BB84 states (likewise, $\ket{1}$) does not help.
This is because we cannot hope that finding the value at this position
is hard and then reduce to it, as a simple gentle measurement attack
will reveal it. We remark that their DS-SKL scheme
follows a similar template, and hence runs into the same
issue.

\paragraph{Constructions based on Guassian Superpositions.}

In the work of Ananth, Poremba and Vaikuntanathan
\cite{TCC:AnaPorVai23} and the followup work of Ananth, Hu and Huang
\cite{TCC:AnaHuHua24}, Gaussian superposition states \cite{ITCS:Poremba23} were
utilized to construct SKL schemes based on the LWE and SIS
assumptions. In more detail, for parameters $n, m, q$, a matrix
$\mathbf{A} \in \mathbb{Z}_q^{n\times m}$, and a vector $\mathbf{y}
\in \mathbb{Z}_q^n$, the following Gaussian superposition state is
considered where $\rho_{\sigma}(\mathbf{x}) = \text{exp}(-\pi
\norm{\mathbf{x}}^2/\sigma^2)$:
$$\ket{\psi_{\mathbf{y}}} \seteq \sum_{\substack{x \in \mathbb{Z}_q^m
:\; \mathbf{A}\mathbf{x} =
\mathbf{y}\;(\text{mod}\;q)}}\rho_\sigma(\mathbf{x})\ket{\mathbf{x}}$$
In other words, $\ket{\psi_\mathbf{y}}$ consists of a superposition of
short vectors mapping $\mathbf{A}$ to $\mathbf{y}$. The vector
$\mathbf{y}$ is part of the PRF key $k$ and the state
$\ket{\psi_\mathbf{y}}$ is part of the leased secret key. In their
reduction to LWE, the reduction must be able to produce a Gaussian
vector $\mathbf{x}_0$ such that $\mathbf{A}\cdot\mathbf{x}_0 =
\mathbf{y} \;(\text{mod}\;q)$ along with an auxiliary input
$\mathsf{AUX}$ (to feed to the SKL adversary) that depends on 
$\ket{\psi_{\mathbf{y}}}$. Note that this reduction does not
have access to a trapdoor of $\mathbf{A}$. Hence, it cannot obtain
some other $\mathbf{x}_1 \neq \mathbf{x}_0$ such that $\mathbf{A}\cdot
\mathbf{x}_1 = \mathbf{y} \;(\text{mod}\;q)$, without breaking SIS. To
circumvent this, the works make use of a
Gaussian collapsing property due to Poremba \cite{ITCS:Poremba23}.
The property intuitively ensures that the collapsed form of
$\ket{\psi_{\mathbf{y}}}$ is indistinguishable from the state itself, and hence 
$\ket{\mathbf{x}_0}$ can be used to compute
$\mathsf{AUX}$, without the need for $\ket{\psi_{\mathbf{y}}}$. Observe now that if the
adversary obtains two different leased keys that are correlated with
$\mathbf{y}$, the proof breaks down. If one were to collapse two
such $\ket{\psi_{\mathbf{y}}}$, one would obtain different
$\ket{\mathbf{x}_0}, \ket{\mathbf{x}_1}$. Hence, it is unclear how a single
$\ket{\mathbf{x}_0}$ can be used to simulate $\mathsf{AUX}$ in this
case, which results from adversarial computation performed on both the leased
keys.

\paragraph{Unbounded Collusion-Resistant PKE-SKL.}

In the case of PKE, the work of Kitagawa, Nishimaki and Pappu
\cite{C:KitNisPap25} showed a scheme based on LWE in the setting
of unbounded collusions, which is non-trivial unlike the bounded
setting. At a high level, they switch the challenge
ciphertext distribution in their proof such that an adversary that
deletes its leased keys cannot distinguish the switch. Then, it is
argued using the security of ABE that the
adversary cannot decrypt ciphertexts sampled from this altered
distribution. Unfortunately, such techniques do not seem applicable to
primitives other than encryption. 
As a result, new techniques are needed to provide provable guarantees
in the setting of bounded collusion-resistant PRF-SKL, and
collusion-resistant SKL in general. Starting from the next
subsection, we will discuss the details of our new approaches.

\subsection{Collusion-Resistance of Two-Superposition States}

Recall that the work of Kitagawa, Morimae and Yamakawa
\cite{EC:KitMorYam25} showed
a framework for SKL based on BB84 states. In
essence, they generate secret keys in superposition
of a BB84 state, and finally reduce to the certified-deletion
property of the BB84 state. In this work, we will construct SKL schemes
by generating secret keys in superposition of a random
two-superposition state, which is well-suited to the
collusion setting. Then, we reduce to a new certified-deletion
property of such a state, which we show by generalizing existing
results. We now proceed to give an
overview about known guarantees of these states, and the ones we
require. A random two-superposition state refers to a state of the
following form:

$$\sigma \seteq \frac{1}{\sqrt2}\big(\ket{v} + (-1)^b\ket{w}\big)$$

where $v, w \la \bit^\secp$ and $b \la \bit$.  Such states were
utilized in prior works on publicly-verifiable deletion
\cite{TCC:BKMPW23} and revocable cryptography
\cite{TQC:MorPorYam24}. Based on a theorem of Bartusek et al.
\cite{TCC:BKMPW23}, the work of Morimae et al.
\cite{TQC:MorPorYam24} showed that a QPT adversary given $\sigma$
and $f(v), f(w)$ for an OWF $f$, cannot produce both of the following
simultaneously with probability greater than $1/2 + \negl(\secp)$: 

\begin{itemize}
\item A pre-image $\msg$ such that $f(\msg) \in \{f(v), f(w)\}$
\item A value $d$ such that $d \cdot (v \xor w) = b$.
\end{itemize}

Due to our focus on the collusion setting, we need to consider an
adversary that obtains $q = \poly(\secp)$ many i.i.d states $\sigma_1,
\ldots, \sigma_q$ where for each $i \in [q]$, $\sigma_i \seteq
\frac{1}{\sqrt2}\big(\ket{v_i} + (-1)^{b_i}\ket{w_i}\big)$. For
each $i \in [q]$, let $Q_i \seteq \{v_i, w_i\}$.
Intuitively, each $\sigma_i$ will be part of a leased secret key in
our SKL construction. We leave the details and intuition of the SKL
construction to Section \ref{sec:leverage-tracing}. Here, we mention that from a
successful SKL pirate program $\qP^*$, we will be able to extract a value
$\msg$ such that $\msg \in \bigcup_{i \in [q]} Q_i$. Recall that the
corresponding QPT adversary $\qA$ also outputs certificates $(\cert_1,
\ldots, \cert_q)$. From these certificates, we can hope to extract
$d_1, \ldots, d_q$ such that for each $i \in [q]$, $d_i \cdot (v_i
\xor w_i) = b_i$. Then, if we can show that it is difficult for an
adversary to produce such values $\msg$ and $d_1, \ldots, d_q$, we can
meaningfully reduce the security of SKL to this property.

In actuality, we will not be able to extract such values $d_1, \ldots,
d_q$ from the certificates. Hence, we place a stronger requirement
that the adversary can output arbitrary functions $g_1, \ldots, g_q$.
Note that we do not provide the adversary with OWF evaluations
of the values $\{v_i, w_i\}_{i\in[q]}$, although our approach should
easily generalize to this case. Consequently, we prove that no
\emph{unbounded} adversary can output both of these simultaneously
with probability greater than $1/2 + \negl(\secp)$:

\begin{itemize}
\item A pre-image $\msg$ such that $\msg \in \bigcup_{i\in[q]}Q_i$.
\item Functions $g_1, \ldots, g_q$ such that for each $i \in [q]$, it
holds that $g_i(v_i, w_i) = b_i$.
\end{itemize}

We are able to prove this along the lines of the proof of the main
theorem of Bartusek et al. \cite{TCC:BKMPW23}. We also consider an important case
where the values $\{v_i, w_i\}_{i \in [q]}$ are drawn from a domain
$[N]$ which may only be polynomially large. Specifically, we show (by
the same proof) that for each $q, t \in \mathbb{N}$, there exists $N =
O(q^2t^2)$ such that an
adversary cannot succeed in the above task with probability greater
than $1/2 + 1/t$. This allows us to utilize a wider range
of traitor tracing schemes (such as ones known for collusion-resistant
PRFs), which we discuss in Section \ref{sec:tracing-quantum}. Note that the
$1/t$ distinguishing advantage for $t = \poly(\secp)$ is not a problem. This is
because we anyway have to consider a parallel repetition of this game,
to reduce the success probability from around $1/2$ to $\negl(\secp)$.
We are able to prove this parallel-repetition variant in Theorem
\ref{thm:two-sup-par}, by utilizing a general quantum
parallel-repetition result 
from prior work \cite{STOC:BQSY24}.

\subsection{Leveraging Traitor Tracing}\label{sec:leverage-tracing}

In the parallel repetition version of the game from the previous subsection, we
consider an adversary that receives the following states
$\{\sigma_i^j\}_{(i,j)\in[\ell]\times[q]}$. Here $\ell =
\poly(\secp)$ is the number of parallel repetitions, $q$ is the
number of states obtained in each repetition, and $b(i,j)$ is a placeholder
for $b_i^j$:

$$\sigma_i^j \seteq \frac{1}{\sqrt2}\Big(\ket{v_i^j} +
(-1)^{b(i,j)}\ket{w_i^j}\Big)$$

For each $(i, j) \in [\ell] \times [q]$, let $Q_i^j \seteq \{v_i^j,
w_i^j\}$. Also, for each $i \in [\ell]$, let $Q_i \seteq
\bigcup_{j\in[q]} Q_i^j$. We have the guarantee that no QPT adversary
can output both of the following, except with probability
$\negl(\secp)$:

\begin{itemize}
\item Values $(\msg_1, \ldots, \msg_\ell) \in Q_1 \times \ldots \times Q_\ell$.
\item Functions $\{g_i^j\}_{(i,j) \in [\ell] \times [q]}$ such that
$g_i^j(v_i^j, w_i^j) = b_i^j$ for each $(i, j) \in [\ell] \times [q]$.
\end{itemize}

Let $h$ denote some efficiently computable and deterministic
key-generation algorithm. We will get into the specifics of $h$
shortly. Now, for each $(i, j) \in [\ell] \times [q]$, consider secret
keys $\sk_{i,v}^j \seteq h(i, v_i^j)$ and $\sk_{i,w}^j \seteq h(i,
w_i^j)$.  At a high level, for each $j \in [q]$, the leased key of our
SKL scheme will be of the form $\qsk^j \seteq
\{\rho_i^j\}_{i\in[\ell]}$, where each $\rho_i^j$ is a state of the
following form, where $b(i,j)$ is a placeholder for $b_i^j$:

$$\rho_i^j \seteq \frac{1}{\sqrt2}\big(\ket{v_i^j}\ket{\sk_{i,v}^j} +
(-1)^{b(i,j)} \ket{w_i^j}\ket{\sk_{i,w}^j}\big)$$

For each key $\qsk^j$, the deletion algorithm of our SKL scheme
requires the adversary to output a certificate with Hadamard basis
measurements $\cert^j \seteq (c_i^j, d_i^j)_{i \in [\ell]}$. Note that
the measurement $c_i^j$ corresponds to the register depicted with values
$v_i^j/w_i^j$, and $d_i^j$ to the register with values
$\sk_{i,v}^j/\sk_{i,w}^j$.
Hence, the verification algorithm checks whether $b_i^j = c_i^j \cdot
(v_i^j \xor w_i^j) \xor d_i^j \cdot \big(h(i, v_i^j) \xor h(i,
w_i^j)\big)$ holds for each $i \in [\ell]$. Assume for now that the
secret keys $\{\sk_{i,v}^j, \sk_{i,w}^j\}_{i\in[\ell]}$ corresponding
to the leased key $\qsk^j$ are sufficient to evaluate the required
functionality, and that this can be done without disturbing $\qsk^j$.

Our goal now is to reduce the security of SKL to the aforementioned
guarantee of two-superposition states. Consider such a reduction
$\qR$ that obtains the
states $\{\sigma_{i}^j\}_{(i,j) \in [\ell] \times [q]}$, samples the
function $h$ appropriately, and computes the states
$\{\rho_{i}^j\}_{(i,j) \in [\ell] \times [q]}$. Thereby, it can
simulate the view of the SKL adversary $\qA$ in a straightforward way.
Then, $\qA$ outputs certificates $(\cert^1, \ldots, \cert^q)$ and a
quantum pirate program $\qP^*$. Observe that based on the
certificates, $\qR$ can prepare the functions $g_i^j(x, y) \seteq
c_i^j \cdot (x \xor y) \xor d_i^j \cdot (h(i, x) \xor h(i, y))$ and
send them to the challenger. Now, we would like $\qR$ to be able to
obtain values $(\msg_1, \ldots, \msg_\ell) \in Q_1 \times \ldots
\times Q_\ell$ from the pirate program $\qP^*$ to complete the
reduction.

For this purpose, we will leverage the powerful guarantee
of a notion called multi-level traitor tracing (MLTT) (Section
\ref{sec:abs}). This is a
generalization of quantum-secure collusion-resistant traitor tracing.
In more detail, an MLTT scheme for an application
$(\qF, \qE, t)$ consists of algorithms $(\Setup, \KG,
\Eval, \qTrace)$. The algorithm $\Setup$ takes as input 
$N, \ell$ denoting the identity-space size and number of ``levels''
respectively. It outputs $(\msk, f, \aux_f)$ where
$\msk$ is a master secret-key, $f$ is a function and $\aux_f$ is some
public information. The algorithm $\KG$
takes as input the key $\msk$, a `level' $i \in [\ell]$ and an identity $\id$ and
produces a secret key $\sk_i$. The algorithm $\Eval$ takes
secret keys $\sk_1, \ldots, \sk_\ell$ (one for each level) and an
input $x$ and produces output $y$. The evaluation correctness guarantee
(specified by $\qF$) requires that $\Eval$ is consistent with $f$.
Additionally, we require a \emph{deterministic evaluation} property.
Intuitively, this requires that with overwhelming probability over
the choice of $(\msk, f, \aux_f)$ and $x \la \cX_f$ ($\cX_f$ is an
application-specific distribution specified by $\qF$), there
is a fixed $y$ such that $\Eval(\sk_1, \ldots, \sk_\ell, x) = y$ with
overwhelming probability, regardless of which identities were used to
derive $\sk_1, \ldots, \sk_\ell$. The interesting notion is that of
traceability, which we describe next:

Consider an adversary $\qA$ that specifies $q \in [N-1]$ and receives
values $(\id_i^j, \sk_i^j)_{(i,j) \in [\ell] \times [q]}$, where each
$\id_i^j$ is sampled as $\id_i^j \la [N]$, and each $\sk_i^j$ is
computed as $\sk_i^j \seteq \KG(\msk, i, \id_i^j)$. Let $Q_i' \seteq
\{\id_i^j\}_{j\in[q]}$ for each $i \in [\ell]$. Then, if $\qA$ outputs a
quantum program $\qP^*$ that is $\epsilon$-good wrt $(f, \qE, t)$, the
quantum tracing algorithm $\qTrace(\msk, \qP^*, \epsilon)$ outputs
values $(\id_1^*, \ldots, \id_\ell^*) \in Q'_1 \times \ldots \times
Q'_\ell$. Observe that with the help of this primitive, the
aforementioned SKL reduction $\qR$ is straightforward. It uses the
function $h$ defined as $h(i, \cdot) \seteq \KG(\msk, i,
\cdot)$\footnote{We assume $\KG$ is deterministic. This is wlog,
assuming post-quantum secure PRFs.} and runs $\qTrace$ to obtain
values $(\msg_1, \ldots, \msg_\ell) \in Q_1 \times \ldots \times Q_\ell$.
If this condition does not hold, we can show a reduction that breaks the
traceability of MLTT. Also notice that the correctness and
deterministic evaluation properties of
MLTT ensure that the leased keys $\qsk^j$ allow to evaluate $f$ using
$\Eval$ in superposition. Importantly, the gentle measurement
lemma ensures that this can be done without disturbing the quantum
state. This gives us an SKL scheme for the same application $(\qF, \qE, t)$.

Our next goal is to construct an MLTT scheme for PRFs. For this, we
rely on a construction similar to the classical collusion-resistant traceable PRF of
Maitra and Wu. \cite{PKC:MaiWu22}. Even though the upgrade from
standard traitor tracing to MLTT is straightforward, the upgrade from classical
tracing security to its quantum counterpart introduces 
challenges. We discuss these issues in the next subsection.

\subsection{Tracing Quantum Adversaries}\label{sec:tracing-quantum}

To begin with, we describe the structure of the traceable PRF by
Maitra and Wu \cite{PKC:MaiWu22}. The scheme utilizes two primitives:
a fingerprinting code (FC) and a traceable PRF with identity space $\bit$
(TPRF). The former is
an information-theoretic notion, the details of which we omit in
this overview. The latter notion of TPRF
consists of algorithms $(\Setup, \KG, \Eval, \tTrace)$. Here,
$\Setup$ outputs a master secret-key $\msk$ and
$\KG(\msk,\id)$ outputs a secret-key $\sk_{\id}$ for any $\id \in
\bit$.  Evaluation correctness requires that for $x$ sampled uniformly, $\Eval(\msk, x) = \Eval(\sk_{\id}, x)$ with overwhelming
probability.  The pseudo-randomness guarantee is straightforward and
requires that query access to $\Eval(\msk, \cdot)$ is
indistinguishable from query access to a truly random function.  The
other security guarantee called traceability ensures that a PPT
adversary that receives $\sk_{\id} = \KG(\msk, \id)$ for some $\id \in
\bit$ cannot produce a successful weak-pseudorandomness distinguisher $D$\footnote{The
definition considers $D$ that can distinguish the PRF from a random
function when given oracle access to either on uniform inputs.
This captures that the distinguisher
has some ability to evaluate the PRF, and doesn't simply hold 
hard-coded PRF evaluations. Similar definitions were utilized in prior
traitor tracing works \cite{EC:KitNis22,AC:GKWW21}.}
such that $\tTrace(\msk, D) \neq \id$. 
Consider now the collusion-resistant traceable PRF
$(\Setup',\KG',\Eval',\tTrace')$, which has the same syntax as the
TPRF, except that $\KG'$ admits identities in some larger identity
space $[N]$. Moreover, the security guarantee is stronger: the PPT
adversary gets to query $\KG'(\msk', \cdot)$ arbitrarily. Let $S$ be
the set of identities queried. Then, it cannot produce a successful
distinguisher $D$ such that $\tTrace'(\msk', D) \notin S$.

At a high level, the algorithm $\KG'(\msk', \id)$ for $\id \in [N]$
outputs $\sk'_{\id} \seteq \{\sk^i\}_{i\in[\ell]}$ such that $\sk^i \la
\KG(\msk^i, w_\id[i])$ where $\msk^i$ corresponds to an independent
TPRF instance, and $w_{\id}$ is a codeword of length $\ell$ that the 
FC scheme maps $\id$ to. Recall that $S$ denotes the
identity queries made by the adversary in the traceability security
game. The important point is that to guarantee collusion-resistance,
for every successful PPT distinguisher $D$, the algorithm
$\tTrace'(\msk', D)$ computes (internally) a string $w^*$ satisfying
the following:

For any $i \in [\ell]$, if there exists $b \in \bit$ such that for each
$\id \in S$, it holds that $w_{\id}[i] = b$, then $w^*[i] = b$. 

To ensure this property, the PRF is defined as
$\Eval(\msk', x) = \bigoplus_{i\in[\ell]}\Eval\allowbreak(\msk^i, x)$.
The idea is that given a distinguisher $D$,
the tracing algorithm can construct distinguishers
$\{D_i\}_{i\in[\ell]}$ corresponding to each of the $\ell$ TPRF
instances. Then, $\tTrace'$ will run $\tTrace(\msk^i, D_i)$ for each
$i \in [\ell]$ to compute $w^*[i]$. The above property of $w^*$ is then
satisfied by the traceability of TPRF.

In the quantum setting however, the above algorithm $\tTrace'$ doesn't
work. The problem is that in-order to construct the distinguishers
$\{D_i\}_{i \in [\ell]}$, we implicitly rely on the fact that multiple
copies of $D$ can be made. Observe that sequentially making use of a
quantum distinguisher $\qD$ to construct $w^*[1]$ followed by $w^*[2]$
and so on is insufficient. This is because even if the TPRF supports
quantum-secure tracing (via a quantum algorithm $\qTrace$), and we
construct $\qD_1$ from a quantum program $\qD$ and execute
$w^*[1] \la \qTrace(\msk^1, \qD_1)$, this may destroy $\qD$. Since
$\qD$ may no longer be useable,
$\qTrace(\msk^2, \qD_2)$ may not produce the desired outcome.

To overcome this, we first make use of a quantum-secure
TPRF due to Kitagawa and Nishimaki \cite{EC:KitNis22}\footnote{They
consider the setting of watermarking, but we show their scheme implies
TPRFs.}. 
Now, our main
idea is to rely on a rewinding technique due to Chiesa et al.
\cite{FOCS:CMSZ21}, as
utilized by Zhandry \cite{C:Zhandry23} for quantum-secure tracing.  Intuitively,
the technique ensures that after estimating the pirate's success
probability on some distribution $\cD_1$, and then on another
distribution $\cD_2$, one can ``rewind'' the adversary. This rewinding
ensures that an estimation on $\cD_1$ right after the rewinding
produces a similar outcome as the first estimation wrt $\cD_1$.
Hence, our quantum-tracing algorithm $\qTrace'$ has the
following structure:

\begin{enumerate}
\item First, it estimates the success probability of $\qD$ on the honest
weak pseudo-randomness distribution. This utilizes an efficient
probability estimation procedure for quantum states from Theorem
\ref{thm:api}.

\item Then, it constructs $\qD_1$ using $\qD$ followed by running
$w^*[1] \la \qTrace(\msk^1, \qD_1)$.

\item Next, the aforementioned quantum rewinding procedure is applied. This ensures that $\qD$ continues to have
high success probability on the honest weak-pseudorandomness distribution.

\item The above steps are then repeated for $i = 2, \ldots, \ell$ to obtain
$w^*[2], \ldots, w^*[\ell]$.
\end{enumerate}

We are able to show that if this procedure does not work as expected,
we can break the security of the underlying TPRF.
Note that the traceable PRF of Maitra and Wu only admits a
polynomial size identity space $[N]$, a limitation which we inherit.
Consequently, our PRF-SKL scheme only satisfies bounded
collusion-resistance. As a result, the case of unbounded collusions
remains open for PRFs, both in the traitor-tracing/watermarking and
secure key leasing regimes.


Until now, we only considered a ``single-level'' quantum-secure and
collusion-resistant tracing scheme for PRFs. However, the multi-level
variant (MLTT) can be achieved by the same approach. We simply 
use $k=\poly(\secp)$ many independent FC instances and $k\ell$
many TPRFs, and rely on a similar tracing algorithm. Note that
we consider a weaker traceability guarantee, where the adversary only
receives keys for randomly chosen identities. In summary, using the
fact that the TPRF of Kitagawa and Nishimaki \cite{EC:KitNis22} is
known from LWE with sub-exponential modulus, we are able to obtain our MLTT scheme for PRFs from
LWE. Together with our SKL compiler, this gives us a bounded
collusion-resistant PRF-SKL scheme from LWE with sub-exponential
modulus.

\subsection{Unbounded Collusion-Resistant Signatures}

We now discuss our compiler that upgrades a single-key secure DS-SKL
scheme into one with unbounded collusion-resistance. Recall that a
DS-SKL scheme allows the lessee to sign messages. Once the lessee
revokes its key, it is guaranteed that it can no longer sign randomly
chosen messages, except with negligible probability. Let
$\widetilde{\DSSKL}$ be a single-key secure DS-SKL scheme. We
construct a DS-SKL scheme $\DSSKL$ using $\widetilde{\DSSKL}$ and a
post-quantum signature scheme $\DSig$. First, the $\Setup$
algorithm of $\DSSKL$ samples a signing and verification key pair
$(\sig.\sk, \sig.\vk)$ of $\DSig$, which will also be the signing and
verification keys of $\DSSKL$. Then, to generate a leased key, the
key-generation algorithm $\qKG(\msk)$ first samples $\widetilde{\svk}$
using the setup algorithm of $\widetilde{\DSSKL}$, where
$\widetilde{\svk}$ is a signature verification key. Then, it samples a
leased-key and verification-key pair
$(\widetilde{\qsk}, \widetilde{\vk})$ using the key-generation
algorithm of $\widetilde{\DSSKL}$. Note that $\widetilde{\qsk}$ allows
to sign messages that can be verified by $\widetilde{\svk}$. Then,
$\qKG$ computes $\sig.\sigma$, a signature of $\DSig$
for the message $\widetilde{\svk}$. The leased key is then set to be
$\qsk \seteq (\widetilde{\qsk}, \widetilde{\svk}, \sig.\sigma)$ and the
verification-key as $\vk \seteq \widetilde{\vk}$. To sign a message
$\msg$ using $\qsk$, one first computes $\widetilde{\sigma} \la
\widetilde{\qEval}(\widetilde{\qsk}, \msg)$ where $\widetilde{\qEval}$
is the evaluation algorithm of $\widetilde{\DSSKL}$. Then, 
$\sigma' \seteq (\widetilde{\sigma}, \widetilde{\svk}, \sig.\sigma)$
is output as the signature. To verify such a signature, one first checks
if $\sig.\sigma$ is a valid signature for the message
$\widetilde{\svk}$ wrt $\DSig$ and $\sig.\vk$. If so, one
checks if $\widetilde{\sigma}$ is a valid signature wrt
$\widetilde{\svk}$ for the actual message $\msg$.

Consider now a QPT adversary $\qA$ that receives unbounded
polynomially many leased keys and then revokes all of them. If it is
still able to sign a random message $\msg$, then it must produce some
valid signature $\sigma'_0 = (\widetilde{\sigma}_0, \widetilde{\svk}_0,
\sig.\sigma_0)$. Now, by the security of $\DSig$, $(\widetilde{\svk}_0,
\sig.\sigma_0)$ can only be a valid message-signature pair of $\DSig$
if the pair was received by $\qA$. This means
that if $\qA$ received $\{\widetilde{\svk}_i\}_{i\in[q]}$ for $q =
\poly(\secp)$ as part of its leased keys, $\widetilde{\svk}_0$ must
satisfy $\widetilde{\svk}_0  = \widetilde{\svk}_j$ for some $j \in
[q]$. Observe now that $\widetilde{\sigma}_0$ must also be a valid
signature wrt $\widetilde{\svk}_j$ for $\sigma_0'$ to be a valid
signature of $\DSSKL$. However, this breaks the security of the $j$-th
instance of the single-key secure DS-SKL scheme $\widetilde{\DSSKL}$.
We remark that an analogous construction provides unbounded
collusion-resistant copy protection for signatures, greatly
simplifying this task in comparison to prior works
\cite{TCC:LLQZ22,TCC:CakGoy24}.

\subsection{Verification Oracle Security}

Finally, we consider a stronger security model where the adversary is
provided with classical oracle access to the verification algorithm. Such a
notion was considered in the work of Kitagawa et al.
\cite{C:KitNisPap25}.
Specifically, the notion allows the adversary to make arbitrarily many
classical queries to the algorithms $\Vrfy(\vk_i, \cdot)$ for each $i
\in [q]$. Then, as long as the adversary generates an accept response
at-least once for each $i \in [q]$, it must lose the ability to
evaluate the leased function. One might think that a stateful verifier
easily achieves this notion, as it can penalize the adversary for
producing incorrect certificates. However, it is preferable to have a
verifier that is a stateless machine. Even if a verifier is stateful,
an adversary with even black-box access to the verifier might be able
to rewind it, for example with a hard reset. This allows the adversary
to learn the outcomes of verification, and provides it with multiple
attempts at breaking the scheme, without any penalty. We formalize
this security notion, called verification-oracle aided
key-leasing-attacks (VO-KLA) security in Definition \ref{def:vo-kla}.

Unfortunately, our aforementioned SKL compiler (Theorem \ref{thm:const}) is
completely broken in this model, in the setting of unbounded
collusions. Even in the setting of bounded collusions, the security
breaks down whenever the collusion-bound $q$ is more than the number
of parallel-repetitions $\ell$. This is not ideal, as it would require
large quantum leased keys. The attack is as follows. Recall
that the adversary receives states of the following form for each
$(i,j) \in [\ell] \times [q]$:

$$\rho_i^j \seteq \frac{1}{\sqrt2}\big(\ket{v_i^j}\ket{\sk_{i,v}^j} +
(-1)^{b(i,j)} \ket{w_i^j}\ket{\sk_{i,w}^j}\big)$$

Now, instead of producing a legitimate certificate for the first
leased key $\qsk^1 \seteq (\rho_i^1)_{i \in [\ell]}$, the adversary
constructs a malformed certificate $\widetilde{\cert}$ by measuring
the positions $i \in [\ell] \setminus \{1\}$ honestly to get values
$(c_i, d_i)$. For the position $i = 1$, the adversary uses random
values $(c_1, d_1)$ and measures the state in the computational basis
to learn one among $\{\sk_{1,v}^1, \sk_{1,w}^1\}$. Observe that the
adversary fails with probability $1/2$ in producing an accept,
but can easily succeed in subsequent tries by
altering $(c_1, d_1)$. Hence, it is able to produce an
accept wrt $\qsk^1$, while
retaining one among $\{\sk_{1,v}^1, \sk_{1,w}^1\}$. It can then repeat
this attack with $\ell$-many leased keys to obtain a classical
secret-key for each position $i \in [\ell]$. Then, from the
correctness of MLTT, it
can continue to evaluate the function $f$. 
We provide an elegant solution to this problem, which works for any
SKL scheme with classical revocation and standard-KLA security.
Particularly, we show the following:

\begin{theorem}[VO-Resilience (Informal)]\label{thm:vo-inf}
Let there be an SKL scheme for application $(\qF, \qE, t)$
with standard-KLA security and classical revocation. Then, there
is an SKL scheme for $(\qF, \qE, t)$ with VO-KLA
security and classical revocation, assuming OWFs.
\end{theorem}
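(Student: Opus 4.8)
The plan is to give a black-box wrapper $\SKL = \mathsf{Wrap}(\widetilde{\SKL})$ around the assumed standard-KLA scheme $\widetilde{\SKL}$ (with classical revocation) and reduce the VO-KLA security (Definition \ref{def:vo-kla}) of $\SKL$ to the standard-KLA security (Definition \ref{def:std-kla}) of $\widetilde{\SKL}$. I would inherit $\Setup$, $\qKG$, $\qEval$ and $\qDel$ from $\widetilde{\SKL}$ essentially unchanged, so that correctness and the application $(\qF,\qE,t)$ are preserved, and confine all the new work to the verification procedure and the small amount of classical data attached to each $\vk$, using the given one-way function only to instantiate a $\PRF$ that the wrapper folds into $\Vrfy$. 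This is justified because the standard-KLA game already guarantees that an adversary who submits its certificates \emph{once} cannot retain the ability to compute $f$; VO-KLA differs only by granting a classical oracle for $\Vrfy(\vk_i,\cdot)$ per $i \in [q]$, and the attack recalled above exploits these queries to cheaply ``repair'' a malformed certificate while keeping secret material. Hence the entire burden is to make the accept/reject oracle unexploitable.

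The reduction I would build proceeds by a hybrid over the $q$ leased keys. In the $\istar$-th hybrid, the standard-KLA adversary $\qB$ embeds its single challenge leased key into index $\istar$ and generates the remaining $q-1$ instances itself, so that $\qB$ holds the full wrapped verification keys for every index except $\istar$ and answers those verification queries faithfully. For index $\istar$, $\qB$ simulates the oracle by answering $\reject$ to every query until it detects the first certificate accepted by the wrapper's predicate, forwards exactly that certificate to its own challenger as $\cert_{\istar}$, and answers all later index-$\istar$ queries consistently against it. The queries preceding the first accept are genuine rejects, so that part of the simulation is faithful, and — provided the wrapper is designed so that an accepting certificate for a given index is effectively unique — the post-accept answers are faithful as well. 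When $\qA$ finally outputs a pirate program $\qP^* = (U^*,\rho^*)$ having obtained an accept for each index, $\qB$ outputs the same program together with the forwarded certificates; if $\qP^*$ is $\eps$-good (Definition \ref{def:good}) while all keys were revoked, then $\qB$ wins the standard-KLA game. The hybrid loses a factor $q$, and locating the first accept for the embedded index costs a further $1/\poly(\secp)$ guessing factor, both tolerable after the usual parallel amplification over $\eps = 1/\poly(\secp)$.

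The hard part, and the place where the $\PRF$ and the classical-revocation property must be used, is building the wrapper so that its acceptance predicate meets two requirements at once. First, honest certificates must remain accepted with overwhelming probability, while any certificate produced by an adversary that has retained enough information to keep $\qP^*$ being $\eps$-good must be accepted only with negligible probability, \emph{and} the set of accepting certificates must be unpredictable, so that a one-bit accept/reject oracle cannot be used to search for an accepting certificate coordinate-by-coordinate — this is exactly what neutralizes the repair attack, which crucially relied on each position's check being an independently guessable bit. Second, once this concentration holds, an accepting certificate becomes effectively unique: since a deviating certificate is rejected except with negligible probability, the only route to an accept is honest deletion, which consumes the leased quantum state, so by no-cloning the adversary cannot subsequently produce a second distinct accepting certificate, legitimizing $\qB$'s post-accept simulation. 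I expect the main obstacle to be proving soundness of this binding against an \emph{adaptive, oracle-aided} quantum adversary, reducing any violation to either $\PRF$ security (hence the one-way function) or to the classical-revocation property of $\widetilde{\SKL}$; carrying this out while keeping the wrapper generic — applicable to any standard-KLA scheme with classical revocation, including our compiler of Theorem \ref{thm:const} — is where the care is needed.
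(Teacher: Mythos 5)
There is a genuine gap here, on two fronts. First, your wrapper modifies only the verifier side ($\Vrfy$ plus a $\PRF$ key stored in $\vk$) while keeping $\qDel$ unchanged, and no such wrapper can neutralize the repair attack. Correctness forces the accepting set to contain every honest measurement outcome of $\qDel$, and in the repair attack the malformed certificate agrees with an honest one at all but one position, where the adversary's guess coincides with a \emph{bona fide} honest outcome with constant probability (e.g.\ $1/2$ in the attack of Section \ref{sec:to} on the compiler of Theorem \ref{thm:const}). The verifier cannot distinguish a lucky guess from an honest measurement — they are literally the same strings — so pseudorandomizing the predicate does not help, and your ``effectively unique accepting certificate'' premise is false in general: for the paper's own scheme the accepting set for a fixed $\vk$ is an exponentially large affine set, and flipping a bit of a known accepting certificate yields another accepting one with probability $1/2$. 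The paper's fix is orthogonal to yours: it augments the \emph{leased key} and the \emph{deletion} algorithm, not just $\Vrfy$. Each $\qKG$ additionally samples a tokenized MAC ($\TMac$, known from OWFs \cite{BSS21}): the leased key carries a quantum signing token $\qtk$, $\qDel$ signs the (classical, hence the classical-revocation hypothesis) certificate $\widetilde{\cert}$ with $\qtk$, and $\Vrfy$ checks the MAC before running $\widetilde{\Vrfy}$. Unforgeability of $\TMac$ — which holds even given classical access to the MAC verification oracle — guarantees the adversary can produce at most one MAC-valid certificate per key, which is exactly the ``commit to a single query'' binding you were trying to extract from no-cloning of the leased state itself; a classical PRF/MAC cannot play this role since the lessee must be able to sign once, and a classical signing key would let it sign arbitrarily often.

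Second, your reduction is structurally ill-formed. All $q$ leased keys are generated as $\qKG(\msk)$ from a \emph{single} master secret key, so a hybrid that embeds one challenge key at index $\istar$ and lets $\qB$ ``generate the remaining $q-1$ instances itself'' is impossible: $\qB$ has no $\msk$. Moreover $\qB$ cannot ``detect the first certificate accepted by the wrapper's predicate'' for index $\istar$, because that predicate involves $\widetilde{\Vrfy}(\widetilde{\vk}_{\istar},\cdot)$ and $\qB$ lacks $\widetilde{\vk}_{\istar}$. The paper's proof avoids both problems: the reduction forwards all $q$ keys from the standard-KLA challenger at once (no hybrid, no factor-$q$ loss), samples the $\TMac$ keys itself, and simulates every oracle query using \emph{only} the MAC check — answering $\top$ iff the MAC verifies and recording the first MAC-valid $\widetilde{\cert}_j$ per index — then submits the recorded certificates and $\qP^*$ to its challenger, who performs the deferred base verifications. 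A difference-lemma argument shows the simulated oracle deviates from the real one only if the adversary produces two distinct MAC-valid certificates under the same token, which breaks $\TMac$ unforgeability. So the decomposition you need is ``reduction-checkable one-time MAC layer $+$ deferred base check,'' and the one-time property must come from a quantum token in the leased key, not from verifier-side unpredictability.
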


Observe that our SKL scheme (Section \ref{sec:const}) satisfies classical
revocation, as the deletion certificates are
simply Hadamard basis measurements. Our DS-SKL scheme also satisfies
classical revocation, assuming the single-key secure DS-SKL scheme
does, as the scheme of \cite{EC:KitMorYam25}. As a result, we are able to
upgrade all our results to the stronger notion of VO-KLA
security. We now give an overview of the VO-KLA secure construction
$\SKL$. This uses a
standard-KLA secure scheme $\widetilde{\SKL}$, along with a
primitive called tokenized MAC (TMAC) \cite{BSS21}. TMAC is a uniquely
quantum primitive consisting of algorithms $\TMac.(\KG, \TG, \qSign,
\Vrfy)$. The key-generation algorithm $\TMac.\KG$ outputs a secret-key
$\sk$. The token-generation algorithm $\TMac.\TG(\sk)$
outputs a quantum ``token'' state $\qtk$. The quantum
signing algorithm $\qSign$ takes as input $\qtk$ and a message $\msg$
and produces a signature $\sigma$. The classical
verification algorithm $\TMac.\Vrfy(\sk,\sigma,\msg)$
outputs $\top$ or $\bot$. The correctness notion is straightforward.
The
important aspect is the security guarantee, which ensures that no
QPT adversary $\qA$ given $\qtk$ and classical oracle access to 
$\TMac.\Vrfy(\sk, \cdot, \cdot)$ can succeed in producing two pairs
$(\msg_0, \sigma_0), (\msg_1, \sigma_1)$ such that $\msg_0 \neq
\msg_1$ and $\TMac.\Vrfy(\sk, \sigma_0, \msg_0) = \TMac.\Vrfy(\sk, \sigma_1,
\msg_1) = \top$. In other words, $\qA$ can sign at most one message
with a single token $\qtk$. We leverage this security guarantee
called unforgeability as follows.

The setup algorithm of $\SKL$ is the same as that of
$\widetilde{\SKL}$. The algorithm $\SKL.\qKG(\msk)$ computes the
leased-key $\qsk$ as $\qsk \seteq (\widetilde{\qsk}, \qtk)$ and the
verification-key $\vk$ as $\vk \seteq (\widetilde{\vk}, \sk)$.
Here, the pair $(\widetilde{\qsk}, \widetilde{\vk})$ is generated using
$\widetilde{\qKG}(\msk)$ and $(\qtk, \sk)$ is a
token and secret-key pair of $\TMac$, which is generated
independently for each invocation of $\SKL.\qKG$. The evaluation algorithm
simply runs the evaluation of $\widetilde{\SKL}$. The deletion
algorithm first generates a certificate $\widetilde{\cert} \la
\widetilde{\qDel}(\widetilde{\qsk})$ and then signs this certificate
using $\qtk$ to get $\sigma$. The final certificate is set as $\cert
\seteq (\widetilde{\cert}, \sigma)$. Given $\cert =
(\widetilde{\cert}, \sigma)$ as input, the verification algorithm
$\Vrfy$ first checks if the signature is valid using
$\TMac.\Vrfy(\sk, \widetilde{\cert}, \sigma)$. If it is valid, then it
accepts if $\widetilde{\Vrfy}(\widetilde{\vk},
\widetilde{\cert})$ accepts, and rejects otherwise.
The main idea is that this scheme essentially renders multiple queries
of $\qA$ to be useless. This is because the unforgeability of $\TMac$
restricts $\qA$ to commit to one query, for which it provides a valid
$\TMac$ signature. We remark that tokenized MACs were previously utilized
to handle decryption queries in the context of quantum PKE \cite{C:KMNY24}.
Finally, Theorem \ref{thm:vo-inf} follows
from the fact that tokenized MACs are known from OWFs \cite{BSS21}.

\section{Preliminaries}\label{sec:prelim}

\subsection{Notation}

The security parameter is denoted by $\secp$, a polynomial in
$\secp$ by $\poly(\secp)$, and a negligible function by
$\negl(\secp)$. The notation $y \seteq z$ denotes that
variable $y$ is assigned to, or replaced with value $z$. We use
calligraphic font to denote mixed quantum states and quantum
algorithms (Eg. $\qstate{q}$ and $\qalgo{A}$). We
use sans-serif font to denote classical values and algorithms (Eg.
$\mathsf{\sk}$ and $\algo{B}$). For a finite set $S$ and a
distribution $D$, $x \la S$ denotes sampling uniformly randomly from
$S$ and $x \la D$ denotes sampling according to $D$. We denote
sampling an output $y$ by running quantum algorithm $\qA$ on input
$1^\secp$ as $y \la \qA(1^\secp)$. 

\subsection{Quantum Information}
A pure quantum state is a vector $\ket{\psi}$ in some Hilbert space
 $\cH$ with $\norm{\ket{\psi}} = 1$. A Hermitian operator is any
 operator $P$ satisfying $P^\dagger = P$.
 Let $S(\cH)$ denote the set of Hermitian operators on $\cH$. A density
 matrix $\qstate{q} \in S(\cH)$ is a positive semi-definite operator
 with $\Tr(\qstate{q}) = 1$. Density matrices represent a probabilistic
 mixture over pure states, i.e., a mixed state. A pure state
 $\ket{\psi}$ has density matrix $\ketbra{\psi}$. A Hilbert space $\cH$
 can be divided into registers $\cH \seteq \cH^{\qreg{R_1}} \otimes
 \cH^{\qreg{R_2}}$. We use the notation $\cX^{\qreg{R_1}}$ to denote
 that operator $\cX$ acts on register $\qreg{R_1}$. This is equivalent
 to applying the operation $\cX^{\qreg{R_1}} \otimes
 \mathbb{I}^{\qreg{R_2}}$ to $\cH^{\qreg{R_1}} \otimes
 \cH^{\qreg{R_2}}$. A unitary operation is a complex matrix $U$
 satisfying $UU^\dagger = U^\dagger U = \mathbb{I}$.  It transforms a
 pure state $\ket{\psi}$ into $U\ket{\psi}$ and a mixed state
 $\qstate{q}$ into $U \qstate{q} U^\dagger$. A projector $\Pi$ is a
 Hermitian operator satisfying $\Pi^2 = \Pi$. The trace distance
 between mixed states $\qstate{q_0}$ and $\qstate{q_1}$ is defined as
 $\TD(\qstate{q_0}, \qstate{q_1}) \seteq \frac12 \Tr(\sqrt{(\qstate{q_0} -
 \qstate{q_1})^{\dagger}(\qstate{q_0} - \qstate{q_1})})$.

In this work, we consider quantum algorithms which are quantum
circuits built from some universal gate set, and possibly consist of
an initial state $\ket{\psi}$. By QPT algorithms, we mean ones with
polynomial circuit size. Often, we refer to certain algorithms as
``quantum programs'', which are essentially quantum algorithms with
classical inputs and outputs. These are described as follows.

\begin{definition}[Quantum Programs (with classical inputs and
outputs) \cite{C:ALLZZ21}]\label{def:program} A quantum program with classical inputs and outputs $\qP =
(U, \qstate{q})$ consists of a unitary $U$ and a quantum state
$\qstate{q}$. Let $\{U_x\}_x$ denote a set of unitaries indexed by $x
\in \bit^*$. Evaluating $\qP$ on input $x$ corresponds to the
following:
\begin{itemize}
\item Compute $U_x$ by applying $U$ to the input state $\ket{x}$.
\item Apply $U_x$ to $\qstate{q}$, producing the state $U_x\qstate{q} U_x^\dagger$.
\item Measure the first register of the resulting state to obtain the output.
\end{itemize}
\end{definition}

\begin{definition}[Positive Operator-Valued Measure (POVM)]
A POVM is a set of Hermitian positive semi-definite matrices
$\cM = \{M_i\}_{i\in\cI}$
s.t. $\sum_{i\in\cI}M_i =
\mathbb{I}$ holds. Applying $\cM$ to a state $\qstate{q}$ produces
outcome $i \in \cI$ with probability $\Trace(\qstate{q}M_i)$. Let
$\cM(\ket{\psi})$ denote the distribution of the output of applying
$\cM$ to $\ket{\psi}$.
\end{definition}

\begin{definition}[Quantum Measurement] A quantum measurement is a set
of matrices $\cE = \{E_i\}_{i\in\cI}$ satisfying
$\sum_{i\in\cI}E^\dagger_iE_i = \mathbb{I}$. Applying $\cE$ to a state
$\qstate{q}$ produces outcome $i$ with probability
$p_i \seteq \Tr(\qstate{q}E_i^\dagger E_i)$ with the corresponding
post-measurement state being $E_i \qstate{q} E_i^\dagger / p_i$.
For any quantum state $\qstate{q}$, let $\cE(\qstate{q})$ denote the
distribution of the outcome of applying $\cE$ to $\qstate{q}$. For any
states $\qstate{q_0}$ and $\qstate{q_1}$, the statistical distance
between $\cE(\qstate{q_0})$ and $\cE(\qstate{q_1})$ is bounded above
by $\TD(\qstate{q_0}, \qstate{q_1})$.
\end{definition}

\begin{definition}[Projective Measurement/POVM]
A quantum measurement $\cE = \{E_i\}_{i\in\cI}$ is a projective
measurement if for each $i \in \cI$, $E_i$ is a projector. A binary
projective measurement is of the form $\cE = \{\Pi, \mathbb{I} -
\Pi\}$ where $\Pi$ corresponds to the outcome $0$ and $\mathbb{I} -
\Pi$ to the outcome $1$. Likewise, a POVM $\cM = \{M_i\}_{i\in\cI}$ is
projective if for each $i \in \cI$, $M_i$ is a projector.
\end{definition}

\begin{definition}[$(\epsilon, \delta)$-Almost Projective Measurement
\cite{TCC:Zhandry20}]
A quantum measurement $\cE$ is $(\epsilon,\delta)$-almost projective
if applying $\cE$ twice in a row to any quantum state $\qstate{q}$
produces outcomes $p, p'$ such that $\Pr[\lvert p - p'\rvert \ge
\epsilon] \le \delta$.
\end{definition}

\begin{definition}[Mixture of Projective Measurements
\cite{TCC:Zhandry20}]
For a set of projective measurements $\cE = \{\cE_i\}_{i \in \cI}$
where $\cE_i = (\Pi_i, \mathbb{I} - \Pi_i)$, consider the following
POVM $\cP_D$ corresponding to distribution $D$:
\begin{itemize}
\item Sample $i \la D$.
\item Apply $\cE_i$ and output the resulting bit.
\end{itemize}
We refer to $\cP_D$ as a mixture of projective measurements. It is
specified by matrices $(P_D, Q_D)$, where $P_D = \sum_{i\in\cI}\Pr[i
\la D]\Pi_i$ and $Q_D = \sum_{i\in\cI}\Pr[i \la D](\mathbb{I} -
\Pi_i)$.
\end{definition}

\begin{lemma}[Gentle Measurement \cite{Winter99}]
\label{lma:gentle} Let $\qstate{q}$ be a
quantum state and $(\Pi, \mathbb{I} - \Pi)$ be a binary projective
measurement such that $\Tr(\Pi\qstate{q}) \ge 1- \delta$. Let the
post-measurement state on applying
$(\Pi, \mathbb{I} - \Pi)$ and conditioning on the first outcome be:
$\qstate{q}' = \Pi \qstate{q} \Pi/\Trace(\Pi\qstate{q})$. Then, we
have that
$\TD(\qstate{q}, \qstate{q}') \le 2\sqrt{\delta}$.
\end{lemma}

\begin{definition}[Projective Implementation]\label{def:pi}
Consider the following:
\begin{itemize}
\item $\cP = \{M_i\}_{i\in \cI}:$ A POVM, where $\cI$ is some index set.
\item $\cD:$ A finite set of distributions over $\cI$.
\item $\cE = \{\E_{D}\}_{D \in \cD}:$ A projective measurement indexed
by distributions $D \in \cD$.
\end{itemize}
Now, consider the following procedure: (1) The projective measurement
$\cE$ is applied to obtain outcome $D$. (2) A sample $d$ is drawn from $D$ and then output.

If the above procedure is equivalent to applying the POVM $\cP$, then
$\cE$ is called the projective implementation of $\cP$, and is denoted
by $\ProjImp(\cP)$.
\end{definition}

\begin{theorem}[\cite{TCC:Zhandry20}, Lemma 1] Any binary outcome POVM $\cP =
(P, \mathbb{I} - P)$ has a unique projective implementation
$\ProjImp(\cP)$.
\end{theorem}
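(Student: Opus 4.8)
The plan is to exhibit the projective implementation directly from the spectral decomposition of $P$ and then establish uniqueness by appealing to uniqueness of that decomposition. Since $\cP = (P, \mathbb{I} - P)$ is a POVM, $P$ is Hermitian and positive semi-definite, and the constraint $\mathbb{I} - P \succeq 0$ forces every eigenvalue of $P$ to lie in $[0,1]$. First I would write $P = \sum_p p\,\Pi_p$, where the sum ranges over the distinct eigenvalues $p \in [0,1]$ of $P$ and $\Pi_p$ is the orthogonal projector onto the corresponding eigenspace, so that $\Pi_p \Pi_{p'} = 0$ for $p \neq p'$ and $\sum_p \Pi_p = \mathbb{I}$. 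To each eigenvalue $p$ I associate the Bernoulli distribution $D_p$ on $\cI = \{0,1\}$ with $\Pr_{d \la D_p}[d = 1] = p$, and define the projective measurement $\cE \seteq \{\Pi_p\}_p$ indexed by these distributions.

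To confirm that $\cE$ is a projective implementation I would verify that the two-step procedure of Definition \ref{def:pi} reproduces $\cP$. Applying $\cE$ to a state $\qstate{q}$ yields eigenvalue $p$ with probability $\Tr(\Pi_p \qstate{q})$, after which sampling $d \la D_p$ outputs $1$ with probability $p$. Summing over $p$, the final output equals $1$ with total probability $\sum_p p\,\Tr(\Pi_p \qstate{q}) = \Tr(P \qstate{q})$, using linearity of the trace and $\sum_p p\,\Pi_p = P$. This matches the outcome-$1$ probability of $\cP$ on every $\qstate{q}$, and hence the outcome-$0$ probability as well, since both measurements are binary. This establishes existence.

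For uniqueness I would argue that any projective implementation must coincide with the spectral one. Any candidate $\cE = \{\E_D\}_{D}$ is a projective measurement whose outcomes are distributions over $\{0,1\}$; since such a distribution is fully determined by the single number $p_D \seteq \Pr_{d \la D}[d=1] \in [0,1]$, distinct outcome labels correspond to distinct $p_D$. Equivalence with $\cP$ means that for every state the final output equals $1$ with probability $\sum_D p_D \Tr(\E_D \qstate{q}) = \Tr(P \qstate{q})$. Because this holds for all density matrices and $\sum_D p_D \E_D - P$ is Hermitian, testing against all $\qstate{q} = \ketbra{\psi}$ yields $\langle \psi | (\sum_D p_D \E_D - P) | \psi \rangle = 0$ for every $\psi$, hence the operator identity $\sum_D p_D \E_D = P$. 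Together with the facts that the $\E_D$ are mutually orthogonal projectors summing to $\mathbb{I}$ and the $p_D$ are distinct, this is precisely a spectral decomposition of $P$.

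Finally I would invoke the uniqueness of the spectral decomposition of a Hermitian operator: its distinct eigenvalues and associated spectral projectors are uniquely determined. Thus, after discarding any null outcomes (indices $D$ with $\E_D = 0$, which contribute nothing to either the operator sum or the output distribution and may always be added or removed trivially), the values $\{p_D\}$ must be exactly the distinct eigenvalues of $P$ and each nonzero $\E_D$ must equal the spectral projector $\Pi_{p_D}$, so $\cE$ agrees with the implementation constructed above. I expect the main subtlety to lie in the bookkeeping of this uniqueness step — specifically, pinning down that intermediate distributions are in bijection with reals in $[0,1]$ so that distinct outcomes force distinct eigenvalues, and handling degenerate eigenspaces together with null projectors — rather than in any deep obstacle, since the heart of the argument is the standard spectral theorem.
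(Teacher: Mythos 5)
Your proof is correct, and it follows essentially the same route as the original: this paper only imports the statement from Zhandry's work (Lemma~1 of \cite{TCC:Zhandry20}) without reproving it, and Zhandry's proof is exactly your spectral-decomposition argument — existence by turning $P = \sum_p p\,\Pi_p$ into the projective measurement $\{\Pi_p\}_p$ labeled by Bernoulli distributions, and uniqueness by observing that any projective implementation yields an orthogonal-projector decomposition of $P$ with distinct weights, which the spectral theorem pins down. The bookkeeping points you flag (Bernoulli distributions on $\{0,1\}$ biject with $[0,1]$, projectors summing to $\mathbb{I}$ are automatically mutually orthogonal, null projectors can be discarded) are handled the same way there, so there is no gap.
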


\begin{definition}[Shift Distance]
For distributions $D_0, D_1$, the $\epsilon$-shift distance 
$\Delta_\epsilon(D_0, D_1)$ is the
smallest value $\delta$ such that for all $x \in \mathbb{R}:$
$$\Pr[D_0 \le x] \le \Pr[D_1 \le x + \epsilon] + \delta
\;\;\;\;\;\;\;\;
\Pr[D_0 \ge x] \le \Pr[D_1 \ge x - \epsilon] + \delta$$
$$\Pr[D_1 \le x] \le \Pr[D_0 \le x + \epsilon] + \delta
\;\;\;\;\;\;\;\;
\Pr[D_1 \ge x] \le \Pr[D_0 \ge x - \epsilon] + \delta$$

For two quantum measurements $\cM, \cN$,
the shift distance between $\cM$ and $\cN$ with parameter $\epsilon$
is defined as
$\Delta_\epsilon(\cM, \cN) \seteq \sup_{\ket{\psi}}
\Delta_\epsilon\big(\cM(\ket{\psi}), \cN(\ket{\psi})\big)$.
\end{definition}

\begin{theorem}[Indistinguishability of Projective Implementation \cite{TCC:Zhandry20}]\label{thm:pi-ind}
Let $\qstate{q}$ be an efficiently constructible and possibly mixed
state, and $D_0, D_1$ be computationally indistinguishable
distributions. Then, for any inverse polynomial $\epsilon$ and any
function $\delta$, the following holds:
$$\Delta_{\epsilon}\big(\ProjImp(\cP_{D_0})(\qstate{q}),
\ProjImp(\cP_{D_1})(\qstate{q})\big) \le \delta$$
\end{theorem}

\begin{remark}
As noted in prior work \cite{EC:KitNis22}, for Theorem
\ref{thm:pi-ind} to hold, the indistinguishability of $D_0, D_1$
needs to hold for distinguishers that can efficiently construct
$\qstate{q}$.
\end{remark}

\begin{theorem}[Approximate Projective Implementation
    \cite{TCC:Zhandry20}]\label{thm:api}
Let $\cP = \{\cE_i\}_{i\in\cI}$ be a set of projective measurements
and $D$ be a distribution over $\cI$. For any $\epsilon, \delta \in
(0, 1)$, there exists an algorithm $\API_{\epsilon, \delta}^{\cP, D}$
with the following properties:
\begin{itemize}
    \item $\API_{\epsilon, \delta}^{\cP, D}$ is $(\epsilon,
        \delta)$-almost projective.
    \item $\Delta_{\epsilon}\big(\ProjImp(\cP_D), \API_{\epsilon,
        \delta}^{\cP, D}\big) \le \delta$.
    \item The expected runtime of $\API_{\epsilon, \delta}^{\cP, D}$
        is $T_{\cP, D} \cdot \poly(\epsilon^{-1}, \log(\delta^{-1}))$
        where $T_{\cP, D}$ is the combined runtime of sampling
        $i \la D$, mapping $i \ra \cE_i$, and applying $\cE_i$.
\end{itemize}
\end{theorem}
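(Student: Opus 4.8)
The plan is to realize $\API_{\epsilon,\delta}^{\cP,D}$ through the Marriott--Watrous alternating-measurement technique, following the route of \cite{TCC:Zhandry20}. First I would give a coherent implementation of the binary POVM $\cP_D = (P_D, Q_D)$ as a pair of binary projective measurements. Introduce an index register and a unitary $U$ that prepares $\sum_{i\in\cI}\sqrt{\Pr[i \la D]}\,\ket{i}$ from $\ket{0}$, and let $\Lambda \seteq \sum_{i\in\cI}\ketbra{i}\otimes \Pi_i$ be the controlled accept-projector, where $\cE_i = (\Pi_i, \mathbb{I}-\Pi_i)$. Setting $A \seteq U(\mathbb{I}\otimes\ketbra{0})U^\dagger$ and $B \seteq \Lambda$, and placing the starting state $U(\qstate{q}\otimes\ketbra{0})U^\dagger$ in the image of $A$, measuring $B$ reproduces $\cP_D$, so that $\Tr(P_D \qstate{q})$ equals the probability of the accept outcome. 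By Jordan's lemma, $A$ and $B$ jointly block-diagonalize the Hilbert space into one- and two-dimensional invariant subspaces; within each two-dimensional block the two projectors make an angle $\theta$ with $\cos^2\theta$ an eigenvalue of $P_D$. In this language, $\ProjImp(\cP_D)$ is exactly the measurement that identifies the Jordan block and outputs $p = \cos^2\theta$.

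Next I would define $\API_{\epsilon,\delta}^{\cP,D}$ as the procedure that, on the prepared state, alternately applies the binary measurements $B, A, B, A, \dots$ for $T = O(\epsilon^{-2}\log\delta^{-1})$ rounds, records the outcome bits, and outputs the Marriott--Watrous estimator $\tilde p$ computed from these bits. Restricted to a fixed Jordan block, the sequence of outcomes is a two-state Markov chain whose transition probabilities are determined by $p = \cos^2\theta$, so the empirical statistic is an essentially unbiased estimator of $p$; a Chernoff bound over the $T$ rounds gives $\Pr[\,|\tilde p - p| \ge \epsilon\,] \le \delta$ conditioned on each block. Since each round costs one application of $U, U^\dagger$ and one controlled measurement $\{\Pi_i\}$, the expected runtime is $T_{\cP,D}\cdot T = T_{\cP,D}\cdot\poly(\epsilon^{-1},\log\delta^{-1})$, giving the third bullet. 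The second bullet then follows: both $\ProjImp(\cP_D)$ and $\API$ preserve the Jordan-block weights of $\qstate{q}$ because $A,B$ are block-diagonal, $\ProjImp$ outputs $p$ exactly while $\API$ outputs a value within $\epsilon$ of $p$ except with probability $\delta$, so the two output distributions are $\epsilon$-shift close within $\delta$, i.e. $\Delta_\epsilon(\ProjImp(\cP_D), \API_{\epsilon,\delta}^{\cP,D}) \le \delta$.

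The main obstacle is the first bullet, the $(\epsilon,\delta)$-almost-projective property, since a priori the alternating measurements disturb the state. The key fact I would exploit is that every measurement in $\API$ acts within the Jordan blocks, so the block-membership of $\qstate{q}$ is an invariant of the whole procedure: running $\API$ does not move probability weight between blocks, it only reshuffles the within-block component. Crucially, the Marriott--Watrous estimator concentrates on $p = \cos^2\theta$ depending only on the block angle $\theta$ and \emph{not} on the particular within-block vector the procedure collapses to. Hence a second run of $\API$, starting from the (possibly altered) post-measurement state, sees the same block and outputs $\tilde p'$ again within $\epsilon/2$ of that block's $p$; a union bound over the two runs each deviating by at most $\epsilon/2$ then yields $\Pr[\,|\tilde p - \tilde p'| \ge \epsilon\,] \le \delta$, after enlarging $T$ by a constant. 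The delicate points I would verify carefully are that the estimator's concentration is genuinely independent of the within-block starting vector, and that the constants in $T$ can be chosen uniformly over all blocks simultaneously, so that all three bullets hold for arbitrary input states $\qstate{q}$.
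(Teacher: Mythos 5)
You should note first that the paper itself contains no proof of this theorem: it is imported verbatim from \cite{TCC:Zhandry20}, so the only meaningful comparison is with the proof in that cited work, which your proposal reconstructs faithfully — purifying $D$ into a control register, applying Jordan's lemma to the pair of projectors $A = U(\mathbb{I}\otimes\ketbra{0})U^\dagger$ and $B = \Lambda$, estimating the block eigenvalue $p=\cos^2\theta$ via Marriott--Watrous alternating measurements with a Chernoff bound over $T = O(\epsilon^{-2}\log\delta^{-1})$ rounds, and deriving both almost-projectivity and the shift-distance bound from the block-diagonality of the whole procedure. Your two delicate points are indeed the crux and both check out: within a $2$-dimensional Jordan block the agreement indicators of consecutive outcomes are i.i.d.\ Bernoulli$(\cos^2\theta)$ regardless of the within-block starting vector (so concentration is uniform over blocks and survives the state disturbance of the first run), which is exactly the mechanism in Zhandry's original argument.
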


\begin{theorem}[State Repair \cite{FOCS:CMSZ21}]\label{thm:repair}
Let $\cE$ be a projective measurement on register $\cH$ with outcomes
in set $S$. Let $\cM$ be an $(\epsilon, \delta)$-almost projective
measurement on $\cH$. Consider parameters $T \in \mathbb{N}, s \in S,
p \in [0,1]$. Then, there exists a procedure
$\Repair_{T,p,s}^{\cM,\cE}$ which is to be applied on $\cH$ as follows:

\begin{itemize}
\item Apply $\cM$ to obtain $p \in [0,1]$.
\item Next, apply $\cE$ to obtain $s \in S$.
\item Then, apply $\Repair_{T,p,s}^{\cM,\cE}$.
\item Finally, apply $\cM$ again to obtain $p' \in [0,1]$.
\end{itemize}

Then, it is guaranteed that:

$$\Pr[\lvert p - p'\rvert > 2\epsilon] \le |S|\cdot\delta + |S|/T + 4\sqrt{\delta}$$

\end{theorem}

\subsection{Generic Cryptographic Primitives}

We now present the definitions required to generalize our SKL and
MLTT notions. We first define a quantum security predicate, 
which is essentially a binary projective measurement.

\begin{definition}[Quantum Security Predicate]\label{def:sec-pred}
A quantum security predicate $\qE$ gets as input a classical function
$f$, a quantum program $\qP$, and random tape $r$. It performs a binary
projective measurement on $\qP$ and outputs the result.
\end{definition}

\begin{definition}[Quantum Correctness Predicate]\label{def:cor-pred}
A quantum correctness predicate $\qF$ takes as input a quantum program
$\qg$, a classical function $f$, and a random tape $r$. It is a binary
projective measurement with the following structure:
\begin{description}
\item $\underline{\qF(\qg, f, r)}:$
\begin{itemize}
\item Sample $x \la \cX_f$ from an application specific distribution
$\cX_f$, based on random tape $r$.
\item Output $1$ if $\CorVrfy\big(\qg(x), f, x, r\big) = 1$, and $0$
otherwise, where $\CorVrfy$ is a deterministic classical algorithm that is
application specific.
\end{itemize}
\end{description}
\end{definition}

\begin{remark}
As a special case, $\qg$ could also be a classical function.
\end{remark}

Next, we define a cryptographic application as a tuple consisting of
predicates:

\begin{definition}[Cryptographic Application]\label{def:app}
A cryptographic application is a 3-tuple $(\qF, \qE, t)$, where
$\qF, \qE$ are quantum correctness and quantum security predicates
respectively, and $t \in [0, 1)$.
\end{definition}


The parameter $t$ indicates a probability threshold that is $0$ for applications
with search security (Eg. signatures), and $1/2$ for those with
decisional security (Eg. PKE, PRFs). Let us explain this formalism
for the case of PRFs and signatures in the context of SKL.
The application $(\qF, \qE, t)$ for PRFs is as follows:

\begin{description}
\item $\underline{\qF(\qg, f, r)}:$
\begin{itemize}
\item Sample $x \la \cX$ where $\cX$ is the domain of $f$, using
the random tape $r$.
\item If $f(x) = \qg(x)$, output $1$. Else, output $0$.
\end{itemize}

\item $\underline{\qE(f, \qP, r)}$:
\begin{itemize}
\item Sample $x \la \cX$ where $\cX$ is the domain of the PRF $f$,
using the random tape $r$. Sample $b \la \bit$ also using $r$.
\item If $b=0$, compute $y \seteq f(x)$. Otherwise, sample $y \la
\cY$ using $r$, where $\cY$ is the range of the PRF $f$.

\item Run $b' \la \qP(x, y)$.
\item Output $1$ if $b = b'$ and $0$ otherwise.
\end{itemize}

\item $\underline{t} \seteq \frac12$
\end{description}

Observe that the correctness predicate $\qF$ obtains a quantum program
$\qg$ and a classical function $f$ as inputs, along with a random tape
$r$. Since in the context of SKL, the lessee is provided with a leased
key $\qsk$, we want $\qg$ to capture a program which 
evaluates $\qEval(\qsk, \cdot)$ where $\qEval$ is the SKL evaluation
algorithm. We then want the second input $f$ to
capture the classical PRF associated with the leased key
$\qsk$. Note that $f$ is determined by the master 
secret-key $\msk$ sampled by the SKL $\Setup$ algorithm.
Observe that the predicate $\qF$ simply checks whether $\qg$ and $f$
produce the same output for a uniformly random input $x$, which is
deterministically chosen based on the random tape $r$ (which would be
chosen uniformly at random). For the correctness of SKL, we
want that the predicate $\qF(\qEval(\qsk, \cdot), f, r)$ outputs
$1$ with probability $1 - \negl(\secp)$ over the randomness $r$ (and the
randomness implicit in $\qEval$), which captures that $f$ and
$\qEval(\qsk, \cdot)$ have the same evaluations except on a
negligible fraction of inputs.

Consider now the security predicate $\qE$, which is given
a classical PRF $f$ along with a quantum pirate program
$\qP$. Intuitively, $\qE$ tests whether $\qP$ is able to
distinguish between samples drawn according to a random function, or
ones drawn according to the PRF $f$. The reason is that for the
security of SKL, we want that an adversary that provides valid
certificates for all its keys, should not be able to make $\qE$ output
$1$ with probability $t + 1/p(\secp) = 1/2 +
p(\secp)$ for some polynomial $p(\secp)$. To perform the
aforementioned distinguishing test, $\qE$ first samples a uniformly random PRF
input $x$ and a random bit $b$. Note that these are also
deterministically determined by the random tape $r$. Then, based on
$b$, it either chooses $y$ to be the PRF evaluation at $x$ or a truly
random value. Finally, it runs $\qP$ on input $(x, y)$ and outputs $1$
only if $\qP$ guesses $b$ correctly.

Although we discussed only the SKL context above, it is easy to see
that the same formalism captures MLTT for different applications. The
predicate $\qF$ enforces correctness wrt $f$ on the
classical $\Eval$ algorithm of $\MLTT$, while $\qE$ performs the same
test on a quantum pirate program of the MLTT game.

\begin{remark}
Note that the input $x$ is chosen
uniformly at random as in prior SKL works that construct PRF-SKL \cite{TCC:AnaPorVai23,EC:KitMorYam25}. The intuition is that this captures the inability of
a pirate to evaluate $f$. It is problematic if we allow $\qP$ to
choose $x$ by itself as in standard pseudo-randomness definitions.
This is because the SKL adversary $\qA$ gets $\qsk$, so it can
hard-code $\qP$ with input-output pairs of the PRF.
\end{remark}

We can also capture other primitives such as PKE and signatures.
For the case of signatures, $f$ is defined as $f
\seteq \Sign(\ssk, \cdot) \| \svk$ where $\Sign$ is the signing
algorithm and $\ssk, \svk$ are the signing
and verification keys of the signature scheme respectively. In the
context of DS-SKL, the lessee is given a key $\qsk$
that allows them to sign using $\qEval(\qsk, \cdot)$. We want $\qg$ to
capture $\qEval(\qsk, \cdot)$, so we define
$\qF(\qg, f, r)$ to obtain $\svk$ from $f$ and
check whether $\SigVrfy(\svk, \msg, \qg(\msg)) = 1$ for uniformly
chosen $\msg$ (chosen based on $r$), where $\SigVrfy$ is the signature
verification algorithm. The security
predicate $\qE$ is essentially identical to $\qF$, as it is
supposed to check whether a pirate $\qP$ is able to sign random
messages or not, and $t \seteq 0$.

In the above examples, we focussed on designing the security
predicate $\qE$ to perform a test on the program $\qP$.
However, simply running the test once is not useful, as we are
interested in estimating the success probability of $\qP$ in passing
this test. Recall that this cannot be achieved by simply running $\qP$
many times, due to the nature of quantum
states. Hence, we define the following procedure for this purpose,
which makes use of quantum tools introduced in prior work
\cite{TCC:Zhandry20}:

\begin{definition}[$\epsilon$-good test wrt $(f, \qE, t)$]
\label{def:good} A quantum program
$\qP = (U, \rho)$ is said to be $\epsilon$-good wrt $(f, \qE, t)$ if and
only if the following procedure outputs $1$:
\begin{itemize}
\item Sample $\key \la \bit^\secp$ for a quantum-accessible PRF $\QPRF$.
\item Let $\cP = \{\cE_r^\key\}_{r}$ be a set of projective measurements such
that each $\cE_{r}^\key$ corresponds to running $\qE(f, \qP, (r,
\key))$. Note
that we dilate the measurement using sufficient ancillas so that
it is projective.
\item Let $D$ denote the uniform distribution over $\bit^{\poly(\secp)}$.
\item Apply $p \la \API_{\epsilon', \delta}^{\cP, D}$ to $\qP^*$, where
$\delta = 2^{-\secp}$ and $\epsilon' = 0.1\epsilon$. If $p \ge t +
0.9\epsilon$, output $1$. Else, output $0$.
\end{itemize}
\end{definition}

\begin{remark}
One might prefer to define the $\epsilon$-good test using the
projective implementation $\ProjImp(\cP_D)$. We utilize the
$\API$ procedure instead, as it is efficient and makes some of our
proofs simpler. Importantly, the two variants are equivalent in the
context of the security of SKL (Definition \ref{def:std-kla}). This is
because by Theorem \ref{thm:api}, both measurements produce similar outcomes.
\end{remark}

\nikhil{I directly changed this test to use API, instead of only the
MLTT definition. This changes the SKL definition too but makes things
consistent. Also, this SKL notion is clearly equivalent to the one
where ProjImp is used.}

\section{Collusion-Resistant SKL}\label{sec:def-skl}

For the sake of our compiler, we will define secure key leasing in a
generic way. Different cryptographic applications such as encryption,
signatures, and pseudo-random functions can then be cast as special
instances. 

\begin{definition}[Generic SKL Scheme]\label{def:skl}
A generic SKL scheme for application $(\qF, \qE, t)$ is a tuple of
five algorithms $(\Setup, \qKG, \qEval, \qDel, \Vrfy)$, described as follows:

\begin{description}
\item $\Setup(1^\secp, q) \ra (\msk, f, \aux_f)$: The setup algorithm
takes a security parameter as input along with a collusion-bound $q$
(which can be $\bot$ in the unbounded case). It outputs a master secret-key
$\msk$ and a function $f \in \cF$, along with auxiliary information
$\aux_f$. 

\item $\qKG(\msk) \ra (\qsk, \vk):$ The key-generation algorithm
takes a master secret-key $\msk$ as input.
It outputs a quantum leased-key $\qsk$ and a corresponding
verification-key $\vk$.

\item $\qEval(\qsk, x) \ra y:$ The evaluation algorithm takes as input a 
leased-key $\qsk$ and an input $x$, and outputs a value $y$.

\item $\qDel(\qsk) \ra \cert:$ The deletion algorithm takes a
leased-key $\qsk$ as input and outputs a classical certificate
$\cert$.

\item $\Vrfy(\vk, \cert) \ra \top / \bot:$ The verification algorithm takes as input a
verification-key $\vk$ and a certificate $\cert$. It outputs $\top$
(accept) or $\bot$ (reject).
\end{description}

\paragraph{Evaluation Correctness:}

We require that the function $f$ and the $\qEval$ algorithm
are consistent with each other. This is captured by the quantum
predicate $\qF$ as follows. For all $q = \poly(\secp)$ and $q=\bot$,
the following holds:

\[
\Pr\left[
\qF\big(\qEval(\qsk, \cdot), f, r\big) \ra 0
 \ :
\begin{array}{rl}
 &(\msk, f, \aux_f) \la \Setup(1^\secp, q)\\
 & (\qsk, \vk) \la \qKG(\msk) \\
 &r \la \bit^{\poly(\secp)}
\end{array}
\right] \le \negl(\secp).
\]

Additionally, we require that the post-measurement state $\qsk'$
resulting on executing $\qEval(\qsk, \cdot)$ satisfies
$\TD(\qsk, \qsk') \le \negl(\secp)$.

\begin{remark}
The trace distance requirement ensures $\qsk$ can be re-used
polynomially many times. We needed to specify
it explicitly as $\qF$ may have been defined in a way that almost
always outputs $0$ even if $\qEval$ was far from being deterministic.
\end{remark}

\begin{remark}
Due to the focus on key-leasing security,
we do not capture application specific
security in this formalism. This includes
pseudo-randomness for PRFs, unforgeability for signatures etc, which
are to be specified separately.
\end{remark}

\paragraph{Verification Correctness:} For all $q = \poly(\secp)$ and
$q = \bot$, the following holds:

\[
\Pr\left[
\Vrfy(\vk, \cert) \ra \bot
 \ :
\begin{array}{rl}
 &(\msk, f, \aux_f)\la \Setup(1^\secp, q)\\
 & (\qsk, \vk) \la \qKG(\msk) \\
 & \cert \la \qDel(\qsk) \\
\end{array}
\right] \le \negl(\secp).
\]
\end{definition}

Next, we define verification-oracle aided key
leasing attack (VO-KLA) security:

\begin{definition}[VO-KLA Security]\label{def:vo-kla}
We formalize the experiment
$\expb{\SKL,\qA}{vo}{kla}(\allowbreak1^\secp, \qF, \qE, t, \epsilon)$ between an adversary
$\qA$ and a challenger $\qCh$ for an SKL scheme $\SKL$ corresponding
to application $(\qF, \qE, t)$:

\begin{description}
\item $\underline{\expb{\SKL,\qA}{vo}{kla}(1^\secp, \qF, \qE, t,
\epsilon)}:$
\begin{enumerate}
\item $\qCh$ samples $(\msk, f, \aux_f) \la \Setup(1^\secp, \bot)$ and
sends $\aux_f$ to $\qA$.

\item $\qA$ sends $q = \poly(\secp)$ to $\qCh$. 
\item For all $i \in [q]$, $\qCh$ samples $(\qsk_i, \vk_i) \la
\qKG(\msk)$. It sends $(\qsk_i)_{i\in[q]}$ to $\qA$.

\item For each $i\in[q]$, let $V_i \seteq \bot$. Throughout,
$\qA$ is given oracles access to:

\begin{description}
\item $\underline{\Oracle{\Vrfy}(i, \cert)}:$
\begin{itemize}
\item Compute $d \la \Vrfy(\vk_i, \cert)$.
\item If $V_i = \bot$, set $V_i
\seteq d$. Return $d$.
\end{itemize}
\end{description}

\item $\qA$ outputs a quantum program $\qP^* = (U, \rho)$ to $\qCh$.
\item If $V_i = \top$ for each $i \in [q]$ and $\qP^*$ is tested to be
$\epsilon$-good wrt $(f, \qE, t)$, then $\qCh$ outputs $\top$. Else, it
outputs $\bot$.
\end{enumerate}
\end{description}

We say that an SKL scheme $\SKL$ for application $(\qF, \qE, t)$ satisfies
VO-KLA security if the following
holds for every QPT $\qA$ and every $\epsilon = 1/\poly(\secp)$:

$$\Pr[\expb{\SKL,\qA}{vo}{kla}(1^\secp, \qF, \qE, t, \epsilon) \ra \top] \le \negl(\secp)$$
\end{definition}

Next, we present the analogous definition without the verification oracle:

\begin{definition}[Standard-KLA Security]\label{def:std-kla}
This notion is formalized by the experiment
$\expb{\SKL,\qA}{std}{kla}(1^\secp,\qF,\qE,t,\epsilon)$ between an adversary
$\qA$ and a challenger $\qCh$ for an SKL scheme corresponding to
$(\qF, \qE, t)$. The experiment is defined as:

\begin{description}
\item $\underline{\expb{\SKL,\qA}{std}{kla}(1^\secp, \qF, \qE, t,
    \epsilon)}:$
\begin{enumerate}
\item $\qCh$ and $\qA$ interact as in Steps 1-3. of
$\expb{\SKL,\qA}{vo}{kla}(1^\secp, \qF, \qE, t, \epsilon)$, with
$\qCh$ and $\qA$ obtaining $q=\poly(\secp)$ and $(\aux_f,
(\qsk_i)_{i\in[q]})$ respectively.
\item $\qA$ sends $(\cert_1, \ldots, \cert_q)$ and a quantum program $\qP^* = (U,
    \rho)$ to $\qCh$.
\item If for each
$i \in [q]$, it holds that $\Vrfy(\vk_i, \cert_i) = \top$ and
$\qP^*$ is
tested to be $\epsilon$-good wrt $(f, \qE, t)$, then $\qCh$ outputs $\top$. 
Else, it outputs $\bot$.
\end{enumerate}
\end{description}

We say an SKL scheme $\SKL$ for application $(\qF, \qE, t)$ satisfies
standard-KLA security if the following holds for every QPT $\qA$ and
every $\epsilon = 1/\poly(\secp)$:

$$\Pr[\expb{\SKL,\qA}{std}{kla}(1^\secp, \qF, \qE, t, \epsilon) \ra \top] \le \negl(\secp)$$
\end{definition}

Additionally, we say a scheme satisfies $q$-bounded
standard-KLA/VO-KLA
security if $q$ is a fixed polynomial in $\secp$ given as input to
$\Setup$,
instead of being provided by $\qA$ in the experiment. In this case, we denote
the experiments with additional input $q$ as
$\expb{\SKL,\qA}{std}{kla}(1^\secp, q, \qF, \qE, t, \epsilon)$/
$\expb{\SKL,\qA}{vo}{kla}(1^\secp, q, \qF, \qE, t, \epsilon)$.

\section{Multi-Level Traitor Tracing}\label{sec:abs}

We now define the notion of a multi-level traitor tracing
scheme. The definition also captures collusion-resistance and the
ability to trace quantum pirate programs. 

\begin{definition}[Generic Multi-Level Traitor Tracing Scheme]\label{def:mltt-cor}
A multi-level traitor tracing (MLTT) scheme $\MLTT$ for application
$(\qF, \qE, t)$ is a tuple of four algorithms $(\Setup, \allowbreak
\KG, \Eval, \qTrace)$.  The algorithms are described as follows:

\begin{description}
\item $\Setup(1^\secp, N, k) \ra (\msk, f, \aux_f):$ The setup algorithm
takes a security parameter, a parameter $N$ (identity space size)
and a parameter $k$ (number of ``levels'') as input. It outputs a
master secret-key $\msk$ and a function $f \in \cF$, along with
auxiliary information $\aux_f$.

\item $\KG(\msk, i, \id) \ra \sk_i:$ The key-generation
algorithm takes as input a master secret-key $\msk$, a ``level''
$i \in [k]$, and an identity $\id \in [N]$. It outputs a secret key
$\sk_i$. We assume that $\KG$ is a deterministic
algorithm\footnote{This is wlog, assuming post-quantum PRFs, as the
randomness can be derived from the input $\id$ using a PRF.}.

\item $\Eval(\sk_1, \ldots, \sk_k, x) \ra
y:$ The evaluation algorithm takes as input $k$ secret keys
$\sk_1, \ldots, \sk_k$ and an input $x$.  It
outputs a value $y$.

\item $\qTrace(\msk, \qP, \epsilon^*) \ra (\id_1, \ldots,
\id_k):$ The quantum tracing algorithm takes as input a master
secret-key $\msk$, a quantum program $\qP$, and a parameter
$\epsilon^*$ (a lower bound on the advantage of $\qP$). It outputs a $k$-tuple of identities $(\id_1, \ldots,
\id_k)$.
\end{description}

\paragraph{Evaluation Correctness:} For all $N =
\poly(\secp)$, $k = \poly(\secp)$ and $(\id_1, \ldots, \id_k) \in
[N]^k$, there exists $n(\secp) = \negl(\secp)$ such that:

\[
\Pr\left[
\qF\big(\Eval(\sk_1, \ldots, \sk_k, \cdot), f, r\big) \ra 0
 :
\begin{array}{rl}
 &(\msk, f, \aux_f) \la \Setup(1^\secp, N, k)\\
 &\forall i \in [k]: \sk_i \la \KG(\msk, i, \id_i)\\
 &r \la \bit^{\poly(\secp)}
\end{array}
\right]\le n(\secp)
\]

\paragraph{Deterministic Evaluation:}

Let $\cX_f$ denote the distribution that the correctness predicate $\qF$ samples inputs from. For all $N, k =
\poly(\secp)$, with
probability $1-\negl(\secp)$ over choice of $(\msk, f, \aux_f) \la
\Setup(1^\secp, N, k)$ and $x \la \cX_f$, there
exists some $y$ such that for all
$(\id_1, \ldots, \id_k) \in [N]^k$, the following holds:

\[
\Pr\left[
\Eval(\sk_1, \ldots, \sk_k, x) = y
 :
\begin{array}{rl}
 &\forall i \in [k]: \sk_i \la \KG(\msk, i, \id_i)
\end{array}
\right]\ge 1 - \negl(\secp)
\]

\begin{remark}
Note that for fixed $\sk_1, \ldots, \sk_k$, $\Eval$ can be
deterministic wrt input $x$ wlog, assuming post-quantum PRFs exist. We
require above that the outputs are consistent across different
sets of secret-keys.
\end{remark}

\begin{definition}[Traceability]\label{def:mltt-tr}This notion is formalized
by the experiment $\expb{\MLTT,\qA}{multi}{trace}(1^\secp, \qF, \qE,
t, \epsilon, N, k)$
between a challenger $\qCh$ and an adversary $\qA$:

\begin{description}
    \item $\underline{\expb{\MLTT,\qA}{multi}{trace}(1^\secp, \qF,
        \qE, t, \epsilon, N, k)}:$
\begin{enumerate}
\item $\qCh$ samples $(\msk, f, \aux_f) \la \Setup(1^\secp, N, k)$ and
sends $\aux_f$ to $\qA$.
\item $\qA$ sends $q \in [N-1]$ to $\qCh$.\ryo{Is there any problem if
$q\in[N]$?}\nikhil{It seems like Claim \ref{claim:noabort} is
problematic (or at-least not so clean) as we need to condition on the fact that $\qA$ does not set
$q = N$, to break traceability of $\FC$.}
\item For each $i, j \in [k] \times [q]$, $\qCh$ samples $\id_i^j \la
[N]$ and computes $\sk_i^j \la \KG(\msk, i, \id_i^j)$. It sends
$\{\id_i^j, \sk_i^j\}_{(i,j)\in[k]\times[q]}$ to $\qA$. Define the
multi-set $Q_i \seteq
\{\id_i^j\}_{j\in[q]}$ for each $i \in [k]$.

\item $\qA$ outputs a quantum program $\qP^* = (U, \rho)$.
\item $\qCh$ tests if $\qP^*$ is $\epsilon$-good wrt $(f, \qE, t)$. If
not, it outputs $\bot$.
\item $\qCh$ runs $(\id_1^*, \ldots, \id_k^*) \la \qTrace(\msk,
\qP^*, 0.9\epsilon)$.
\item If $(\id_1^*, \ldots, \id_k^*) \in Q_1 \times \ldots
\times Q_k$, it outputs $\bot$. Else, it outputs $\top$.
\end{enumerate}
\end{description}

An MLTT scheme $\MLTT$ satisfies traceability if the following holds
for every QPT $\qA$, every $N = \poly(\secp)$, every $k =
\poly(\secp)$, and every $\epsilon = 1/\poly(\secp)$:

$$
\Pr[\expb{\MLTT,\qA}{multi}{trace}(1^\secp, \qF,
\qE, t, \epsilon, N, k) \ra \top] \le \negl(\secp)
$$

\begin{remark}
Traitor tracing definitions in the literature allow the adversary to
query the key-generation algorithm on identities of its choice. In our
definition, the challenger itself chooses identities at
random and generates keys for them. However, this weaker
notion is sufficient for the purposes of our SKL compiler.
\end{remark}

\begin{remark}
Whenever we refer to an MLTT scheme, we mean one that is
collusion-resistant as per our traceability notion above. Note
that our notions require correctness and traceability to
hold for every $N = \poly(\secp)$. In some of our discussions, we consider
MLTT with an
exponential size identity space $N = 2^\secp$, where it is assumed that these
notions hold for the specific identity space $[2^\secp]$.
\end{remark}

\end{definition}
\end{definition}

\section{Two-Superposition States in the Collusion Setting}\label{sec:two-sup}

Let $\expb{\qA}{two}{sup}(1^\secp, q, N)$ be an experiment that is
defined as follows between a quantum challenger $\qCh$ and an
unbounded quantum adversary $\qA$.

\begin{description}
\item $\underline{\expb{\qA}{two}{sup}(1^\secp, q, N)}:$
\begin{enumerate}
\item For each $i \in [q]$, $\qCh$ performs the following:
\begin{itemize}
    \item Sample $x_i^0, x_i^1 \la [N]$ such that $x_i^0 \neq x_i^1$ 
and $b_i \la \bit$. Define $Q_i$ as $Q_i \seteq \{x_i^0, 
x_i^1\}$.
\item Let $b$ be a placeholder for $b_i$. Construct the following on register $\qreg{A_i}$:
    $$\sigma_i \seteq \frac{1}{\sqrt{2}}\ket{x_i^0} +
    (-1)^{b}\frac{1}{\sqrt{2}}\ket{x_i^1}$$
\end{itemize}
\item $\qCh$ sends the registers $(\qreg{A_1}, \ldots, \qreg{A_q})$
to $\qA$.
\item $\qA$ sends $(g_1, \ldots, g_q)$ and a value $\msg$ to $\qCh$.
\item $\qCh$ checks if there exists $i \in [q]$ such that $\msg \in
Q_i$. If not, it outputs $\bot$. Let $s$ be such an index.
\item If $g_s(x_s^0, x_s^1) = b_s$ holds, $\qCh$ outputs
$\top$. Else, it outputs $\bot$.
\end{enumerate}
\end{description}

\begin{theorem}\label{thm:two-sup}
For every $q, t \in \mathbb{N}$, there exists $N = O(q^2t^2)$ such
that for every unbounded quantum adversary $\qA$, the following holds:

$$\Pr[\expb{\qA}{two}{sup}(1^\secp, q, N) \ra \top] \le \frac12 +
\frac1{t}$$

Particularly, for all $q\in \mathbb{N}$, $N=128q^2$, 
$\Pr[\expb{\qA}{two}{sup}(1^\secp, q, N) \ra \top] \le 3/4$.
\end{theorem}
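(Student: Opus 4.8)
The plan is to prove this information-theoretically (the adversary is unbounded), following the single-state certified-deletion argument of Bartusek et al.~\cite{TCC:BKMPW23} but carried out directly on the $q$-fold product, and then to isolate the unique index that $\msg$ lands in. The key preliminary move is to purify each phase bit: sampling $b_i \la \bit$ classically and later checking $g_i(x_i^0, x_i^1) = b_i$ is equivalent to initializing a challenger register $\qreg{B_i}$ in uniform superposition and recovering $b_i$ by a final computational-basis measurement of $\qreg{B_i}$. A short computation shows that the joint state on $(\qreg{B_i}, \qreg{A_i})$ is then $\frac{1}{\sqrt2}(\ket{+}_{\qreg{B_i}}\ket{x_i^0}_{\qreg{A_i}} + \ket{-}_{\qreg{B_i}}\ket{x_i^1}_{\qreg{A_i}})$. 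Since $\qA$ acts only on the $\qreg{A_i}$ registers, the $\qreg{B_i}$ measurements can be deferred to the end without changing any output distribution. This purified view is what makes the core obstruction transparent.

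Next I would pass to the event $\mathsf{Distinct}$ that the $2q$ sampled values $\{x_i^0, x_i^1\}_{i\in[q]}$ are pairwise distinct. A birthday bound gives $\Pr[\neg\mathsf{Distinct}] \le \binom{2q}{2}/N \le 2q^2/N$, and under $\mathsf{Distinct}$ the index $s$ with $\msg \in Q_s$ is \emph{uniquely} determined by $\msg$, so the target register $\qreg{B_s}$ is well defined. Thus $\Pr[\win] \le \Pr[\win \wedge \mathsf{Distinct}] + 2q^2/N$, and it remains to bound the first term.

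The core observation is that genuinely extracting a preimage collapses the corresponding phase: projecting $\qreg{A_s}$ onto $\ket{x_s^0}$ (resp.\ $\ket{x_s^1}$) leaves $\qreg{B_s}$ in $\ket{+}$ (resp.\ $\ket{-}$), whose computational-basis measurement is a fair coin independent of the bit $g_s(x_s^0, x_s^1)$ that $\qA$ has committed to. Hence any strategy that truly reads out a value of $Q_s$ from the state matches $b_s$ with probability exactly $1/2$; the only way to exceed $1/2$ is to \emph{name} a value in $\bigcup_i Q_i$ without disturbing the relevant $\qreg{A_i}$, i.e.\ to guess. Because $\lvert \bigcup_i Q_i\rvert \le 2q$ inside a domain of size $N$, the total amplitude available to such a guess is $O(\sqrt{q/N})$. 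Making this quantitative exactly as in~\cite{TCC:BKMPW23}, I would bound the overlap between ``the output register holds a correct preimage for the adaptively chosen $s$'' and ``the committed bit predicts the deferred $\qreg{B_s}$ measurement'', showing it exceeds $1/2$ only by the guessing contribution, giving $\Pr[\win \wedge \mathsf{Distinct}] \le 1/2 + O(q/\sqrt N)$. Crucially this must be a single calculation on the product state: conditioning on or guessing $s$ up front would cost a multiplicative factor of $q$ and ruin the bound.

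Combining the two estimates yields $\Pr[\win] \le 1/2 + O(q/\sqrt N) + 2q^2/N$. Choosing $N = \Theta(q^2 t^2)$ makes the dominant term $O(q/\sqrt N) = O(1/t)$ and the collision term $O(1/t^2)$, which gives the claimed $1/2 + 1/t$; for the explicit statement I would track the hidden constants and check that $N = 128q^2$ forces both error terms below $1/4$, so $\Pr[\win] \le 3/4$. The \textbf{main obstacle} is precisely this direct product-state analysis: the target $s$ is chosen adaptively by $\qA$ through where $\msg$ falls, so the incompatibility of ``extracting a preimage at $s$'' and ``retaining the phase at $s$'' must be established uniformly over all $s$ at once, with the incompatibility degrading only through the domain-size guessing term. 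This is the exact point where the one-state argument of~\cite{TCC:BKMPW23} has to be generalized to $q$ i.i.d.\ states.
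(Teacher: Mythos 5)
Your proposal is correct and takes essentially the same route as the paper: your purification of the phase bits is the paper's $\Hyb_1$, your collapse-versus-guess dichotomy at the adaptively determined index $s$ is exactly what the paper's hybrids $\Hyb_2$--$\Hyb_4$ formalize (a coherent check that $\msg$ matches the Hadamard-collapsed branch, whose removal costs $2\sqrt{p}$ by gentle measurement, with $p$ bounded by the same birthday-plus-guessing estimate $O(q^2/N)+q/N$, and a fair-coin argument giving the $1/2$ baseline once the check passes). The only difference is presentational---you sketch a single overlap calculation in the style of Bartusek et al., where the paper spells the identical argument out as an explicit hybrid sequence with deferred/commuting measurements---so there is no gap.
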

\fuyuki{I thought `` $\le \frac12 +
\frac1{\poly(\secp)}$'' for unspecified polynomial does not make sense ($\poly(\secp)$ could be $1$).}
\nikhil{Changed the theorem statement.}

\begin{proof}
We will consider the following sequence of hybrids:

\begin{description}
\item $\underline{\Hyb_0(1^\secp)}:$ This is the same as the
experiment $\expb{\qA}{two}{sup}(1^\secp, q, N)$:

\begin{enumerate}
\item For each $i \in [q]$, $\qCh$ performs the following:
\begin{itemize}
    \item Sample $x_i^0, x_i^1 \la [N]$ such that $x_i^0 \neq x_i^1$ 
and $b_i \la \bit$. Define $Q_i$ as $Q_i \seteq \{x_i^0, 
x_i^1\}$.
\item Let $b$ be a placeholder for $b_i$. Construct the following on register $\qreg{A_i}$:
    $$\sigma_i \seteq \frac{1}{\sqrt{2}}\ket{x_i^0} +
    (-1)^{b}\frac{1}{\sqrt{2}}\ket{x_i^1}$$
\end{itemize}
\item $\qCh$ sends the registers $(\qreg{A_1}, \ldots, \qreg{A_q})$
to $\qA$.
\item $\qA$ sends $(g_1, \ldots, g_q)$ and a value $\msg$ to $\qCh$.
\item $\qCh$ checks if there exists $i \in [q]$ such that $\msg \in
Q_i$. If not, it outputs $\bot$. Let $s$ be such an index.
\item If $g_s(x_s^0, x_s^1) = b_s$ holds, $\qCh$ outputs
$\top$. Else, it outputs $\bot$.
\end{enumerate}

\item $\underline{\Hyb_1(1^\secp)}:$ This is similar to
$\Hyb_0(1^\secp)$, with the following differences colored in red:

\begin{enumerate}
\item For each $i \in [q]$, $\qCh$ performs the following:
\begin{itemize}
\item \textcolor{red}{Sample $x_i^0, x_i^1 \la [N]$ such that $x_i^0 \neq x_i^1$. Define $Q_i$ as $Q_i \seteq \{x_i^0, 
        x_i^1\}$.}

    \item \textcolor{red}{$\qCh$ constructs the following state $\sigma_i$ on registers
$\qreg{C_i}$ and $\qreg{A_i}$.
$$\sigma_i \seteq \frac12
\sum_{c \in
\bit}\ket{c}_{\qreg{C_i}}\otimes \big(\ket{x_i^0} +
(-1)^{c}\ket{x_i^1}\big)_{\qreg{A_i}}$$}
\end{itemize}

\item $\qCh$ sends the registers $(\qreg{A_1}, \ldots, \qreg{A_q})$
to $\qA$.
\item $\qA$ sends $(g_1, \ldots, g_q)$ and a value $\msg$ to $\qCh$.
\item $\qCh$ checks if there exists $i \in [q]$ such that $\msg \in
Q_i$. If not, it outputs $\bot$. Let $s$ be such an index.
\item \textcolor{red}{$\qCh$ measures register $\qreg{C_s}$ in the
computational basis to obtain outcome $c_s$.}
\item If $g_s(x_s^0, x_s^1) = \textcolor{red}{c_s}$ holds, $\qCh$ outputs
$\top$. Else, it outputs $\bot$.
\end{enumerate}

\item $\underline{\Hyb_2(1^\secp)}:$ This is similar to
$\Hyb_1(1^\secp)$, with the following differences colored in red:

\begin{enumerate}
\item For each $i \in [q]$, $\qCh$ performs the following:
\begin{itemize}
\item \textcolor{black}{Sample $x_i^0, x_i^1 \la [N]$ such that $x_i^0 \neq x_i^1$. Define $Q_i$ as $Q_i \seteq \{x_i^0, 
        x_i^1\}$.}

    \item \textcolor{black}{$\qCh$ constructs the following state $\sigma_i$ on registers
$\qreg{C_i}$ and $\qreg{A_i}$.
$$\sigma_i \seteq \frac12
\sum_{c \in
\bit}\ket{c}_{\qreg{C_i}}\otimes \big(\ket{x_i^0} +
(-1)^{c}\ket{x_i^1}\big)_{\qreg{A_i}}$$}
\end{itemize}

\item $\qCh$ sends the registers $(\qreg{A_1}, \ldots, \qreg{A_q})$
to $\qA$.
\item $\qA$ sends $(g_1, \ldots, g_q)$ and a value $\msg$ to $\qCh$.
\item $\qCh$ checks if there exists $i \in [q]$ such that $\msg \in
Q_i$. If not, it outputs $\bot$. Let $s$ be such an index.

\item \textcolor{red}{Let $\chec[x_s^0, x_s^1, \msg]$ be a function such that
$\chec[x_s^0, x_s^1, \msg](u) = \top$ if $\msg = x_s^u$ and $\bot$ otherwise.
Apply the following map to the register $\qreg{C_s}$ in the Hadamard
basis, and an ancilla register $\qreg{OUT}$ initialized to $\ket{0}$:
$$
\ket{u}_{\qreg{C_s}}\ket{w}_{\qreg{OUT}} \mapsto
\ket{u}_{\qreg{C_s}}\ket{w \xor \chec[x_s^0, x_s^1, \msg](u)}_{\qreg{OUT}}$$
}

\item \textcolor{red}{$\qCh$ measures register $\qreg{OUT}$ in the computational
basis to get outcome $\out$. If $\out = \bot$, output $\bot$.}

\item \textcolor{black}{$\qCh$ measures register $\qreg{C_s}$ in the
computational basis to obtain outcome $c_s$.}
\item If $g_s(x_s^0, x_s^1) = \textcolor{black}{c_s}$ holds, $\qCh$ outputs
$\top$. Else, it outputs $\bot$.
\end{enumerate}

\item $\underline{\Hyb_3(1^\secp)}:$ This is similar to
$\Hyb_2(1^\secp)$, with the following differences colored in red:

\begin{enumerate}
\item For each $i \in [q]$, $\qCh$ performs the following:
\begin{itemize}
\item \textcolor{black}{Sample $x_i^0, x_i^1 \la [N]$ such that $x_i^0 \neq x_i^1$. Define $Q_i$ as $Q_i \seteq \{x_i^0, 
        x_i^1\}$.}

    \item \textcolor{black}{$\qCh$ constructs the following state $\sigma_i$ on registers
$\qreg{C_i}$ and $\qreg{A_i}$.
$$\sigma_i \seteq \frac12
\sum_{c \in
\bit}\ket{c}_{\qreg{C_i}}\otimes \big(\ket{x_i^0} +
(-1)^{c}\ket{x_i^1}\big)_{\qreg{A_i}}$$}
\end{itemize}

\item $\qCh$ sends the registers $(\qreg{A_1}, \ldots, \qreg{A_q})$
to $\qA$.
\item $\qA$ sends $(g_1, \ldots, g_q)$ and a value $\msg$ to $\qCh$.
\item $\qCh$ checks if there exists $i \in [q]$ such that $\msg \in
Q_i$. If not, it outputs $\bot$. Let $s$ be such an index.

\item \textcolor{red}{$\qCh$ measures registers $\qreg{C_1}, \ldots,
\qreg{C_q}$ in the Hadamard basis to get outcomes $c_1' \ldots, c_s'$
respectively. For $u \seteq c_s'$, if $\msg \neq x_s^u$, $\qCh$
outputs $\bot$.}

\item \textcolor{black}{$\qCh$ measures register $\qreg{C_s}$ in the
computational basis to obtain outcome $c_s$.}
\item If $g_s(x_s^0, x_s^1) = \textcolor{black}{c_s}$ holds, $\qCh$ outputs
$\top$. Else, it outputs $\bot$.
\end{enumerate}

\item $\underline{\Hyb_4(1^\secp)}:$ This is similar to
$\Hyb_3(1^\secp)$, with the following differences colored in red:

\begin{enumerate}
\item For each $i \in [q]$, $\qCh$ performs the following:
\begin{itemize}
\item \textcolor{black}{Sample $x_i^0, x_i^1 \la [N]$ such that $x_i^0 \neq x_i^1$. Define $Q_i$ as $Q_i \seteq \{x_i^0, 
        x_i^1\}$.}

    \item \textcolor{black}{$\qCh$ constructs the following state $\sigma_i$ on registers
$\qreg{C_i}$ and $\qreg{A_i}$.
$$\sigma_i \seteq \frac12
\sum_{c \in
\bit}\ket{c}_{\qreg{C_i}}\otimes \big(\ket{x_i^0} +
(-1)^{c}\ket{x_i^1}\big)_{\qreg{A_i}}$$}
\end{itemize}

\item \textcolor{red}{$\qCh$ measures registers $\qreg{C_1}, \ldots,
        \qreg{C_q}$ in
the Hadamard basis to get outcomes $c_1', \ldots, c_q'$.}

\item $\qCh$ sends the registers $(\qreg{A_1}, \ldots, \qreg{A_q})$
to $\qA$.
\item $\qA$ sends $(g_1, \ldots, g_q)$ and a value $\msg$ to $\qCh$.
\item $\qCh$ checks if there exists $i \in [q]$ such that $\msg \in
Q_i$. If not, it outputs $\bot$. Let $s$ be such an index.

\item \textcolor{red}{For $u \seteq c_s'$, if $\msg \neq x_s^u$, $\qCh$
outputs $\bot$.}

\item \textcolor{black}{$\qCh$ measures register $\qreg{C_s}$ in the
computational basis to obtain outcome $c_s$.}
\item If $g_s(x_s^0, x_s^1) = \textcolor{black}{c_s}$ holds, $\qCh$ outputs
$\top$. Else, it outputs $\bot$.
\end{enumerate}

\end{description}

\begin{remark}
The statements of the following claims are meant for arbitrary
$t\in\mathbb{N}$ and appropriately chosen $N = O(t^2q^2)$.
\end{remark}

\begin{claim}
The probability that $\qCh$ outputs $\bot$ in Step 6. of $\Hyb_4$ is
at most $1/4t^2$.
\end{claim}
\begin{proof}
Notice that for $i \in [q]$, the states $\sigma_i$ are of the
following form:

$$\sigma_i = \frac12
\sum_{c\in\bit}\ket{c}_{\qreg{C_i}}\otimes\big(\ket{x_i^0} +
    (-1)^c\ket{x_i^1}\big) =
    \frac{1}{\sqrt2}\big(\ket{+}_{\qreg{C_i}}\ket{x_i^0}_{\qreg{A_i}} +
    \ket{-}_{\qreg{C_i}}\ket{x_i^1}_{\qreg{A_i}}\big)$$

Since $\qCh$ measures the registers $\qreg{C_1}, \ldots, \qreg{C_q}$
in the Hadamard basis to get $c_1', \ldots, c_q'$, when $\qA$
receives the registers $\qreg{A_1}, \ldots, \qreg{A_q}$, the states
are of the following form:

$$\ket{c_1'}_{\qreg{C_1}}\ket{x_1^{c_1'}}_{\qreg{A_1}}, \ldots,
\ket{c_q'}_{\qreg{C_q}}\ket{x_q^{c_q'}}_{\qreg{A_q}}$$

Now, let $Q$ be a multi-set such that $Q \seteq Q_1 \cup \ldots \cup
Q_q$. Let $\Event$ be the event that all the elements of $Q$ are
distinct. It is easy to see from a birthday bound analysis that if
$N = O(q^2 t^2)$, we can
ensure that the probability $\Event$ does \emph{not} occur is bounded by
$1/8t^2$. Consider now the set $S \seteq
Q \setminus \{x_1^{c_1'}, \ldots, x_q^{c_q'}\}$. Conditioned on
$\Event$ occurring, $S$ is a uniformly random subset of $[N]$ of size
$q$. Notice that for $\qA$ to cause an abort, it must produce $\msg$
such that $\msg \in S$. However, the only information $\qA$ has are
the values $x_1^{c_1'}, \ldots, x_q^{c_q'}$ which are
independent of the values in $S$.
Therefore, we have that $\Pr[\msg \in S | \Event]
\le q/N$, which is less than $1/8t^2$ for appropriate $N =
O(q^2t^2)$. Consequently, the probability that $\qCh$ outputs $\bot$
in Step 6. of $\Hyb_4$ is bounded by $1/8t^2 + 1/8t^2 = 1/4t^2$.
\fuyuki{If I understand correctly, this claim want to bound the probability that the experiment aborts as the last sentence of the proof says, not the probability that the experiment outputs $\bot$ as the statement of the claims says. This is confusing. We should not use the same symbol to indicate the adversary's failure and the experiment's (artificial) abort.}
\nikhil{Changed last sentence of the proof to match the claim.}
\end{proof}

\begin{claim}
The probability that $\qCh$ outputs $\bot$ in Step 5. of $\Hyb_3$ is
at most $1/4t^2$.
\end{claim}
\begin{proof}
This follows from the fact that operations performed on different
registers commute. Hence, the Hadamard measurements on $\qreg{C_1},
\ldots, \qreg{C_q}$ can be performed after $\qA$ submits its response,
without altering the probability of $\bot$ from $\Hyb_4$.
\end{proof}

\begin{claim}
The probability that $\qCh$ outputs $\bot$ in Step 6. of $\Hyb_2$ is
at most $1/4t^2$.
\end{claim}
\begin{proof}
This follows directly because the same check as the one introduced in
$\Hyb_3$ is applied coherently in $\Hyb_2$.
\end{proof}

\begin{claim}
Probability $\qCh$ outputs $\top$ in $\Hyb_2$ is at most $1/2$.
\end{claim}
\begin{proof}
Notice that the measurement on register $\qreg{OUT}$ necessarily collapses
$\qreg{C_s}$ in the Hadamard basis. Consequently, the following
computational basis measurement on $\qreg{C_s}$ yields a truly random
bit $c_s$ that is independent of $x_s^0, x_s^1$ and $g_s$.
\end{proof}

\begin{claim}
Probability $\qCh$ outputs $\top$ in $\Hyb_1$ is at most $1/2 + 1/t$.
\end{claim}
\begin{proof}
The only difference between $\Hyb_1$ and $\Hyb_2$ is the measurement
on register $\qreg{OUT}$ in $\Hyb_2$. Since we argued that this
measurement outputs $\bot$ with probability at most $1/4t^2$,
it follows from the gentle measurement lemma (Lemma
\ref{lma:gentle}) that these hybrids are $1/t$ close in
trace distance.
\end{proof}

\fuyuki{I did not see the reason why we need $\Hyb_2$. Why can't we go from $\Hyb_1$ to $\Hyb_3$ directly? (If I remember correctly, there isn't something like $\Hyb_2$ in the proof of BKM+.)}
\nikhil{I thought it was easier to understand this way. In $\Hyb_2$,
since the measurement is performed coherently, it would only end up
collapsing $\qreg{C_s}$. Also, I think the gentle measurement argument
relating $\Hyb_1$ and $\Hyb_2$ makes more sense due to the coherent
measurement.}

Notice that in $\Hyb_1$, if the measurement on register $\qreg{C_s}$
were performed at the beginning itself, then it is equivalent to
$\Hyb_0$. Since operations on different registers commute, this
measurement can be performed after the response of $\qA$. This
proves that for every $q, t \in \mathbb{N}$,
$\Pr[\expb{\qA}{two}{sup}(1^\secp, q, N) \ra \top] \le 1/2 +
1/t$ for some $N = O(t^2q^2)$. It is also easy to see that for every
$q \in \mathbb{N}$ and $N=128q^2$,
$\Pr[\expb{\qA}{two}{sup}(1^\secp, q, N) \ra \top] \le 3/4$.
\end{proof}

Next, consider the experiment $\expb{\qA}{two}{sup}(1^\secp, k, q, N)$
that corresponds to the $k$-fold parallel repetition of
$\expb{\qA}{two}{sup}(1^\secp, q, N)$. That is, $\qCh$ executes $k$
independent instances of $\expb{\qA}{two}{sup}(1^\secp, q, N)$ with
the adversary $\qA$, and outputs $\top$ only if each of the $k$
experiments outputs $\top$. We utilize the following theorem
(paraphrased) of Bostanci et al. \cite{STOC:BQSY24} to argue
security of the parallel repetition experiment:

\begin{theorem}[Quantum Parallel Repetition
\cite{STOC:BQSY24}]\label{thm:||}
Let $\Pi$ be a quantum interactive game with at-most 3-messages
and $\epsilon$-soundness against QPT adversaries. Then, the $k$-fold
parallel repetition $\Pi^k$ has $(\epsilon^k +
\negl(\secp))$-soundness against QPT adversaries.
\end{theorem}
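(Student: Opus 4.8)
The plan is to treat Theorem \ref{thm:||} as a black-box import from \cite{STOC:BQSY24}, since it is a general statement about \emph{all} at-most-3-message quantum games rather than something specific to our two-superposition game. If I instead had to reprove it, I would argue by the contrapositive through an approximate sub-multiplicativity of the game value. Model $\Pi$ so that the adversary sends a first message, receives a uniform challenge, responds, and the verifier applies a fixed binary projective check (fewer messages, or a challenge-first ordering, are handled identically); the repeated game $\Pi^k$ issues $k$ independent challenges and accepts iff all $k$ coordinate checks accept. Assume some QPT $\qA$ wins $\Pi^k$ with probability $\epsilon^k + \gamma$ for non-negligible $\gamma$, and let $\mathrm{val}(\cdot)$ denote the optimal QPT winning probability.

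The core step I would establish is the approximate sub-multiplicative bound $\mathrm{val}(\Pi^k) \le \mathrm{val}(\Pi)\cdot\mathrm{val}(\Pi^{k-1}) + \negl(\secp)$. Iterating it $k = \poly(\secp)$ times gives $\mathrm{val}(\Pi^k) \le \epsilon^k + k\cdot\negl(\secp) = \epsilon^k + \negl(\secp)$, contradicting the assumed $\gamma$. To prove a single sub-multiplicative step, one splits $\Pi^k$ into its first coordinate and the tail $\Pi^{k-1}$, and bounds the joint all-accept probability by (the value on the first coordinate) times (the conditional value on the tail). The natural reduction is a coordinate-embedding one: a single-copy adversary $\qB$ plants the externally received challenge into a random coordinate $i^*$, simulates the remaining $k-1$ coordinates internally, runs $\qA$, and forwards $\qA$'s response in coordinate $i^*$ conditioned on the simulated coordinates accepting.

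The main obstacle---and the reason this is a substantive result rather than a routine exercise---is that this conditioning is genuinely quantum: measuring whether the simulated coordinates accepted disturbs the register $\qA$ uses to answer coordinate $i^*$, so one cannot naively condition as in the classical case, nor can one simply copy $\qA$'s state. My plan would be to phrase each coordinate's acceptance as a binary projective measurement and use the $(\epsilon,\delta)$-almost-projective estimation procedure $\API$ (Theorem \ref{thm:api}) to \emph{estimate} the residual value on the remaining coordinates without collapsing the state, together with the state-repair / CMSZ rewinding procedure (Theorem \ref{thm:repair}) to undo the information extracted about those coordinates so that the planted coordinate's register survives for the response. Each estimate-and-repair step costs only $\negl(\secp)$ error by the guarantees of Theorems \ref{thm:api} and \ref{thm:repair}, and the hard part is controlling the \emph{accumulation} of this error across the $k = \poly(\secp)$ inductive steps while preserving the exact exponent $\epsilon^k$ (a lossy analysis would only yield something like $\epsilon^{\Omega(k)}$ or an additive polynomial slack). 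Making this accounting tight is precisely the technical heart of \cite{STOC:BQSY24}, which is why we import their theorem directly.
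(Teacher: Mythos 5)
Your proposal takes exactly the paper's approach: Theorem~\ref{thm:||} is stated as a paraphrased import from \cite{STOC:BQSY24} with no proof given in the paper, so treating it as a black-box citation is precisely what is done. Your auxiliary sketch (contrapositive via approximate sub-multiplicativity of the game value, coordinate embedding, and $\API$-estimation plus state repair to handle the quantum conditioning) is a reasonable and broadly accurate summary of the technical difficulties addressed in \cite{STOC:BQSY24}, but it is neither part of nor needed for the paper's treatment.
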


Note that in this theorem, $\epsilon$-soundness means that
the challenger outputs $\top$ with probability at most $\epsilon$, in an
interactive game with an adversary. As a consequence of Theorem
\ref{thm:||}, \cref{thm:two-sup} immediately gives us the following:

\begin{theorem}\label{thm:two-sup-par}
For every $q \in \mathbb{N}$, $N=128q^2$ and $k=\secp$, the following
holds for every QPT adversary $\qA$:

$$\Pr[\expb{\qA}{two}{sup}(1^\secp, k, q, N) \ra \top] \le \negl(\secp)$$
\end{theorem}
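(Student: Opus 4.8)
The plan is to derive this directly from the single-shot bound of \cref{thm:two-sup} and the quantum parallel-repetition theorem of Bostanci et al. (\cref{thm:||}). First I would verify that the experiment $\expb{\qA}{two}{sup}(1^\secp, q, N)$ fits the format of a quantum interactive game with at most three messages: the challenger prepares the states $\sigma_1, \ldots, \sigma_q$ and sends the registers $(\qreg{A_1}, \ldots, \qreg{A_q})$ to $\qA$ (first message), the adversary $\qA$ replies with $(g_1, \ldots, g_q)$ together with a value $\msg$ (second message), and the challenger then performs a purely classical acceptance check on its private data. This is a two-message game, hence covered by \cref{thm:||}, and the predicate ``$\qCh$ outputs $\top$'' is exactly the soundness event referenced there.

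Next I would instantiate the single-shot soundness. Setting $N = 128q^2$ and invoking the second assertion of \cref{thm:two-sup}, every unbounded adversary---and in particular every QPT adversary---satisfies $\Pr[\expb{\qA}{two}{sup}(1^\secp, q, N) \ra \top] \le 3/4$. Thus the single instance $\Pi \seteq \expb{\qA}{two}{sup}(1^\secp, q, N)$ enjoys $\epsilon$-soundness with $\epsilon = 3/4$ against QPT adversaries.

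Finally, I would apply \cref{thm:||} to the $k$-fold parallel repetition $\Pi^k = \expb{\qA}{two}{sup}(1^\secp, k, q, N)$, which by construction accepts only when all $k$ independent instances accept. The theorem then gives $(\epsilon^k + \negl(\secp))$-soundness, and substituting $\epsilon = 3/4$ and $k = \secp$ yields $\Pr[\expb{\qA}{two}{sup}(1^\secp, k, q, N) \ra \top] \le (3/4)^\secp + \negl(\secp) = \negl(\secp)$, which is the claimed bound.

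The only step requiring genuine care is the first: confirming that the two-superposition game conforms precisely to the three-message interactive-game model of Bostanci et al., and that their soundness notion matches ``the challenger outputs $\top$'' here. The subtlety is that the parallel-repetition adversary may entangle its strategies across the $k$ instances and even correlate the responses $(g_i, \msg)$ in one copy with side information from another, rather than playing the copies independently. The whole point of appealing to a \emph{quantum} parallel-repetition theorem (as opposed to a naive product bound) is that \cref{thm:||} already accounts for such coordinated entangled strategies, so once the format correspondence is established, no further analysis of cross-instance correlations is needed and the conclusion follows by direct substitution.
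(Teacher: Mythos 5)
Your proposal is correct and matches the paper's own argument exactly: the paper likewise observes that the game has at most three messages, instantiates the single-shot bound of \cref{thm:two-sup} at $N = 128q^2$ to get $3/4$-soundness (noting the unbounded bound subsumes QPT adversaries), and applies the Bostanci et al.\ parallel-repetition theorem (\cref{thm:||}) with $k = \secp$ to conclude $(3/4)^\secp + \negl(\secp) = \negl(\secp)$. Your remark about entangled cross-instance strategies being handled by the quantum parallel-repetition theorem itself is precisely why the paper can state the corollary as an immediate consequence.
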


Note that these parameter choices are not tight, and are chosen to
demonstrate feasibility. We also consider the experiment
$\expb{\qA}{two}{sup}(1^\secp, k, \bot, N)$ that is defined similarly,
except $\qA$ specifies $q =
\poly(\secp)$ at the start of the experiment (i.e., $N$ doesn't depend
on $q$), which is used for each
of the $k$ repetitions. It is easy to see that the following
is implied by the proof of Theorem \ref{thm:two-sup}:

\begin{corollary}\label{cor:two-sup}
For $N = 2^\secp$, $k=\secp$, and every QPT adversary $\qA$, it holds
that:

$$\Pr[\expb{\qA}{two}{sup}(1^\secp, k, \bot, N) \ra \top] \le \negl(\secp)$$
\end{corollary}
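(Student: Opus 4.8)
The plan is to follow the template of \cref{thm:two-sup-par} almost verbatim, the only conceptual difference being that the domain $N=2^\secp$ is now fixed in advance rather than tailored to $q$. The key observation I would rely on is that an exponentially large domain comfortably dominates $128q^2$ for every polynomially bounded $q$, so the single-shot bound of $3/4$ remains available, after which quantum parallel repetition (\cref{thm:||}) finishes the job.

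Concretely, I would first fix an arbitrary QPT adversary $\qA$ and let $q=q(\secp)=\poly(\secp)$ be the (polynomially bounded) value it specifies at the start of $\expb{\qA}{two}{sup}(1^\secp,k,\bot,N)$. The first step is to argue that the single-shot bound survives the enlargement of the domain: I claim that for every $N\ge 128q^2$ one has $\Pr[\expb{\qA}{two}{sup}(1^\secp,q,N)\ra\top]\le 3/4$. This is not quite the literal statement of \cref{thm:two-sup} (which fixes $N=128q^2$), so I would reinspect its proof and note that $N$ enters the analysis only through the abort probability in the birthday-bound claim, via the two inequalities $\Pr[\neg\Event]\le 1/8t^2$ and $\Pr[\msg\in S\mid\Event]\le q/N$. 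Both quantities are monotonically non-increasing in $N$, so the derived bound $1/2+1/t$ (with $t=4$ giving $3/4$) holds for all $N$ at least the threshold, and in particular for $N=2^\secp$ once $\secp$ is large enough that $2^\secp\ge 128q^2$.

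The second step is the amplification. Once $q$ is fixed, the remainder of $\expb{\qA}{two}{sup}(1^\secp,k,\bot,N)$ is literally the $k$-fold parallel repetition of the two-message single-shot game $\expb{\qA}{two}{sup}(1^\secp,q,N)$, whose soundness I have just bounded by $3/4$. I would then invoke \cref{thm:||} with $\epsilon=3/4$ and $k=\secp$ to conclude $\Pr[\top]\le (3/4)^\secp+\negl(\secp)=\negl(\secp)$. To deal cleanly with the fact that $\qA$ itself chooses $q$, I would use the polynomial upper bound $q_{\max}(\secp)$ on the value $\qA$ can send (it is QPT), fix $\secp$ large enough that $2^\secp\ge 128\,q_{\max}^2$, and observe that the per-$q$ bound then holds uniformly over all values $\qA$ might send, so averaging over $\qA$'s choice preserves it.

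The main obstacle I anticipate is not the amplification --- which is a direct application of \cref{thm:||} --- but rather pinning down the quantifier order: the domain $N$ is committed to before the adversary reveals $q$, so I must be sure the single-shot $3/4$ bound genuinely holds for \emph{all} polynomial $q$ against this one fixed $N$, which is exactly what the monotonicity-in-$N$ observation about the birthday step secures. Everything else is bookkeeping.
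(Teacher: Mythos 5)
Your proposal is correct and takes essentially the same route the paper intends: the corollary is stated to follow from the proof of \cref{thm:two-sup} precisely because $N$ enters that proof only through the abort bounds $\Pr[\lnot\Event]$ and $q/N$, both monotone non-increasing in $N$, so with $N=2^\secp$ the single-shot bound (indeed $1/2+\negl(\secp)$, and a fortiori $3/4$) holds uniformly for every polynomial $q$, after which \cref{thm:||} amplifies exactly as in \cref{thm:two-sup-par}. Your quantifier-order concern is the right one to flag; a marginally cleaner way to dispose of it than averaging over $\qA$'s choice of $q$ is to fold the transmission of $q$ into the game as a first message (still at most 3 messages, so \cref{thm:||} applies directly, and the repeated game with per-coordinate $q_i$ only gives the adversary more power than the actual shared-$q$ experiment).
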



\section{SKL from Multi-Level Traitor Tracing}\label{sec:const}
We now construct an SKL scheme $\SKL$ for application $(\qF, \qE,
t)$ using an MLTT scheme $\MLTT$ for $(\qF, \qE, t)$. To
guarantee $q$-bounded standard-KLA security of $\SKL$, we assume
that $\MLTT$ admits an identity space $[N]$ for any $N =
\poly(\secp)$. If $\MLTT$ admits identity space $[2^\secp]$,
then $\SKL$ satisfies unbounded standard-KLA security. The
MLTT scheme consists of algorithms $\MLTT.(\Setup, \KG, \Eval,
\qTrace)$ and the SKL scheme's algorithms $(\Setup, \qKG, \qEval,
\qDel, \Vrfy)$ are as follows:

\begin{description}
\item $\underline{\Setup(1^\secp, q)}:$
\begin{enumerate}
\item Define $k \seteq \secp$. If $q = \bot$, define $N \seteq 2^\secp$
(represented succinctly). Otherwise, define $N \seteq 128q^2$.
\item Compute $(\mltt.\msk, f, \aux_f) \la \MLTT.\Setup(1^\secp, N, k)$.
\item Output $(\msk \seteq (\mltt.\msk, N, k), f, \aux_f)$.

\end{enumerate}

\item $\underline{\qKG(\msk)}:$
\begin{enumerate}
\item Parse $\msk = (\mltt.\msk, N, k)$.
\item For each $i \in [k]$, do the following:
\begin{itemize}
\item Sample $v_i, w_i \gets [N]$ and $b_i \gets \bit$.
\item Compute $\sk_{i,v} \gets \MLTT.\KG(\mltt.\msk, i, v_i)$.
\item Compute $\sk_{i,w} \gets \MLTT.\KG(\mltt.\msk, i, w_i)$.
\item Compute $\vk_i = (v_i, w_i, \sk_{i, v}, \sk_{i, w}, b_i)$.
\item Compute the following state on registers $\qreg{C_i, D_i}$:
    $$\rho_i \seteq
    \frac{1}{\sqrt2}\ket{v_i}_{\qreg{C_i}}\ket{\sk_{i,v}}_{\qreg{D_i}} +
    (-1)^{b_i}\cdot\frac{1}{\sqrt2}\ket{w_i}_{\qreg{C_i}}\ket{\sk_{i,w}}_{\qreg{D_i}}$$
\end{itemize}
\item Output $\qsk \seteq (\rho_i)_{i \in [k]}$ and $\vk \seteq
(\vk_i)_{i \in [k]}$.
\end{enumerate}

\item $\underline{\qEval(\qsk, x)}:$
\begin{enumerate}
\item Parse $\qsk$ as $\qsk = (\rho_i)_{i\in[k]}$. For each
$i\in[k]$, parse $\rho_i$ as a state on registers $\qreg{C_i}$ and
$\qreg{D_i}$.
Define the registers $\qreg{C} \seteq \qreg{C_1} \otimes \ldots \otimes
\qreg{C_k}$ and $\qreg{D} \seteq \qreg{D_1} \otimes \ldots \otimes
\qreg{D_k}$.

\item Apply the following map, where $\qreg{OUT}$ is a register
    initialized to $\ket{0\ldots0}$.
$$\ket{v}_{\qreg{C}}\ket{w}_{\qreg{D}}\ket{z}_{\qreg{OUT}} \mapsto
\ket{v}_{\qreg{C}}\ket{w}_{\qreg{D}}\ket{z \xor \MLTT.\Eval(w,x)}_{\qreg{OUT}}
$$

\item Measure the register $\qreg{OUT}$ to obtain an outcome $y$, and output it.
\end{enumerate}

\item $\underline{\qDel(\qsk)}:$
\begin{enumerate}
\item Parse $\qsk$ as $\qsk = (\rho_i)_{i\in[k]}$.
\item For each $i\in[k]$, parse $\rho_i$ as a state on registers
$\qreg{C_i, D_i}$ and measure $\qreg{C_i, D_i}$ in the Hadamard basis to get outcomes $c_i, d_i$.
\item Output $\cert \seteq (c_i, d_i)_{i\in[k]}$.
\end{enumerate}
\item $\underline{\Vrfy(\vk, \cert)}:$
\begin{enumerate}
\item Parse $\cert$ as $\cert \seteq (c_i, d_i)_{i\in[k]}$ and
$\vk$ as $\vk = (\vk_i)_{i\in[k]}$.
\item For each $i\in[k]$, execute the following:
\begin{itemize} 
\item Parse $\vk_i$ as $\vk_i = (v_i, w_i, \sk_{i,v}, \sk_{i,w},
b_i)$.
\item If $\langle(v_i \| \sk_{i,v}) \xor (w_i \|
\sk_{i,w}), c_i \| d_i\rangle \neq b_i$, output $\bot$.
\end{itemize} 
\item Output $\top$.
\end{enumerate}
\end{description}

\paragraph{Evaluation Correctness:}
Observe that $\qEval$ applies the following map:
$$\ket{v}_{\qreg{C}}\ket{w}_{\qreg{D}}\ket{z}_{\qreg{OUT}} \mapsto
\ket{v}_{\qreg{C}}\ket{w}_{\qreg{D}}\ket{z \xor \MLTT.\Eval(w,x)}_{\qreg{OUT}}$$

Since the register $\qreg{D}$ contains $k$ MLTT secret keys (one for
each level), evaluation correctness of $\MLTT$ allows to
compute valid outputs on $\qreg{OUT}$, except for a negligible
fraction of superposition terms. Moreover, the deterministic
evaluation property of $\MLTT$ ensures that some $1-\negl(\secp)$
fraction of outputs are identical. By the gentle measurement lemma, it
follows that the state can be re-used for arbitrary polynomially many
evaluations.

\paragraph{Verification Correctness:}

This follows directly from the fact that measuring
$\frac{1}{\sqrt2}(\ket{x} + (-1)^b\ket{y})$ in the Hadamard basis
yields a value $d$ such that $d\cdot(x \xor y) = b$.

\begin{theorem}\label{thm:const}
The SKL scheme $\SKL$ satisfies $q$-bounded
standard-KLA security for any $q = \poly(\secp)$, assuming $\MLTT$ is
an MLTT scheme satisfying traceability.
\end{theorem}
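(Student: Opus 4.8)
The plan is to show that a QPT adversary $\qA$ winning $\expb{\SKL,\qA}{std}{kla}(1^\secp, q, \qF, \qE, t, \epsilon)$ with non-negligible probability contradicts either the traceability of $\MLTT$ (Definition \ref{def:mltt-tr}) or the collusion guarantee for two-superposition states (Theorem \ref{thm:two-sup-par}). Recall $\qA$ wins iff (i) $\Vrfy(\vk_j, \cert_j) = \top$ for every $j \in [q]$, and (ii) the returned program $\qP^*$ passes the $\epsilon$-good test. I would first record a clean algebraic correspondence: writing $\cert^j = (c_i^j, d_i^j)_{i\in[k]}$ and $h(i, \cdot) \seteq \MLTT.\KG(\mltt.\msk, i, \cdot)$, the $\Vrfy$ check at position $(i,j)$ is exactly $c_i^j \cdot (v_i^j \xor w_i^j) \xor d_i^j \cdot (\sk_{i,v}^j \xor \sk_{i,w}^j) = b_i^j$, which equals $g_i^j(v_i^j, w_i^j) = b_i^j$ for the efficiently computable function $g_i^j(x,y) \seteq c_i^j \cdot (x \xor y) \xor d_i^j \cdot (h(i,x) \xor h(i,y))$. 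Thus event (i) is precisely the statement that all the functions $g_i^j$ are ``correct'' in the sense of the two-superposition game.

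The heart of the argument is to introduce the counterfactual tracing outcome and partition on it. Setting $Q_i \seteq \{v_i^j, w_i^j\}_{j \in [q]}$, consider the experiment that plays the SKL game and then additionally applies $\qTrace(\mltt.\msk, \qP^*, 0.9\epsilon)$ to the post-good-test program to obtain $(\msg_1, \ldots, \msg_k)$; let $\mathsf{TI}$ be the event that $\msg_i \in Q_i$ for all $i \in [k]$. Since the good-test outcome is recorded before $\qTrace$ is applied, one may write
\[
\Pr[\qA \text{ wins}] = \Pr[\text{(i)} \wedge \text{(ii)} \wedge \mathsf{TI}] + \Pr[\text{(i)} \wedge \text{(ii)} \wedge \neg\mathsf{TI}],
\]
and I would bound the two terms separately.

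The $\neg\mathsf{TI}$ term is controlled by MLTT traceability via a reduction $\qR'$ that requests $2q$ identities per level, pairs the returned classical pairs $(\id_i^{2j-1}, \sk_i^{2j-1}), (\id_i^{2j}, \sk_i^{2j})$ into the two branches of $\rho_i^j$, samples fresh phases $b_i^j$, thereby perfectly reproduces $\qKG$ and simulates $\qA$, and forwards $\qP^*$ to its challenger. Because the MLTT challenger itself performs the good-test and then $\qTrace$ (steps 5--6 of Definition \ref{def:mltt-tr}) in the same order as our experiment, the event ``good $\wedge$ trace outside $\prod_i Q_i$'' is exactly $\qR'$ breaking traceability; dropping the certificate-only (hence measurement-independent) event (i) only enlarges the probability, so this term is $\negl(\secp)$. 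The $\mathsf{TI}$ term is controlled by the parallel two-superposition game (Theorem \ref{thm:two-sup-par}) with $(q, N = 128q^2, k = \secp)$ via a reduction $\qR$ that receives the registers holding $\sigma_i^j = \tfrac{1}{\sqrt2}(\ket{v_i^j} + (-1)^{b_i^j}\ket{w_i^j})$, samples $\mltt.\msk$ itself, coherently appends keys via the deterministic map $\ket{x} \mapsto \ket{x}\ket{h(i,x)}$ to turn $\sigma_i^j$ into $\rho_i^j$ (an exact copy of $\qKG$), runs $\qA$, applies the good-test, runs $\qTrace$, and submits $(\msg_1, \ldots, \msg_k)$ together with the $g_i^j$ read off the certificates. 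Whenever (i), (ii), and $\mathsf{TI}$ hold, each $\msg_i$ lies in some $Q_i^s$ with $g_i^s$ correct, so all $k$ parallel instances accept and $\qR$ wins, forcing this term to be $\negl(\secp)$ as well. Summing the two bounds yields $\Pr[\qA \text{ wins}] \le \negl(\secp)$.

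The main obstacle I anticipate is the bookkeeping around non-commuting measurements: the SKL game applies only the good-test, whereas the partition requires the joint distribution of the good-test \emph{followed by} $\qTrace$, so both reductions must fire the good-test before $\qTrace$ in precisely the order used by the MLTT challenger, and one must justify factoring out event (i) as a deterministic function of the classical certificates output alongside $\qP^*$. A secondary technicality is the degenerate case $v_i^j = w_i^j$, which makes $\rho_i^j$ ill-defined; this is handled by sampling $v_i^j \neq w_i^j$ (matching the two-superposition game so that $\qR$'s simulation stays perfect) and checking that the residual mismatch with the independent sampling of the MLTT challenger in $\qR'$ is accounted for by the parameters.
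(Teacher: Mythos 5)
Your proposal matches the paper's proof essentially step for step: the same partition on the counterfactual tracing outcome (the paper's events $\APILive_\epsilon$ and $\GoodExt_\epsilon$), the same reduction to MLTT traceability that requests $2q$ identities per level, pairs them into the two branches, and resamples the phases $b_i^j$, and the same reduction to the parallel two-superposition game (\cref{thm:two-sup-par}) with $(k,N)=(\secp,128q^2)$ that coherently appends keys via $\ket{x}\mapsto\ket{x}\ket{\MLTT.\KG(\mltt.\msk,i,x)}$, reads the functions $g_i^j$ off the certificates, and uses exactly the algebraic identity equating certificate validity with $g_i^j(v_i^j,w_i^j)=b_i^j$. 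Your additive decomposition of $\Pr[\Win]$ is only a cosmetic variant of the paper's case split on whether $\Pr[\lnot\GoodExt_\epsilon\mid\APILive_\epsilon]$ is negligible, and your explicit flagging of the $v_i^j=w_i^j$ degeneracy (the construction samples these independently while the two-superposition challenger enforces distinctness) is a careful detail the paper glosses over rather than a genuinely different argument.
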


\begin{proof}
Let $\qA$ be a QPT adversary in
$\expb{\SKL,\qA}{std}{kla}(1^\secp, q, \qF,\qE,t,\epsilon)$ for some
$\epsilon = 1/\poly(\secp)$ and let $\win$
denote the event that $\qA$ wins the SKL experiment. Assume $\win$
occurs with $\nonnegl(\secp)$ probability. Let $(\qsk^1, \vk^1),
\ldots, (\qsk^q, \vk^q)$ denote the leased secret-key and
verification-key pairs sampled by $\qKG$ in the experiment. For each
$j \in [q]$, we have $\qsk^j = (\qsk^j_i)_{i\in[k]}$ and $\vk^j =
(\vk_i^j)_{i\in[k]}$ by construction, where $\vk_i^j = (v_i^j,
w_i^j, \sk_{i,v}^j, \sk_{i,w}^j, b_i^j)$. For each $(i, j) \in [k]
\times [q]$, define the set $Q_i^j = \{v_i^j, w_i^j\}$. For each $i
\in [k]$, let $Q_i \seteq Q_i^1 \cup \ldots \cup Q_i^q$. Recall that
$\qA$ outputs certificates $\cert^1, \ldots, \cert^q$ and a quantum
program $\qP^*$.

Let $(\id_1^*, \ldots, \id_k^*)$ be computed as $(\id_1^*, \ldots,
\id_k^*) \la \MLTT.\qTrace(\mltt.\msk, \qP^*, \allowbreak0.9\epsilon)$ and
$\GoodExt_\epsilon$ denote the event that
$(\id_1^*, \ldots, \id_k^*) \in Q_1 \times \ldots \times Q_k$. Let
$\APILive_\epsilon$ denote the event that the $\epsilon$-good test wrt
$(f, \qE, t)$ outputs 1 when applied to $\qP^*$.

Assume
for contradiction that $\Pr[\lnot \GoodExt_\epsilon \mid
\APILive_\epsilon] =
\nonnegl(\secp)$. We will then construct the following reduction
algorithm $\qR^\qA$ that breaks the traceability of the MLTT scheme
$\MLTT$ for $N = 128q^2$ and $k = \secp$:

\begin{description}
\item \underline{Execution of $\qR^\qA$ in
    $\expb{\MLTT,\qR}{multi}{trace}(1^\secp,\qF,\qE,t,\epsilon, N, k)$}:

\begin{enumerate}
\item $\qCh$ samples $(\mltt.\msk, f, \aux_f) \la \MLTT.\Setup(1^\secp, N, k)$ and
sends $\aux_f$ to $\qR$. $\qR$ sends $\aux_f$ to $\qA$.
\item $\qR$ sends $2q$ to $\qCh$.
\item For each $(i, j) \in [k] \times [2q]$, $\qCh$ samples $\id_i^j
\la [N]$ and computes $\sk_i^j \la \KG(\mltt.\msk, i, \id_i^j)$. It sends
$\{\id_i^j, \sk_i^j\}_{(i,j)\in [k] \times [2q]}$ to $\qR$.
Define $Q_i$ to be the multi-set $Q_i \seteq
\{\id_i^j\}_{j \in [2q]}$.

\item For each $(i, j) \in [k] \times [q]$, $\qR$ does the
following:
\begin{itemize}
    \item Sample $b_i^j \la \bit$. Set $v_i^j \seteq \id_i^j$ and
        $w_i^j \seteq \id_i^{q+j}$.
\item Compute $\sk_{i,v}^j \seteq \sk_i^j$ and
    $\sk_{i,w}^j \seteq \sk_i^{q+j}$.
\item Compute the state $\rho_i^{j}$ on registers $\qreg{C_i^j,
D_i^j}$ similar to that of $\SKL.\qKG$.
\end{itemize}
\item For each $j \in [q]$, $\qR$ computes $\qsk^j \seteq
(\rho^j_i)_{i\in[k]}$ and $\vk^j \seteq (\vk^j_i)_{i\in[k]}$ where
$\vk_i^j \seteq (v_i^j, w_i^j, \sk_{i,v}^j, \sk_{i,w}^j, b_i^j)$.
Then, it sends $\qsk^1, \ldots, \qsk^q$ to $\qA$.
\item $\qA$ sends $(\cert^1, \ldots, \cert^q)$ and a quantum program $\qP^* =
(U, \rho)$ to $\qR$.\ryo{Shouldn't $\qR$ check
$(\cert^1,\ldots,\cert^q)$ are valid or not?}\nikhil{I believe its not
needed. As long as $\Pr[\lnot \GoodExt \land \APILive] = \nonnegl(\secp)$,
security of $\MLTT$ is broken.}
\item $\qR$ outputs the quantum program $\qP^*$. \nikhil{$\qR$ doesn't
need to check if $\qP^*$ is good either.}
\item $\qCh$ tests if $\qP^*$ is $\epsilon$-good wrt $(f, \qE, t)$. If not,
    it outputs $\bot$.
\item $\qCh$ runs $(\id_1^*, \ldots, \id_k^*) \la
    \MLTT.\qTrace(\mltt.\msk,
\qP^*, 0.9\epsilon)$.

\item If $(\id_1^*, \ldots,
\id_k^*) \in Q_1 \times \ldots \times Q_k$, $\qCh$ outputs $\bot$.
Else, it outputs $\top$.
\end{enumerate}
\end{description}

Observe that the view of $\qA$ in $\qR$ is indistinguishable from its
view in the standard-KLA experiment for $\SKL$.
Since we have $\Pr[\Win] = \nonnegl(\secp)$ and that
$\APILive_\epsilon$ occurs when $\Win$ occurs, we have
$\Pr[\APILive_\epsilon] = \nonnegl(\secp)$. This means $\Pr[\lnot
\GoodExt_\epsilon \land \APILive_\epsilon] = \nonnegl(\secp)$, which breaks
the security of $\MLTT$.

Now, assume that $\Pr[\lnot \GoodExt_\epsilon \mid
\APILive_\epsilon] \le \negl(\secp)$, which means that
$\Pr[\GoodExt_\epsilon \mid \APILive_\epsilon] \ge 1 -
\negl(\secp)$. We have that
$\Pr[\GoodExt_\epsilon \land \Win] =
\Pr[\Win]\cdot\Pr[\GoodExt_\epsilon \mid \Win] =
\nonnegl(\secp)\cdot \nonnegl(\secp) = \nonnegl(\secp)$ by assumption
and because $\APILive_\epsilon$ occurs whenever $\Win$ occurs.

Moreover, we must also have 
$\Vrfy(\vk^1, \cert^1) = \ldots = \Vrfy(\vk^q, \cert^q) =
\top$ conditioned on $\win$.
We will exploit this fact to construct the following
reduction $\qB$, which breaks the collusion-resistant security of
two-superposition states (for $k = \secp$ and $N = 128q^2$).

\begin{description}
\item \underline{Execution of $\qB^\qA$ in
    $\expb{\qB}{two}{sup}(1^\secp, k, q, N)$}:

\begin{enumerate}
\item $\qB$ samples $(\mltt.\msk, f, \aux_f) \la \MLTT.\Setup(1^\secp, N, k)$
and sends $\aux_f$ to $\qA$.

\item For each $(i, j) \in [k] \times [q]$, $\qCh$ performs the
    following:
\begin{itemize}
    \item Sample $v_i^j, w_i^j \la [N]$ and $b_i^j \la \bit$.
\item Set $b \seteq b_i^j$ and construct the following state on
    register $\qreg{C_{i,j}}$:
    $$\sigma_i^j \seteq \frac{1}{\sqrt{2}}\ket{v_i^j}_{\qreg{C_{i,j}}} +
    (-1)^{b}\frac{1}{\sqrt{2}}\ket{w_i^j}_{\qreg{C_{i,j}}}$$
\end{itemize}
\item For each $i \in [k]$, $\qCh$ sets $\sigma_i \seteq \sigma_i^1 \otimes
    \ldots \otimes \sigma_i^q$ and sends $\sigma_i$ to $\qB$.

\item For each $(i, j) \in [k] \times [q]$, $\qB$ performs the
following:
\begin{itemize}
\item Initialize a register $\qreg{D_{i,j}}$ to $\ket{0\ldots0}$.
Apply the following map to the registers $\qreg{C_{i,j}},
\qreg{D_{i,j}}$:

$$\ket{u}_{\qreg{C_{i,j}}}\ket{z}_{\qreg{D_{i,j}}} \mapsto
\ket{u}_{\qreg{C_{i,j}}}\ket{z \xor
\MLTT.\KG(\mltt.\msk,i,u)}_{\qreg{D_{i,j}}}
$$

Let the resulting state be denoted as $\qsk_i^j$.
\end{itemize}

\item For each $j \in [q]$, $\qB$ sets $\qsk^j \seteq
(\qsk_i^j)_{i\in[k]}$ and sends $\qsk^j$ to $\qA$.

\item $\qA$ sends $(\cert^1, \ldots, \cert^q)$ and a program $\qP^*
= (U, \rho)$ to $\qB$.
\item For each $j \in [q]$, $\qB$ parses $\cert^j$ as $\cert^j =
(c_i^j, d_i^j)_{i \in [k]}$.

\item For each $(i, j) \in [k] \times [q]$, $\qB$ considers the
following function $g_i^j$:
\begin{description}
\item $\underline{g_i^j(v_i^j, w_i^j)}:$
\begin{itemize}
\item Compute $\sk_{i,v}^j \la \MLTT.\KG(\mltt.\msk, i, v_i^j)$.
\item Compute $\sk_{i,w}^j \la \MLTT.\KG(\mltt.\msk, i, w_i^j)$.
\item Output $c_i^j \cdot (v_i^j \xor w_i^j) \xor d_i^j \cdot
(\sk_{i,v}^j \xor \sk_{i,w}^j)$.
\end{itemize}
\end{description}

\item $\qB$ tests if $\qP^*$ is $\epsilon$-good wrt $(f, \qE, t)$.

\item $\qB$ runs $(\id_1^*, \ldots, \id_k^*)
\la \MLTT.\qTrace(\mltt.\msk, \qP^*, 0.9\epsilon)$.

\item For each $i \in [k]$, $\qB$ sends $(g_i^1, \ldots, g_i^q)$ and
$\id_i^*$ to $\qCh$.

\item For each $(i, j) \in [k] \times [q]$, let $Q_i^j \seteq
\{v_i^j, w_i^j\}$. For each $i\in[k]$, $\qCh$ performs the following:
\begin{itemize}
\item Check if there exists $j \in [q]$ such that $\id_i^* \in
Q^j_i$. If not, output $\bot$. Let $s \in [q]$ be such an index.
\item Check if $g_i^s(v_i^s, w_i^s) = b_i^s$ holds. If not, output $\bot$.
\end{itemize}
\item Output $\top$.
\end{enumerate}
\end{description}

Notice that the view of $\qA$ in the experiment is indistinguishable
from its view in the SKL experiment.
Observe now that the condition $g^j_i(v_i^j, w_i^j) = b_i^j$ holds for
every $(i, j) \in [k] \times [q]$, whenever $\win$ occurs.
This is because the algorithms $\Vrfy(\vk^1, \cert^1), \ldots,
\Vrfy(\vk^q, \cert^q)$ will check that for
every $(i, j) \in [k] \times [q]$, it holds that $b_i^j = (c_i^j \|
d_i^j) \cdot (v_i^j \| \sk_{i,v}^j \xor w_i^j \| \sk_{i,w}^j) =
c_i^j \cdot (v_i^j \xor w_i^j) \xor d_i^j \cdot (\sk_{i,v}^j \xor
\sk_{i,w}^j) = g_i^j(v_i^j, w_i^j)$. Moreover, when
$\GoodExt_\epsilon$ occurs, 
for every $i \in [k]$, $\id_i^* \in Q_i$
where $Q_i \seteq Q_i^1 \cup
\ldots \cup Q_i^q$. Recall that $k = \secp$ and $N=128q^2$ as per the
construction. Consequently, $\qB$ breaks the collusion-resistant
security of two-superposition states (\cref{thm:two-sup-par}), giving us a
contradiction. Therefore, $\win$ can only occur with $\negl(\secp)$ probability.
\end{proof}

\fuyuki{I did not fully understand the reason why this step does not go through if we adopt the existing definitional style where we require $\Pr[\mathsf{Good}]\ge\Pr[\mathsf{Live}]$.
Consider the following three events.
\begin{itemize}
\item $V$: All the deletion certificates output by $\qA$ is valid.
\item $L$: The quantum program output by $\qA$ is tested as $\gamma$-good.
\item $G$: We apply $\MLTT.\qTrace$ to the quantum program output by $\qA$ and it successfully traces. (We apply $\MLTT.\qTrace$ directly without apply $\gamma$-good test.)
\end{itemize}
We have $\Pr[V\land G]\ge\Pr[V\land L]+\negl(\secp)$ from the security of MLTT if we define its security so that it requires $\Pr[\mathsf{Good}]\ge\Pr[\mathsf{Live}]+\negl(\secp)$.
This can be proved by considering an MLTT adversary that simulates $\qA$ and outputs the quantum program output by $\qA$ only when the deletion certificate output by $\qA$ are valid.
I think the MLTT adversary can perform the verification of SKL.
We can also prove that $\Pr[V\land G]$ is negligible from the security of two superposition state, which also proves $\Pr[V\land L]$ is negligible.
Am I missing something?
}
\nikhil{This also works. I found the ($\API$ + 
$\qTrace$) more intuitive, so I decided to go with it.}

\begin{remark}
It is easy to see that if $\MLTT$ admits identity space $[2^\secp]$,
$\SKL$ satisfies unbounded standard-KLA security, based on Corollary \ref{cor:two-sup}.
\end{remark}


\section{Collusion-Resistant PRF-SKL from LWE}\label{sec:mlt-prf}

In this section, we will construct an MLTT scheme for the PRF
functionality. We refer to this primitive as a multi-level traceable
PRF (MLT-PRF). The scheme will allow for an arbitrary polynomial-size
identity space. Since our traceability definition (Definition
\ref{def:mltt-tr}) assumes
collusion-resistance by default, we do not specify it explicitly.
With the help of our compiler from the previous section, the MLT-PRF
implies a bounded collusion-resistant SKL scheme for the PRF
functionality (PRF-SKL).

The MLT-PRF is similar to the traceable PRF of Maitra and Wu
\cite{PKC:MaiWu22}, and relies on two building blocks: a fingerprinting
code, and a traceable PRF with identity space $\bit$. We define these in the
following subsection.

\subsection{Building Blocks}

\begin{definition}[Quantum-Secure Traceable PRF with Identity Space $\bit$]\label{def:tprf}

A quantum-secure traceable PRF (TPRF) with identity space $\bit$ consists of the
following algorithms. Let $\cX, \cY$ denote the domain and range of
the PRF respectively.

\begin{description}
\item $\Setup(1^\secp) \ra \msk:$ The setup algorithm takes a
security parameter as input and outputs a master secret key $\msk$.
\item $\KG(\msk, \id) \ra \sk:$ The key-generation algorithm takes a
master secret-key $\msk$ as input along with an identity $\id
\in \bit$. It outputs a secret-key $\sk$. We require that $\KG$ is
deterministic.\footnote{This is wlog, assuming the existence of a
post-quantum PRF. This is because a PRF key can be sampled as part of
$\msk$, which can be used to derive randomness corresponding to input $\id$.}

\item $\Eval(\sk, x) \ra y:$ The evaluation algorithm takes as input
a secret key $\sk$ (or $\msk$) and a value $x \in \cX$. It outputs a
value $y \in \cY$.

\item $\qTrace(\msk, \qP, \epsilon^*) \ra \id/\bot:$ The quantum tracing algorithm
takes the master secret-key as input, along with a quantum
pirate program $\qP$ and a parameter $\epsilon^*$ meant to be a lower
bound on the advantage of $\qP$. It outputs a traitor identity $\id$
or $\bot$.
\end{description}

\paragraph{Evaluation Correctness:} The following holds for every
$\id \in \bit$:

\[
\Pr\left[
\Eval(\msk, x) \neq \Eval(\sk, x)
 \ :
\begin{array}{rl}
 &\msk \la \Setup(1^\secp)\\
 &x \la \cX\\
 &\sk \la \KG(\msk, \id)
\end{array}
\right] \le \negl(\secp)
\]

\paragraph{Pseudo-randomness:} Let $\cR$ denote the set of all functions
with domain $\cX$ and range $\cY$. The following holds for all QPT
adversaries $\qA$ for $f, \msk$ sampled as $f \la \cR$ and
$\msk \la \Setup(1^\secp)$ respectively:

\[
\bigg\lvert
\Pr\left[
1 \la \qA^{f(\cdot)}(1^\secp)
\right] 
-
\Pr\left[
1 \la \qA^{\Eval(\msk, \cdot)}(1^\secp)
\right]
\bigg\rvert
\le \negl(\secp)
\]

\begin{definition}[Weak Pseudo-randomness]\label{def:wprf}
For $b \in \bit$, consider the following distributions
$\Dwprf[f]$ and $\Dchall[f]$:

\begin{description}
    \item $\underline{\Dwprf[f]}:$
\begin{itemize}
\item Sample $x \la \cX$. Compute $y \seteq f(x)$.
\item Output $(x, y)$.
\end{itemize}
\end{description}

\begin{description}
    \item $\underline{\Dchall[f]}:$
\begin{itemize}
\item Sample $x \la \cX$.
\item If $b = 1$, sample $y \la \cY$. Else, compute $y \seteq f(x)$.
\item Output $(x, y)$.
\end{itemize}
\end{description}

The following holds for every QPT adversary $\qA$: 

\[
\Pr\left[
    b \la \qA^{\Dwprf[f]}(1^\secp, x, y)
 \ :
\begin{array}{rl}
 &b \la \bit\\
 &\msk \la \Setup(1^\secp)\\
 &f \seteq \Eval(\msk, \cdot)\\
 &(x, y) \la \Dchall[f]
\end{array}
\right] \le \frac12 + \negl(\secp)
\]
\end{definition}

\begin{remark}
This notion of weak pseudo-randomness is equivalent to a notion where
polynomially many queries to $\Dchall$ are allowed. This can be shown
easily via a hybrid argument.
\end{remark}

\begin{remark}
While pseudo-randomness implies its weak variant, the latter is used
to determine which pirate programs are able to successfully evaluate
the PRF in the following traceability definition. Such a traceability
notion is inspired by previous works on traceable PRFs \cite{AC:GKWW21,EC:KitNis22}.
\end{remark}

\begin{definition}[Traceability] \label{def:tprf-tr}
For a TPRF $\TPRF$, consider the
experiment $\expa{\TPRF,\qA}{trace}(1^\secp)$ between a
challenger $\qCh$ and an adversary $\qA$. 

\begin{description}
\item $\underline{\expa{\TPRF,\qA}{trace}(1^\secp)}:$
\begin{enumerate}
\item $\qCh$ samples $\msk \la \Setup(1^\secp)$ and keys $\key,
\widetilde{\key} \la
\bit^\secp$ for a quantum-accessible PRF $\QPRF$.
\item $\qA$ sends $\id \in \bit$ to $\qCh$. $\qCh$ sends
$\sk \la \KG(\msk, \id)$ to $\qA$.

\item Consider the following distribution that takes a random tape $r$ as
input:

\begin{description}
\item $\underline{D[\key](r):}$
\begin{itemize}
\item Compute a pseudo-random bit $b$ and pseudo-random strings $x,
y_1$ using $\QPRF(\key, r)$.
\item Compute $y_0 \la \Eval(\msk, x)$.
\item Set $y \seteq y_b$ and output $(b, x, y)$.
\end{itemize}
\end{description}
\item $\qA$ is provided quantum access (on the random tape)
to the distributions $D[\key](\cdot)$ and
$D[\widetilde{\key}](\cdot)$.
\item $\qA$ outputs a quantum program $\qP^*$.
\end{enumerate}
\end{description}

Consider the following events:

\begin{description}
\item $\underline{\Live_{\epsilon}:}$
\begin{itemize}
\item Consider applying $\ProjImp(\cP_{D[\widetilde{\key}]})$ to
$\qP^*$ to get an outcome $p$, where $D[\widetilde{\key}]$ is the distribution
defined above.
\item The event is said to occur if $p \ge 1/2 + \epsilon$.
\end{itemize}

\item $\underline{\GoodTrace_{\epsilon}:}$ $\qTrace(\msk,
\qP^*, \epsilon)$ outputs $\id' \neq \bot$.
\item $\underline{\BadTrace_{\epsilon}:}$ $\qTrace(\msk,
\qP^*, \epsilon)$ outputs $\id' = 1 - \id$.
\end{description}

A TPRF scheme $\TPRF$ satisfies traceability if the following holds
for every $\qP^*$ output by every QPT $\qA$ and for every inverse polynomial $\epsilon$:

\begin{align}
    &\Pr[\GoodTrace_{\epsilon}] \ge \Pr[\Live_{\epsilon}] - \negl(\secp)\\
    &\Pr[\BadTrace_\epsilon] \le \negl(\secp)
\end{align}

\end{definition}

\end{definition}

The work of Kitagawa and Nishimaki \cite{EC:KitNis22} constructed a
quantum-secure watermarkable PRF (WMPRF) based on LWE with sub-exponential modulus. We show that their WMPRF is
also a TPRF as per our definition. Note that the difference is mainly
syntactic, except for the fact that our definition also provides the
adversary with quantum access to the distributions $D[\key]$ and
$D[\widetilde{\key}]$ on their random tapes.

\begin{theorem}\label{thm:tprf-lwe}
There exists a quantum-secure traceable PRF with identity space
$\bit$, based on the quantum hardness of LWE with sub-exponential
modulus.
\end{theorem}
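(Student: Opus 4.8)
The plan is to instantiate the TPRF with the quantum-secure watermarkable PRF ($\WMPRF$) of Kitagawa and Nishimaki \cite{EC:KitNis22}, which exists under LWE with sub-exponential modulus, and to verify that every requirement of Definition \ref{def:tprf} follows from its watermarking guarantees after a syntactic translation. Concretely, I would set $\Setup$ to be the $\WMPRF$ setup producing a master key $\msk$; let $\Eval(\msk, \cdot)$ be the unmarked pseudorandom function and $\Eval(\sk, \cdot)$ be evaluation of a marked key; define $\KG(\msk, \id) \seteq \Mark(\msk, \id)$ for the single-bit mark $\id \in \bit$; and set $\qTrace \seteq \qExtract$, the quantum extraction algorithm. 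Determinism of $\KG$ is without loss of generality by deriving the coins of $\Mark$ from $\id$ via a post-quantum PRF seeded inside $\msk$, as permitted by the footnote in Definition \ref{def:tprf}.

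First, evaluation correctness and pseudo-randomness transfer essentially verbatim: the correctness of watermarking (a marked key agrees with the master evaluation on all but a negligible fraction of inputs) yields evaluation correctness, and the pseudo-randomness of the master $\WMPRF$ key yields the pseudo-randomness requirement. Weak pseudo-randomness (Definition \ref{def:wprf}) is then immediate, since it is implied by full pseudo-randomness: a weak-PRF distinguisher is merely a restricted full-PRF distinguisher that queries uniformly random points.

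The substantive step is traceability (Definition \ref{def:tprf-tr}), which I would derive from the quantum unremovability and soundness of the $\WMPRF$. The events $\Live_\epsilon$, $\GoodTrace_\epsilon$, and $\BadTrace_\epsilon$ are phrased in exactly the language in which KN state their extraction guarantee: $\Live_\epsilon$ uses $\ProjImp(\cP_{D[\widetilde\key]})$ to certify that $\qP^*$ succeeds with probability $\ge 1/2 + \epsilon$ on the weak-PRF test distribution, and their security asserts that from any such ``live'' program the extractor recovers the embedded mark. I would thus map the bound $\Pr[\GoodTrace_\epsilon] \ge \Pr[\Live_\epsilon] - \negl(\secp)$ onto their extraction-correctness guarantee and $\Pr[\BadTrace_\epsilon] \le \negl(\secp)$ onto their soundness guarantee (the extractor never blames the complementary mark $1 - \id$). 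Both reductions run $\qA$ inside the $\WMPRF$ adversary, forwarding $\sk = \KG(\msk, \id)$ and handing $\qP^*$ to the $\WMPRF$ challenger for extraction.

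The place requiring care—and the main obstacle—is that our adversary is granted \emph{quantum} query access, on the random tape, to the distributions $D[\key](\cdot)$ and $D[\widetilde\key](\cdot)$, whereas KN's game does not literally include these oracles. The crucial observation is that each $D[\key](r)$ is efficiently computable from the independently sampled $\QPRF$ key $\key$ (which the reduction samples itself) together with a single evaluation $\Eval(\msk, x)$ on the pseudorandomly derived input $x$. Hence quantum access to $D[\key]$ and $D[\widetilde\key]$ can be simulated by quantum oracle access to $\Eval(\msk, \cdot)$ plus locally applied unitaries implementing $\QPRF$. I would therefore argue that KN's watermarking security holds against adversaries given such quantum evaluation access—indeed their tracing is already defined relative to a challenge distribution of precisely this form, and their estimation machinery (Theorems \ref{thm:api} and \ref{thm:pi-ind}) is exactly what certifies $\Live_\epsilon$. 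The remaining work is bookkeeping: confirming that the independence of $\key$ and $\widetilde\key$—which prevents $\qA$ from pre-programming answers tailored to the test distribution—matches the independence already exploited in KN's analysis.
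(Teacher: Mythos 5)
Your construction and the syntactic transfers match the paper's proof exactly: $\KG \seteq \Mark$, $\qTrace \seteq \qExtract$, correctness and pseudo-randomness inherited verbatim, and the events $\Live_\epsilon$, $\GoodTrace_\epsilon$, $\BadTrace_\epsilon$ mapped onto KN's unremovability guarantee. You also correctly identify the one non-syntactic issue—the quantum oracle access to $D[\key](\cdot)$ and $D[\widetilde{\key}](\cdot)$—but your resolution of it has a genuine gap. You propose to simulate these oracles via quantum oracle access to $\Eval(\msk,\cdot)$ and then to ``argue that KN's watermarking security holds against adversaries given such quantum evaluation access.'' That is precisely the claim that needs proof, and KN do not prove it: their unremovability game gives the adversary the marked circuit $\widetilde{C}$, not any oracle for the unmarked master evaluation. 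Worse, it is plausibly false as a blanket assertion: in watermarking constructions of this type the marked key and the master key agree everywhere except on a secret sparse set tied to the embedded mark, so an adversary holding $\sk$ \emph{and} a (quantum) oracle for $\Eval(\msk,\cdot)$ can search for differing inputs, which is exactly the kind of leakage that unremovability proofs must rule out. The observation that KN's tracing algorithm internally uses a distribution ``of precisely this form'' does not help—the tracer holds $\msk$ and runs on the challenger's side; it does not follow that the adversary may be granted query access to it.

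The paper closes this gap differently, and the difference is essential: rather than strengthening the assumed WMPRF security, it shows the oracles give the adversary \emph{no additional power} by making them simulatable from $\sk$ alone. Concretely, one first uses QPRF security to replace the pseudorandom tape-derived input $x$ by a truly random one, then uses evaluation correctness (so that $\Eval(\msk,x)$ and $\Eval(\sk,x)$ disagree only on a negligible fraction of random $x$) to conclude that for each tape $r$ the output distributions of $D[\key](r)$ and of the $\sk$-based variant $D'[\key](r)$ are statistically close, and finally invokes the Boneh--Zhandry lemma to lift per-input statistical closeness to indistinguishability under polynomially many \emph{superposition} queries. After this switch, $D'[\key]$ and $D'[\widetilde{\key}]$ are computable by the adversary itself from $\sk$ and a self-sampled QPRF key, so the reduction to KN's unmodified unremovability notion goes through. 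Your outline contains the raw ingredients (you note $\key,\widetilde{\key}$ are independently sampled and that the reduction can run QPRF unitaries locally), but without the $\msk \to \sk$ replacement step—and in particular without a tool like the Boneh--Zhandry lemma to handle quantum queries to statistically close oracles—your reduction rests on an unproven and likely unsound strengthening of the building block.
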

\begin{proof}
We will show that the watermarkable PRF (WMPRF) based on LWE due to Kitagawa
and Nishimaki \cite{EC:KitNis22} implies a TPRF. A WMPRF $\WMPRF$ for
message space $\cM$, domain $\cX$ and range $\cY$ is a tuple of five
algorithms $(\Setup, \Gen, \Eval, \Mark, \qExtract)$.
$\Setup(1^\secp)$ outputs a public parameter $\pp$ and a secret
extraction-key $\xk$. $\Gen(\pp)$ outputs a PRF key $\prfk$ and a
public tag $\tau$. $\Eval(\prfk, x)$ outputs $y \in \cY$, which is
meant to be the PRF evaluation on input $x \in \cX$. $\Mark(\pp,
\prfk, \msg)$ outputs an evaluation circuit $\tlC$ that is marked with
$\msg \in \cM$. The quantum extraction algorithm $\qExtract(\xk, \tau,
\qC', \epsilon^*)$ takes $\xk, \tau$ as inputs along with a quantum
program $\qC'$ and a parameter $\epsilon^*$. It outputs $\msg' \in \cM
\cup \{\bot\}$. We now construct a TPRF $\TPRF = (\Setup, \KG, \Eval,
\qTrace)$ using a WMPRF $\WMPRF$ with message space $\bit$, domain
$\cX$ and range $\cY$ as follows:

\begin{description}
\item $\underline{\Setup(1^\secp):}$
\begin{itemize}
\item Execute $(\pp, \xk) \la \WMPRF.\Setup(1^\secp)$.
\item Execute $(\prfk, \tau) \la \WMPRF.\Gen(\pp)$.
\item Output $\msk \seteq (\pp, \xk, \prfk, \tau)$.
\end{itemize}

\item $\underline{\KG(\msk, \id):}$
\begin{itemize}
\item Parse $\msk = (\pp, \xk, \prfk, \tau)$.
\item Compute $\tlC \la \Mark(\pp, \prfk, \id)$.
\item Output $\sk \seteq \tlC$.
\end{itemize}

\item $\underline{\Eval(\sk', x):}$
\begin{itemize}
\item If $\sk'$ is of the form 
$\msk = (\pp, \xk, \prfk, \tau)$, 
output $y \la \WMPRF.\Eval\allowbreak(\prfk, x)$.
\item Otherwise, parse $\sk' = \widetilde{C}$ and output
$y = \widetilde{C}(x)$.
\end{itemize}

\item $\underline{\qTrace(\msk, \qP, \epsilon^*):}$
\begin{itemize}
\item Parse $\msk = (\pp, \xk, \prfk, \tau)$.
\item Output $\id' \la \qExtract(\xk, \tau, \qP, \epsilon^*)$.
\end{itemize}
\end{description}

It is easy to see that $\TPRF$ satisfies evaluation correctness and
pseudo-randomness by the analogous properties of evaluation
correctness and pseudo-randomness of $\WMPRF$. Consider now a
traceability notion that is similar to ours, except that the adversary
is not provided with access to $D[\key](\cdot),
D[\widetilde{\key}](\cdot)$. The unremovability notion of
\cite{EC:KitNis22} immediately implies that $\TPRF$ satisfies the
aforementioned traceability guarantee. This is because the difference
between these primitives and security notions is merely syntactic.

Next, we explain why providing access to the distributions $D[\key],
D[\widetilde{\key}]$ on the random tape does not give the adversary any additional
power. Notice that $D[\key](r)$ outputs samples of
the form $(0, x, \Eval(\msk, x))$ when $b=0$. Here, the value $x$ is
chosen to be pseudo-random, using $\QPRF(\key, r)$. However, by the
security of $\QPRF$ we can consider a computationally close
distribution $\widetilde{D}[\key]$ where $x$ is chosen at
random for each $r$. Consider now the distribution $D'[\key](r)$ that
is defined similar to $D[\key]$, except that it uses $\sk$ in place of
$\msk$. In other words, it outputs samples of the form $(0, x,
\Eval(\sk, x))$ when $b=0$. Once again, by the security of $\QPRF$, we
can consider a computationally close distribution
$\widetilde{D}'[\key]$ that samples $x$ at random. Observe now that
from the evaluation correctness property (Definition \ref{def:tprf}),
samples from $\widetilde{D}[\key]$ and $\widetilde{D}'[\key]$ are
statistically indistinguishable. We now recall the following lemma
(para-phrased) shown by Boneh and Zhandry:

\begin{lemma}\cite{EC:BonZha13}
Let $\cY$ and $\cZ$ be sets and for each $y \in \cY$, let $D_y$ and
$D'_y$ be distributions on $\cZ$ such that $\SD(D_y, D'_y) \le
\epsilon$.  Let $O:\cY \ra \cZ$ and $O':\cY \ra \cZ$ be functions
such that $O(y)$ outputs $z \gets D_y$ and $O'(y)$ outputs $z' \gets
D'_y$. Then, $O(y)$ and $O'(y)$ are $\epsilon'$-statistically
indistinguishable by quantum algorithms making $q$ superposition
oracle queries, such that $\epsilon' = \sqrt{8C_0q^3\epsilon}$ where
$C_0$ is a constant.
\end{lemma}

Using this lemma for the special case when $\cY$ is a singleton set,
we have that $\widetilde{D}[\key]$ and $\widetilde{D}'[\key]$ are
statistically indistinguishable with polynomially-many quantum
queries. This immediately gives us that $D[\key]$ and $D'[\key]$ are
computationally indistinguishable with polynomially-many quantum
queries. As a result, we can replace oracle access to $D[\key]$ with
$D'[\key]$. We can now apply the same argument to replace access to
$D[\widetilde{\key}]$ with $D'[\widetilde{\key}]$. Importantly, the
distributions $D'[\key]$ and $D'[\widetilde{\key}]$ do not provide any
additional power to the adversary, as it is already provided with
$\sk$ as part of the traceability game. Therefore, we can obtain
a $\TPRF$ as per our notion from the $\WMPRF$ of \cite{EC:KitNis22}.
\end{proof}

\fuyuki{I read Section 4 of \cite{TCC:Zhandry20} again and realized that we might need to be careful about the implication. Basically, the intuition is that they are equivalent since the adversary can check if a quantum program is good (live) or not by itself before outputting it. However, the condition is not generally satisfied in PRF setting. Testing goodness straightforwardly requires $\msk$ that the adversary does not have.}

\nikhil{I the changed definition to provide oracle access to 
WPRF-like distribution on random tape. Now the notions are equivalent.}

Next, we define the information-theoretic notion of fingerprinting
codes \cite{C:BonSha95}.

\begin{definition}[Fingerprinting Codes \cite{C:BonSha95}] A fingerprinting code 
$\FC$ consists of the following algorithms:

\begin{description}
\item $\Setup(1^\secp, N) \ra (\Gamma, \tk):$ The setup algorithm takes
a security parameter as input and an identity space size $N$. It outputs a codebook $\Gamma =
\{w_{\id}\}_{\id \in [N]}$ and a tracing key $\tk$. The
values $w_{\id}$ are called codewords, and are bit strings of length
$\ell$ (called the code-length).

\item $\tTrace(\tk, w^*) \ra \id^*:$ The tracing algorithm
takes the tracing key $\tk$ as input, along with a string $w^* \in
\bit^{\ell}$.
It outputs an identity $\id^* \in [N]$.

\end{description}

We define the traceability requirement of fingerprinting codes as
follows:

\begin{definition}[Traceability]\label{def:fc-tr} This notion is formalized
by the experiment $\expb{\FC,\qA}{fc}{trace}(1^\secp, N)$
between a challenger $\qCh$ and an adversary $\qA$. For $W \subseteq
[N]$, let the feasible set $F(W)$ be the set of all words $w \in
\bit^\ell$ satisfying the following:

\begin{itemize}
\item For each $i \in [\ell]$, there exists $\id \in W$ such that for
$w_\id \in \Gamma$, it holds that $w_{\id}[i] = w[i]$.
\end{itemize}

The experiment is defined as follows:

\begin{description}
\item $\underline{\expb{\FC,\qA}{fc}{trace}(1^\secp, N)}:$
\begin{enumerate}
\item $\qCh$ samples $(\Gamma, \tk) \la \Setup(1^\secp, N)$.
\item $\qA$ is allowed to make adaptive queries of the following
form: For a query input $\id \in [N]$, $\qCh$ responds with the codeword
$w_{\id} \in \Gamma$. Let $W \subseteq [N]$ be the set of queries
made by $\qA$.
\item $\qA$ outputs a string $w^*$. If $w^* \notin F(W)$, $\qCh$
 outputs $\bot$.
\item $\qCh$ runs $\id^* \la \tTrace(\tk, w^*)$. If
$\id^* \notin W$, $\qCh$ outputs $\top$. Else, it outputs $\bot$.
\end{enumerate}
\end{description}

A fingerprinting code $\FC$ satisfies traceability if the following holds
for every $\qA$ and $N = \poly(\secp)$:

$$\Pr[\expb{\FC,\qA}{fc}{trace}(1^\secp, N) \ra \top] \le \negl(\secp)$$

\end{definition}

\end{definition}

Next, we define our notion of a multi-level traceable PRF (MLT-PRF)
followed by providing its construction.

\subsection{Multi-Level Traceable PRF}\label{sec:mlt-prf-const}

\begin{definition}[Multi-Level Traceable PRF]\label{def:mlt-prf}
A multi-level traceable pseudo-random function (MLT-PRF) is an MLTT
scheme for the following application $(\qF, \qE, t)$. Let MLT-PRF
have domain $\cX$ and range $\cY$, and consist of the algorithms
$(\Setup, \KG, \Eval, \qTrace)$ as per the syntax of MLTT (Definition
\ref{def:mltt-cor}). Note that $\Setup$ samples a PRF key
$\prfk$, followed by setting $f \seteq \PRF(\prfk, \cdot)$ and $\aux_f \seteq
\bot$.

\begin{description}
\item $\underline{\qF(\qg, f, r)}:$
\begin{itemize}
\item Sample $x \la \cX$ where $\cX$ is the domain of the PRF $f$, using
the random tape $r$.
\item If $\qg(x) = f(x)$, output $1$. Else, output $0$.
\end{itemize}

\item $\underline{\qE(f, \qP, (r, \key))}$:
\begin{itemize}
\item Compute a pseudo-random bit $b$ and pseudo-random strings $x,
y_1$
using $\QPRF(\key, r)$ where $\QPRF$ is a quantum-accessible PRF.
\footnote{Note that $x, y_1$  must be computationally indistinguishable from values drawn
uniformly from $\cX, \cY$ respectively.}

\item Compute $y_0 \seteq f(x)$.
\item Set $y \seteq y_b$ and run $b' \la \qP(x, y)$.

\item Output $1$ if $b = b'$ and $0$ otherwise.
\end{itemize}

\item $\underline{t} \seteq \frac12$
\end{description}

\begin{remark}
Notice that $\qF$ accepts a quantum program $\qg$ as input, even
thought $\MLTPRF$ is a classical primitive. This is because such an
$(\qF, \qE, t)$ captures the SKL scenario, and the syntax of $\MLTT$
anyway restricts $\Eval$ to be classical.
\end{remark}

\begin{remark}
It is more natural to define $\qE$ which samples $b, x$ truly at
random directly using the random string $r$. However, we use the
current notion due to some technicalities in the proof of MLT-PRF.
Notice however, that for the end goal of PRF-SKL, both the definitions
of $\qE$ are equivalent, assuming that $\QPRF$ is a quantum-accessible
PRF. This is because the underlying distributions are computationally
indistinguishable. Consequently, Theorem \ref{thm:api} and Theorem
\ref{thm:pi-ind} ensure that the corresponding $\API$ measurements of
the $\epsilon$-good tests produce close outcomes.
\end{remark}

In addition to the properties of Traceability (Definition
\ref{def:mltt-tr}), Evaluation Correctness (Definition
\ref{def:mltt-cor}) and Deterministic Evaluation (Definition
\ref{def:mltt-cor}) which any MLTT scheme must satisfy, an MLT-PRF
must also satisfy pseudo-randomness:

\paragraph{Pseudo-randomness:} Let $\cR$ denote the set of all functions
with domain $\cX$ and range $\cY$. The following holds for all QPT
adversaries $\qA$, $N = \poly(\secp)$ and $k = \poly(\secp)$, for $g,
f$ sampled as $g \la \cR$ and $(\msk, f, \aux_f) \la \Setup(1^\secp,
N, k)$ respectively, the following holds:

\[
\bigg\lvert
\Pr\left[
1 \la \qA^{g(\cdot)}(1^\secp)
\right] 
-
\Pr\left[
1 \la \qA^{f(\cdot)}(1^\secp)
\right] 
\bigg\rvert
\le \negl(\secp)
\]
\end{definition}

\paragraph{Construction:} We will now construct an MLT-PRF $\MLTPRF$
using a TPRF $\TPRF = \TPRF.(\Setup, \KG, \allowbreak\Eval,
\qTrace)$ with identity space $\bit$, and a fingerprinting code $\FC = \FC.(\Setup, \tTrace)$ as follows:

\begin{description}
\item $\underline{\Setup(1^\secp, N, k):}$
\begin{itemize}
\item For $i \in [k]$, compute $(\fc.\Gamma_i, \fc.\tk_i) \la
\FC.\Setup(1^\secp, N)$.
\item For $(i,j) \in [k]\times[\ell]$, compute $\tprf.\msk_{i}^j \la
\TPRF.\Setup(1^\secp)$.
\item Set $\msk \seteq
    (\{\tprf.\msk_i^j\}_{(i,j)\in[k]\times[\ell]},
    \{\fc.\Gamma_i\}_{i\in[k]},
    \{\fc.\tk_i\}_{i\in[k]})$.
\item Set $f \seteq \bigoplus_{(i,j) \in [k] \times [\ell]}
\TPRF.\Eval(\tprf.\msk_i^j, \cdot)$ and $\aux_f \seteq \bot$.
\item Output $(\msk, f, \aux_f)$.
\end{itemize}
\item $\underline{\KG(\msk, i, \id):}$
\begin{itemize}
\item Parse $\msk =
    (\{\tprf.\msk_i^j\}_{(i,j)\in[k]\times[\ell]},
    \{\fc.\Gamma_i\}_{i\in[k]},
    \{\fc.\tk_i\}_{i\in[k]})$.
\item Parse $\fc.\Gamma_i = (w_{s})_{s \in [N]}$.
\item For $j \in [\ell]$, compute $\tprf.\sk_i^j \la \TPRF.\KG(\tprf.\msk_i^j,
    w_{\id}[j])$.
\item Output $\sk_i \seteq (\tprf.\sk_i^j)_{j \in [\ell]}$.
\end{itemize}
\item $\underline{\Eval(\sk_1, \ldots, \sk_k, x)}:$
\begin{itemize}
\item For each $i \in [k]$, parse $\sk_i = (\tprf.\sk_i^j)_{j\in[\ell]}$.
\item For each $(i, j) \in [k]\times[\ell]$, compute $y_i^j \la 
\TPRF.\Eval(\tprf.\sk_i^j, x)$.
\item Output $y \seteq \bigoplus_{(i,j)\in[k]\times[\ell]}y_i^j$.
\end{itemize}
\item $\underline{\qTrace(\msk, \qP^*, \epsilon^*)}:$

\begin{enumerate}
\item Parse $\msk =
    (\{\tprf.\msk_i^j\}_{(i,j)\in[k]\times[\ell]},
    \{\fc.\Gamma_i\}_{i\in[k]},
    \{\fc.\tk_i\}_{i\in[k]})$.


\item For each $(i, j) \in [k] \times [\ell]$, consider the
following algorithm $\qB_{i,j}$ that is provided with
$\qP^*$, and takes $(x, y)$ as input:

\begin{description}
    \item $\underline{\qB_{i,j}[\qP^*](x, y)}:$
\begin{itemize}
\item Compute $y' = y \xor \bigoplus_{(u,v) \neq
    (i,j)}\TPRF.\Eval(\tprf.\msk_u^v, x)$.

\item Execute $\qP^*(x, y')$.

\item Output $b'$, which is the value output by $\qP^*$.
\end{itemize}
\end{description}

\item Sample a key $\widetilde{\key} \la \bit^\secp$ for a quantum-accessible PRF
$\QPRF$.

\item For $(i, j) = (1,1)$ to $(k, \ell)$, let $\counts_i^j \seteq
    (i-1)\ell +
j$ and do the following:
\begin{enumerate}

\item Let $D[\widetilde{\key}]$ be the following distribution:

\begin{description}
    \item $\underline{D[\widetilde{\key}](r):}$
\begin{itemize}
\item Compute a pseudo-random bit $b$ and pseudo-random strings $x,
y_1$ using $\QPRF(\widetilde{\key}, r)$.
\item Compute $y_0 \la f(x)$.
\item Set $y \seteq y_b$ and output $(b, x, y)$.
\end{itemize}
\end{description}

Let $\{\cP_{b, x, y}\}_{b,x,y}$ denote the set of projective
measurements corresponding to running $\qP^*$ on input $(x, y)$
and outputting $1$ if its output $b' = b$ (outputting $0$ otherwise).
Apply $\Est \seteq \API^{\cP, D[\widetilde{\key}]}_{\epsilon', \delta}$ to $\qP^*$, where
$\epsilon' = 7\epsilon^*/9 \times 1/4k\ell$ and $\delta = 2^{-\secp}$.

\item Compute $w_i^j \la \TPRF.\qTrace(\tprf.\msk_i^j,
\qB_{i,j}[\qP^*](\cdot,\cdot), \widetilde{\epsilon})$ where the
parameter
$\widetilde{\epsilon} =
6\epsilon^*/9 - 2\epsilon'\counts_i^j$. Note that $\TPRF.\qTrace$
performs some measurement on $\qB_{i,j}$, and hence on $\qP^*$.
Consider the equivalent projective measurement $\widetilde{\Est}$
performed on $\qP^*$ that outputs $w_i^j$ (This can be achieved by
dilating the measurement performed by $\qTrace$ using a new ancilla
register each time).

\item Run the algorithm $\Repair^{\Est, \widetilde{\Est}}_{T, p,
s}$ on $\qP^*$ where $T = \frac{1}{\sqrt{\delta}}$ and
$s \seteq w_i^j$.
\end{enumerate}

\item For each $i \in [k]$, set $w_i^* \seteq w_i^1 \| \ldots \|
    w_i^{\ell}$. Compute $\id_i^* \la \FC.\tTrace(\fc.\tk_i, w_i^*)$.
\item Output $(\id_1^*, \ldots, \id_k^*)$.
\end{enumerate}
\end{description}

\paragraph{Evaluation Correctness:} This follows immediately from the
evaluation correctness of $\TPRF$ (Definition \ref{def:tprf}) and the
description of the $\Eval$ algorithm.

\paragraph{Deterministic Evaluation:} This follows from evaluation
correctness, which implies deterministic evaluation in the context of
PRFs.

\paragraph{Pseudorandomness:} Pseudorandomness also follows directly
from the pseudorandomness of $\TPRF$ (Definition \ref{def:tprf}) as $\Eval$
computes the XOR of the evaluations of $\TPRF$ instances.

\begin{theorem}
The MLT-PRF $\MLTPRF$ satisfies traceability (Definition
\ref{def:mltt-tr}), assuming $\QPRF$ is a quantum-accessible PRF,
$\TPRF$ satisfies traceability
(Definition \ref{def:tprf-tr}), and $\FC$ satisfies traceability
(Definition \ref{def:fc-tr}).
\end{theorem}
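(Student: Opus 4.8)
The plan is to show that whenever the pirate program $\qP^*$ passes the $\epsilon$-good test in \Cref{def:mltt-tr}, the algorithm $\qTrace$ recovers a tuple $(\id_1^*,\ldots,\id_k^*)\in Q_1\times\cdots\times Q_k$ except with negligible probability; since the experiment outputs $\top$ only when the test passes and tracing fails, this yields traceability. The argument rests on three ingredients: a liveness invariant that controls the sequential tracing through the state-repair machinery, a per-position feasibility argument reducing to $\TPRF$ traceability, and a final reduction to $\FC$ traceability. Throughout I condition on $\qP^*$ having passed the $\epsilon$-good test, so that $\qTrace$ is invoked with $\epsilon^*=0.9\epsilon$.

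First I would establish the liveness invariant for the main loop. The key structural observation is that $\qB_{i,j}$ acts bijectively on the challenge: given a sample $(x,y)$ of the $(i,j)$-th $\TPRF$ weak-pseudorandomness challenge, it feeds $\qP^*$ the pair $\big(x,\, y\xor\bigoplus_{(u,v)\neq(i,j)}\TPRF.\Eval(\tprf.\msk_u^v,x)\big)$, which is exactly an $f$-weak-pseudorandomness sample carrying the same challenge bit. Hence the success measurement of $\qB_{i,j}$ on $D[\widetilde{\key}]$ coincides with that of $\qP^*$, so their $\ProjImp$ and $\API$ values are identical. Because $\qP^*$ passed the $\epsilon$-good test, the first $\Est$ in $\qTrace$ reports a value at least $1/2+0.9\epsilon-o(\epsilon^*)$, using that $\API$ and $\ProjImp$ are $\epsilon'$-close (\Cref{thm:api}), that $D[\key]$ and $D[\widetilde{\key}]$ are computationally indistinguishable so \Cref{thm:pi-ind} transfers the estimate, and almost-projectiveness across the intervening step. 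I would then apply \Cref{thm:repair} at every iteration with $T=1/\sqrt{\delta}$ and $O(1)$ outcomes, so consecutive $\Est$ values differ by at most $2\epsilon'$ except with negligible probability. Summing over the $k\ell$ iterations the estimate drops by at most $2\epsilon'k\ell=7\epsilon^*/18$, so at iteration $(i,j)$ it still exceeds $1/2+\widetilde{\epsilon}$ for $\widetilde{\epsilon}=6\epsilon^*/9-2\epsilon'\counts_i^j$, with slack $\epsilon^*/3$ to absorb the $\API$-$\ProjImp$ errors; translating back shows $\qB_{i,j}[\qP^*]$ is $\Live_{\widetilde{\epsilon}}$ with overwhelming probability.

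Next I would argue feasibility of each traced word. Fix a level $i$ and the codewords $\{w_{\id_i^{j'}}\}_{j'\in[q]}$ of the queried identities. At a position $j$ where these codewords disagree, any bit value of $w_i^j$ is automatically feasible, so it suffices to know $w_i^j\neq\bot$, which follows from $\Live_{\widetilde{\epsilon}}$ and the $\GoodTrace$ component of \Cref{def:tprf-tr} (an extraction-correctness statement for live programs, which I would invoke for the relevant distinguisher). At a position where all queried codewords carry a common bit $b$, infeasibility means $w_i^j=1-b$; assuming some position is infeasible with non-negligible probability, an averaging bound over the $k\ell$ positions fixes one, and I would build a reduction that embeds the $\TPRF$ challenger there, querying the single identity $b$, simulating the whole experiment, and running $\qTrace$ up to the $(i,j)$-th $\TPRF.\qTrace$ call before handing $\qB_{i,j}[\qP^*]$ to the challenger. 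The one technical point is that this reduction lacks $\tprf.\msk_i^j$ yet needs $\TPRF.\Eval(\tprf.\msk_i^j,\cdot)$ to build $f$ and the other distinguishers; I would substitute the key $\sk$ for bit $b$, incurring only negligible error by $\TPRF$ evaluation correctness on (pseudo)random inputs. Liveness makes the handed-in distinguisher live, so $\Pr[\BadTrace_{\widetilde{\epsilon}}]\le\negl$ forces $w_i^j=b$, a contradiction; a union bound over positions then gives that every $w_i^*=w_i^1\|\cdots\|w_i^\ell$ lies in the feasible set of $Q_i$ with overwhelming probability. Finally I would reduce to $\FC$ traceability (\Cref{def:fc-tr}): a reduction plays the $\FC$ adversary, queries the randomly chosen identities $\id_i^{j'}$ to obtain their codewords, samples all $\TPRF$ instances itself, simulates the experiment, and outputs $w_i^*$; feasibility guarantees $w_i^*\in F(Q_i)$, so $\id_i^*=\FC.\tTrace(\fc.\tk_i,w_i^*)\notin Q_i$ would win the $\FC$ game, which is negligible (here the restriction $q\le N-1$ keeps $Q_i\subsetneq[N]$). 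Taking a union bound over the $k$ levels concludes the argument.

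I expect the main obstacle to be the liveness invariant together with the definitional interface to $\TPRF$. Making the state-repair bookkeeping airtight across the interleaving of $\API$ estimation, the disturbance from each $\TPRF.\qTrace$, and the repair step, while faithfully translating between the operational $\API$ values and the $\ProjImp$-based $\Live$ events needed to invoke \Cref{def:tprf-tr}, is the delicate core. A secondary subtlety, which I would verify carefully, is that the $\GoodTrace$ (non-$\bot$) guarantee must apply at positions where the corresponding $\TPRF$ instance was issued keys for both bits; I would justify this from the key-independent extraction correctness of the \cite{EC:KitNis22} scheme rather than from the single-key statement of \Cref{def:tprf-tr}. The feasibility and $\FC$ reductions are otherwise routine once liveness is in hand, the only care being the evaluation-correctness substitution that lets the $\TPRF$ reduction run without the embedded master key.
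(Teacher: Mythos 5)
Your proposal is correct and follows the same three-part skeleton as the paper's proof: a liveness invariant propagated through the $\API$/$\ProjImp$/$\Repair$ bookkeeping (your parameters $\epsilon'=0.7\epsilon/4k\ell$ and $\widetilde{\epsilon}=0.6\epsilon-2\epsilon'\counts_i^j$ match the paper's $\alpha$ exactly, including the $0.1\epsilon$ losses from \cref{thm:pi-ind} when switching between computationally close distributions), a guessed-position embedding of the $\TPRF$ challenger at a unanimous codeword position to rule out flipped bits via $\BadTrace$, and a level-guessing reduction to $\FC$ traceability for the $\BadID$ case. The one genuine divergence is how you let the $\TPRF$ reduction run without $\tprf.\msk_i^j$: the paper resolves this by building quantum oracle access to the distributions $D[\key]$ and $D[\widetilde{\key}]$ directly into the traceability definition (\cref{def:tprf-tr}) — this is precisely why that definition grants such access — and the reduction uses those oracles to perform the $\epsilon$-good test and the $\API$/$\Repair$ steps; you instead substitute the received key $\sk$ for $\msk$ inside the simulated distributions and argue closeness. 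These are the same trick at different layers (the paper's \cref{thm:tprf-lwe} uses exactly your substitution to show the KN22 scheme meets the enhanced definition), but note that bare evaluation correctness does not suffice for your route: the $\API$ and $\Repair$ procedures query the distribution in superposition over random tapes, so you need the Boneh--Zhandry lemma (\cite{EC:BonZha13}, as invoked in the proof of \cref{thm:tprf-lwe}) to lift per-input statistical closeness to indistinguishability under polynomially many superposition queries. Finally, your explicit treatment of $\bot$ outcomes at positions where the queried codewords disagree is \emph{more} careful than the paper's, which silently folds non-feasibility into unanimous-position bad bits; your observation that the non-$\bot$ guarantee there cannot be obtained from the single-key game of \cref{def:tprf-tr} (since the adversary holds keys for both identities of that instance) is accurate, and your appeal to the key-independent extraction correctness of the concrete \cite{EC:KitNis22} scheme patches a gap that the paper's abstract-assumption proof shares but does not acknowledge.
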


Since $\FC$ is known information-theoretically, $\QPRF$ is known from
OWFs, and $\TPRF$ is known
from LWE (Theorem \ref{thm:tprf-lwe}), we have the following:

\begin{corollary}\label{cor:mlt-prf-lwe}
$\MLTPRF$ satisfies traceability (Definition \ref{def:mltt-tr}), based on the
quantum hardness of LWE with sub-exponential modulus.
\end{corollary}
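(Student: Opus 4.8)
The plan is to prove that whenever the program $\qP^*$ output by $\qA$ is $\epsilon$-good, the algorithm $\MLTT.\qTrace$ recovers at every level $i \in [k]$ an identity $\id_i^* \in Q_i$, except with negligible probability. I would organize the argument around ruling out three failure modes, each matched to one building-block guarantee. The starting observation is that the derived distinguisher $\qB_{i,j}[\qP^*]$ transforms a TPRF weak-pseudorandomness sample into an MLT-PRF one: on a sample with challenge bit $0$ the adjusted value $y' = y \xor \bigoplus_{(u,v)\neq(i,j)}\TPRF.\Eval(\tprf.\msk_u^v,x)$ equals $f(x)$, while on bit $1$ it is uniform, so $\qB_{i,j}$ succeeds exactly when $\qP^*$ does. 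Consequently the estimate produced by $\Est = \API^{\cP,D[\widetilde{\key}]}_{\epsilon',\delta}$ on $\qP^*$ tracks the liveness of each $\qB_{i,j}$, and by Theorem \ref{thm:api} and Theorem \ref{thm:pi-ind} it also matches, up to an $O(\epsilon')$ shift, the value guaranteed by the passing $\epsilon$-good test, giving an initial estimate at least $t + \epsilon^* - O(\epsilon')$ where $\epsilon^* = 0.9\epsilon$.

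The first and main failure mode is that some $\Est$ estimate drops below the level needed to certify $\qB_{i,j}$ is $\widetilde{\epsilon}$-live, where $\widetilde{\epsilon} = 6\epsilon^*/9 - 2\epsilon'\counts_i^j$. This is exactly where the classical copy-the-distinguisher argument breaks, since each $\TPRF.\qTrace$ measurement may disturb $\qP^*$. To control it I would apply the State Repair guarantee (Theorem \ref{thm:repair}) with $\Est$ as the almost-projective measurement and the dilated $\TPRF.\qTrace$ outcome as the projective measurement $\widetilde{\Est}$ with outcome set $S = \bit \cup \{\bot\}$: each trace-then-repair step changes the $\Est$ value by more than $2\epsilon'$ only with probability $|S|\delta + |S|/T + 4\sqrt{\delta} = \negl(\secp)$ for $T = 1/\sqrt{\delta}$ and $\delta = 2^{-\secp}$. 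Chaining over all $\counts_i^j \le k\ell$ steps, the estimate at step $\counts_i^j$ stays at least $t + \epsilon^* - 2\epsilon'\counts_i^j - O(\epsilon')$, which with $\epsilon' = \tfrac{7\epsilon^*}{9}\cdot\tfrac{1}{4k\ell}$ exceeds $\tfrac12 + \widetilde{\epsilon} + \epsilon'$; by almost-projectivity this certifies $\ProjImp$-liveness of $\qB_{i,j}$ with parameter $\widetilde{\epsilon}$ at every step. This parameter bookkeeping is the principal obstacle.

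The second failure mode is that, conditioned on no such abort, some level-$i$ word $w_i^* = w_i^1\|\cdots\|w_i^\ell$ lies outside the feasible set $F(Q_i)$. I would argue this position by position. Wherever the queried level-$i$ codewords do not all agree, any bit is feasible, so it suffices that $w_i^j \neq \bot$; this is the completeness direction $\Pr[\GoodTrace_{\widetilde{\epsilon}}] \ge \Pr[\Live_{\widetilde{\epsilon}}] - \negl(\secp)$ of TPRF traceability, a property of $\TPRF.\qTrace$ that holds for any $\widetilde{\epsilon}$-live program regardless of how many keys are embedded, and which the certified liveness triggers. At an agreement position where every queried codeword has bit $b$, I additionally need $w_i^j = b$; this follows by combining $w_i^j \neq \bot$ with the security bound $\Pr[\BadTrace_{\widetilde{\epsilon}}] \le \negl(\secp)$ ruling out $w_i^j = 1-b$. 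The $\BadTrace$ bound is shown by a reduction that, for each guessed agreement triple $(i,j,b)$, samples all other TPRF master keys and all fingerprinting codebooks itself, obtains the single key $\TPRF.\KG(\tprf.\msk_i^j, b)$ by querying identity $b$ to the TPRF challenger, simulates MLT-PRF key generation consistently (every colluding query at this position reuses the bit-$b$ key by determinism of $\KG$), reconstructs $D[\widetilde{\key}]$ from the challenger's weak-pseudorandomness oracle (since $f(x)$ is recoverable from the target sample together with the known instances), runs $\qTrace$ up to step $\counts_i^j$ using its own keys for earlier instances, and finally submits $\qB_{i,j}[\qP^*]$ to the challenger; an extracted $1-b$ is precisely a $\BadTrace$ event. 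A union bound over the $2k\ell$ triples keeps the loss polynomial.

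The third failure mode is that some $w_i^* \in F(Q_i)$ still traces to $\id_i^* = \FC.\tTrace(\fc.\tk_i, w_i^*) \notin Q_i$. This is ruled out directly by fingerprinting-code traceability (Definition \ref{def:fc-tr}) through a reduction that embeds the level-$i$ codebook in the FC challenger, answers the random level-$i$ identities with the queried codewords, self-samples all TPRF material and the other levels' codebooks, computes $w_i^*$ by running the full tracing, and outputs $w_i^*$ to the FC challenger; since $w_i^* \in F(Q_i)$, any traced $\id_i^* \notin Q_i$ breaks FC traceability. Taking a union bound over the $k$ levels and summing the three negligible bounds yields $\Pr[\expb{\MLTT,\qA}{multi}{trace}(1^\secp,\qF,\qE,t,\epsilon,N,k) \to \top] \le \negl(\secp)$. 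Finally, instantiating the TPRF from LWE with sub-exponential modulus (Theorem \ref{thm:tprf-lwe}), the fingerprinting code information-theoretically, and $\QPRF$ from one-way functions gives Corollary \ref{cor:mlt-prf-lwe}.
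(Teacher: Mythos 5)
Your plan reproduces the paper's proof in all but one step: the same $\API$/$\Repair$ schedule with $\epsilon' = \tfrac{7\epsilon^*}{9}\cdot\tfrac{1}{4k\ell}$ and $\widetilde{\epsilon} = 6\epsilon^*/9 - 2\epsilon'\counts_i^j$, the same chained-liveness claim certifying that each $\qB_{i,j}[\qP^*]$ remains live when $\TPRF.\qTrace$ is invoked, the same simulation of the $\epsilon$-good test and the tracing steps via the challenger's quantum access to the $D[\key]$ and $D[\widetilde{\key}]$ distributions, and the same final reduction to $\FC$ traceability for the event that a feasible word traces outside $Q_i$. The deviation is in how you dispose of infeasible words, and it contains a genuine gap. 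At a position $(i,j)$ where the colluders' level-$i$ codewords do \emph{not} all agree, you argue that feasibility only requires $w_i^j \neq \bot$ and invoke the completeness direction $\Pr[\GoodTrace_{\widetilde{\epsilon}}] \ge \Pr[\Live_{\widetilde{\epsilon}}] - \negl(\secp)$, asserting that it ``holds for any $\widetilde{\epsilon}$-live program regardless of how many keys are embedded.'' Definition \ref{def:tprf-tr} does not say this: both guarantees are stated only for programs produced inside the single-key experiment, where the adversary receives $\KG(\msk,\id)$ for exactly one $\id \in \bit$. At a non-agreement position your reduction would have to simulate the MLT-PRF adversary's view, which requires \emph{both} keys $\KG(\tprf.\msk_i^j, 0)$ and $\KG(\tprf.\msk_i^j, 1)$ of the embedded instance, and the TPRF challenger supplies only one. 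So the $\bot$-outcome at non-agreement positions is not ruled out by any reduction available under the stated definition; your claim is an unproven strengthening of it.

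The paper avoids this exact trap by never invoking TPRF traceability except at agreement slots: its reduction $\qR$ guesses a slot $(c,d)$ and a bit $\beta$ and \emph{aborts} unless every sampled identity's codeword carries $\beta$ at that slot, so that all $q$ colluders' keys for instance $(c,d)$ coincide with the single challenger-provided key and the view is perfectly simulable. The existence of such an agreement slot (hence a non-negligible $\NoAbort'$ probability) is itself proved via $\FC$ traceability in Claim \ref{claim:noabort}: if no position had full agreement, a uniformly random string would lie in the feasible set and defeat $\FC.\tTrace$. To repair your argument, either restructure the second failure mode along these lines --- charge $\BadCode$ to a mismatch ($1-\beta$ or $\bot$) at a guessed agreement slot and bound the abort using the fingerprinting code --- or first prove, for the specific TPRF of Theorem \ref{thm:tprf-lwe}, the stronger property that extraction returns a non-$\bot$ answer on \emph{every} live program irrespective of provenance, and only then use it at non-agreement positions. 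Everything else in your write-up (the agreement-slot reduction with the $2k\ell$ union bound in place of the paper's guess-and-abort, the $\BadID$ reduction, and the LWE/OWF/information-theoretic instantiation) matches the paper's argument.
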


\begin{proof}
Assume for the sake of contradiction that there exists QPT $\qA$ that
breaks the traceability of $\MLTPRF$. Let $\APILive_\epsilon$ denote
the event that the $\epsilon$-good test wrt $(f, \qE, 1/2)$ passes
when applied to $\qP^*$ output by $\qA$ in the MLTT experiment. Let
$\GoodExt_\epsilon$
denote the event that $(\id_1^*, \ldots, \id_k^*)$ output by
$\qTrace(\msk, \qP^*, 0.9\epsilon)$ belongs to $Q_1 \times \ldots
\times Q_k$ where $\{Q_i\}_{i\in[k]}$ are the multi-sets defined in Definition
\ref{def:mltt-tr}. By assumption, we have $\Pr[\lnot \GoodExt_\epsilon \land
\APILive_\epsilon] = \nonnegl(\secp)$. This means
$\Pr[\APILive_\epsilon] = \nonnegl(\secp)$ and
$\Pr[\lnot \GoodExt_\epsilon \mid \APILive_\epsilon] =
\nonnegl(\secp)$.

Consider now the event $\BadCode_\epsilon$ that occurs when there
exists $s \in [k]$ such that $w_s^* \notin F(Q_s)$. Here, $w_s^*$
refers to the codeword computed by $\qTrace(\msk, \qP^*, 0.9\epsilon)$
and $F(Q_s)$ refers to the feasible set (Definition \ref{def:fc-tr})
of $Q_s$, wrt codebook $\Gamma_s$. Assume for the sake of
contradiction that $\Pr[\BadCode_\epsilon \land \APILive_\epsilon] =
\nonnegl(\secp)$. We will then construct the following reduction $\qR$
that breaks the security of the underlying TPRF $\TPRF$.

\begin{description}
\item \underline{Execution of $\qR^{\qA}$ in
    $\expa{\TPRF,\qR}{trace}(1^\secp):$}
\begin{enumerate}
\item $\qCh$ samples $\tprf.\msk \la \TPRF.\Setup(1^\secp)$ and
    $\QPRF$ keys $\key, \widetilde{\key} \la \bit^\secp$.
\item $\qR$ samples $\beta \la \bit$ and sends $\beta$ to $\qCh$. $\qCh$
sends $\tprf.\sk \la \TPRF.\KG(\allowbreak\tprf.\msk, \beta)$ to $\qR$.
\item $\qCh$ provides $\qR$ with quantum access to 
$\tprf.D[\key](\cdot), \tprf.D[\widetilde{\key}](\cdot)$ as in Definition \ref{def:tprf-tr}.
\item $\qR$ chooses $(c,d) \la [k] \times [\ell]$. It computes $\msk$
as in $\Setup(1^\secp, N, k)$, except that it sets $\tprf.\msk_c^d \seteq
\bot$. It then invokes $\qA$.

\item $\qA$ sends $q \in [N-1]$ to $\qR$.
\item For each $i \in [k] \times [q]$, $\qR$ samples $\id_i^j \la
[N]$. For each $i \in [k] \setminus \{c\}$, $\qR$ computes
$\{\sk_i^j\}_{j\in[q]}$ as in $\KG(\msk, i,
\id_i^j)$.
If for some $\id \in \{\id_c^j\}_{j\in[q]}$, it holds
that $w_\id[d] \neq \beta$ , then $\qR$ outputs
$\bot$. Otherwise, $\qR$ computes $\sk_c^j$ for each $j \in [q]$ as in
$\KG(\msk, c, \id_c^j)$, but using $\tprf.\sk$ instead of $\tprf.\sk_c^d$. Finally, $\qR$ sends $\{\id_i^j,
\sk_i^j\}_{(i,j) \in [k] \times [q]}$ to $\qA$. Define the
multi-sets $\{Q_i\}_{i\in[k]}$ where $Q_i \seteq
\{\id_i^j\}_{j\in[q]}$.

\item $\qA$ outputs a quantum program $\qP^*$.

\item $\qR$ applies the $\epsilon$-good test wrt $(f, \qE, 1/2)$ to
$\qP^*$ using quantum access to the distribution $\tprf.D[\key]$, where 
where $f$ is defined as $f(\cdot)
\seteq \TPRF.\Eval(\allowbreak\tprf.\msk, \cdot) \xor \bigoplus_{(i,j)\neq(c,d)}
\TPRF.\Eval(\tprf.\msk_i^j, \cdot)$. If the test fails, $\qR$ outputs
$\bot$.

\item $\qR$ executes Step 4. of $\qTrace(\msk, \qP^*, 0.9\epsilon)$ until just
before the iteration $(c, d)$. Then, it applies Step 4. (a) once.
These operations are performed using quantum access to
the distribution $\tprf.D[\widetilde{\key}]$.

\item Finally, $\qR$ outputs the program
$\qB^* \seteq \qB_{c,d}[\qP^*](\cdot,\cdot)$, where $\qB_{c, d}$ is as
defined in $\qTrace$.
\end{enumerate}
\end{description}

First, we note that in Step 8., $\qR$ applies the same $\epsilon$-good
test as the challenger would in the MLTT experiment. This is because
even though it doesn't have access to $\tprf.\msk$, it is provided
with quantum access to the distribution $\tprf.D[\key](\cdot)$ on its random
tape. Recall that this allows $\qR$ to obtain samples of the form
$(0, x, \TPRF.\Eval(\tprf.\msk, x))$ (in superposition), using which
it can compute samples of the form $(0, x, \Eval(\msk, x))$ (in
superposition). In other words, $\qR$ now has quantum access to the
distribution (on the random tape) that samples $(b, x, y)$  as in the
security predicate $\qE$ of MLT-PRF (Definition \ref{def:mlt-prf}).
Note that quantum access to this distribution is sufficient for $\qR$
to perform the corresponding $\API$ measurement, as apparent from the
procedure for $\API$ in \cite{TCC:Zhandry20}.

By a similar argument, we observe that in Step 9., $\qR$ performs
measurements that are equivalent to the corresponding ones from the
$\qTrace(\msk, \qP^*, 0.9\epsilon)$ algorithm, using access to the
distribution $\tprf.D[\widetilde{\key}]$. Note that the $\Repair$
procedure specified in $\qTrace$ can be performed as well, as it can
be performed with oracle access to the $\API$ measurement, which is
enabled by access to $\tprf.D[\widetilde{\key}]$.

Let $\BadBit_\epsilon'$ denote
the event that $w_c^d$ computed by the TPRF challenger in the
execution with
$\qR$ (in a way similar to $\qTrace(\msk,
\qP^*,\allowbreak
0.9\epsilon)$) satisfies $w_c^d \neq \beta$. Consider the
event $\BadBit_\epsilon$ that is defined analogous to
$\BadBit_\epsilon'$, except that it corresponds to the execution in
the MLTT experiment. Note that even though there is no $\qR$ in the
MLTT game, we can consider a hypothetical $\qR$ that simply guesses
$(c, d)$ uniformly in the MLTT game. Since the guess for $(c, d)$ is
uniform, we have that $\Pr[\BadBit_\epsilon \land \APILive_\epsilon] =
\frac{1}{k\ell}\cdot\Pr[\BadCode_\epsilon \land \APILive_\epsilon] =
\nonnegl(\secp)$.

Consider now the events $\BadCode_\epsilon', \APILive_\epsilon'$ that are analogous to 
$\BadCode_\epsilon, \allowbreak \APILive_\epsilon$, but are
defined wrt the execution of $\qR$. Let $\NoAbort'$ denote the event
that $\qR$ does not abort in Step 6. We begin by proving the following claim:

\begin{claim}\label{claim:noabort}
\begin{align}
\Pr[\BadBit_\epsilon \land \APILive_\epsilon] &= \nonnegl(\secp)\\
&\implies\\
\Pr[\NoAbort' \land \BadBit'_\epsilon \land \APILive_\epsilon'] &=
\nonnegl(\secp)
\end{align}
\end{claim}

\begin{proof}
Consider the set $Q_c =
\{\id_c^j\}_{j\in[q]}$.
We first argue that
there exists some $u \in [\ell]$ s.t. $\id_c^j[u] = \id_c^1[u]$ for each $j
\in [q]$ with probability $1 - \negl(\secp)$. Suppose not. Then, with
some probability $\nonnegl(\secp)$, it holds that every binary string
lies in $F(Q_c)$, which is the feasible set of $Q_c$. Consider now
an adversary $\qD$ that participates in the traceability game for the
fingerprinting code $\FC$. The adversary $\qD$ simply makes $q$ uniformly
random identity queries, where $q \in [N-1]$. Then, it outputs a
random binary string $w^*$ as its codeword. Clearly, conditioned on
the above bad event (which happens with $\nonnegl(\secp)$
probability), $w^*$ provides no information and yet lies in the
feasible set. As a result, tracing it fails with some non-negligible
probability, thereby breaking the traceability of the fingerprinting
code $\FC$.

Next, observe that the reduction $\qR$ guesses the index $d$ and its
value $\beta$ uniformly at random. Consequently, $\qR$ doesn't abort
with probability $(1 - \negl(\secp))\cdot(1/2\ell)$, i.e., the event
$\NoAbort'$ occurs with $\nonnegl(\secp)$ probability. Now, we define
an event $\NoAbort$ similar to $\NoAbort'$, but corresponding to the
MLTT experiment. Note that even though there is no $\qR$ in the MLTT
game, we can consider a hypothetical abort check being performed
wrt the identities sampled by the MLTT challenger. Based on the fact
that conditioned on $\NoAbort'$, the view of $\qA$ is identical to its
view in the MLTT experiment, we have that
$\Pr[\BadBit_\epsilon' \land \APILive_\epsilon' \mid \NoAbort'] =
\Pr[\BadBit_\epsilon \land \APILive_\epsilon \mid \NoAbort]$. 

Observe
that the events $\NoAbort$ and $\BadBit_\epsilon \land
\APILive_\epsilon$ are independent, as the occurrence of the former
only depends on whether the correct ``slot'' is guessed. Hence, we
have $\Pr[\BadBit_\epsilon' \land \APILive_\epsilon' \land \NoAbort']
= \Pr[\NoAbort']\cdot\Pr[\BadBit_\epsilon \land \APILive_\epsilon] =
\nonnegl(\secp)\cdot\Pr[\BadBit_\epsilon \land \APILive_\epsilon]$.
\end{proof}

We have from the above claim that $\Pr[\BadBit'_\epsilon \land
\APILive'_\epsilon \land \NoAbort'] = \nonnegl(\secp)$, due to our
assumption that $\Pr[\BadBit_\epsilon \land \APILive_\epsilon] =
\nonnegl(\secp)$.

Let $\cP' =
\{P'_{b,x,y}\}_{b,x,y}$ be the set of projective measurements
corresponding to running $\qB^*$ on input $x$ and outputting $1$ if
$\qB^*$ outputs $b$ (outputting $0$ otherwise). Let $\epsilon_c^d
\seteq 0.6\epsilon - 2\times0.7\epsilon\cdot\counts_c^d/4k\ell$ and
let $\alpha$ be a placeholder for $\epsilon_c^d$. Consider the event
$\Live_\alpha$ that corresponds to the measurement
$\ProjImp(\cP'_{D'})$ applied to $\qB^*$ outputting a probability
greater than $1/2 + \alpha$, where $D' \seteq \tprf.D[\widetilde{\key}]$.
We now prove that conditioned on
$\NoAbort' \land \APILive'_\epsilon$, $\Live_\alpha$ occurs with probability close to $1$.

\begin{remark}
In the rest of the proof, when we refer to some probability $p$ output by
a measurement and claim that $p > m$ for some $m$, we mean to say this
holds with $1 - \negl(\secp)$ probability.
\end{remark}

\begin{claim}
$\Pr[\Live_\alpha = 1 \mid \NoAbort' \land \APILive'_\epsilon] \ge 1 -
\negl(\secp)$
\end{claim}
\begin{proof}
Let $\cP = \{P_{b,x,y}\}_{b,x,y}$ be the set of projective
measurements corresponding to running $\qP^*$ on input $x$ and
outputting $1$ if $\qP^*$ outputs $b$ (outputting $0$ otherwise). Let
$D[\key]$ be the distribution that samples $b, x, y_1$ as
pseudo-random values using $\QPRF(\key, r)$ and outputs
$(b, x, y)$, where $y \seteq y_b$ and $y_0 \la f(x)$. Let $D \seteq
D[\key]$.
Notice that if $\ProjImp(\cP_D)$ is performed after Step 8, the
outcome would be greater than $1/2 + 0.8\epsilon$ by Theorem \ref{thm:api},
since the $\epsilon$-good test (Definition \ref{def:good}) is
performed in Step 8. Next, consider the distribution
$D[\widetilde{\key}]$ that is defined similar to $D[\key]$ but using
$\widetilde{\key}$ sampled by the TPRF challenger. Let $\widetilde{D}
\seteq D[\widetilde{\key}]$.  Observe that if
$\ProjImp(\cP_{\widetilde{D}})$ is applied after Step 8., the output
would be greater than $1/2 + 0.7\epsilon$ due to the fact that $D,
\widetilde{D}$ are computationally indistinguishable (by security of
$\QPRF$), and by Theorem \ref{thm:pi-ind} (with parameter $\epsilon' =
0.1\epsilon$).

Consider now the probability $p^{(1,1)}$ output by the $\API$
measurement in iteration $(1,1)$ of Step 4. (a) performed in Step 9.
of the reduction. Notice that this measurement is just the $\API$
analogue of the measurement $\ProjImp(\cP_{\widetilde{D}})$. Hence, by
applying Theorem \ref{thm:api}, we get
$p^{(1,1)} \ge 1/2 + 0.7\epsilon - 0.7\epsilon/4k\ell$. We can ignore the
outcome of the measurement in Step 4. (b). Since $\Repair$ is
performed in Step 4. (c), by Theorem \ref{thm:repair}, we have that
$p^{(1,2)} \ge 1/2 + 0.7 \epsilon - 0.7\epsilon/4k\ell -
2\times0.7\epsilon/4k\ell$, where $p^{(1,2)}$ denotes the output of the
$\API$ measurement in iteration $(1,2)$ performed in Step 9. of the
reduction. By an inductive argument, we have that $p^{(c,d)} \ge 1/2 +
0.7\epsilon - 0.7\epsilon/4k\ell -
2\times0.7\epsilon(\counts_c^d - 1)/4k\ell$, where $\counts_c^d =
(c-1)\ell + d$. Now, we need to argue about the probability output by
$\ProjImp(\cP'_{D'})$ after Step 10. Consider first the distribution
$D''$ that samples $(b, x, y) \la D'$ and outputs $(b, x, y \xor
\bigoplus_{(i,j)\neq (c,d)}\TPRF.\Eval(\tprf.\msk_i^j, x))$. Notice
that $\ProjImp(\cP'_{D'})$ is equivalent to the measurement
$\ProjImp(\cP_{D''})$, since running $\qB^*$ on outputs of $D'$ is
equivalent to running $\qP^*$ on outputs of $D''$. Furthermore, notice
that $D''$ and $\widetilde{D}$ are computationally indistinguishable.
This means $\ProjImp(\cP_{D''})$ and $\ProjImp(\cP_{\widetilde{D}})$
produce outcomes which are $0.1\epsilon$ close, by Theorem
\ref{thm:pi-ind} (with parameter $\epsilon' = 0.1\epsilon$). Since
applying $\ProjImp(\cP_{\widetilde{D}})$ to $\qP^*$ after Step 9.
gives a probability greater than $1/2 + 0.7\epsilon -
2\times0.7\epsilon\cdot\counts_c^d/4k\ell$ (by Theorem \ref{thm:api}), applying
$\ProjImp(\cP'_{D'})$ to $\qB^*$ outputs $p \ge 1/2 + 0.6\epsilon -
2\times0.7\epsilon\cdot\counts_c^d/4k\ell$. Therefore, $p \ge 1/2 +
\alpha$ with overwhelming probability.
\end{proof}

Let $\Event$ be the event $\NoAbort' \land
\APILive_\epsilon'$. From
Theorem \ref{thm:tprf-lwe}, we have that $\Pr[\GoodTrace_\alpha \mid
\Event] \ge \Pr[\Live_\alpha \mid \Event] - \negl(\secp)$ and
$\Pr[\BadTrace_\alpha \mid \Event] \le \negl(\secp)$, since $\alpha =
0.6\epsilon - 2\times 0.7\epsilon\cdot \counts_c^d/4k\ell \ge
0.6\epsilon - 0.35\epsilon = 0.25\epsilon = \poly(\secp)$. Note that
this is because $\qR$ itself performs $\APILive_\epsilon'$ and that
the probabilities in the statement of Theorem \ref{thm:tprf-lwe}
consider only the randomness generated by measurements on the pirate
program, and not the randomness of the adversary $\qA$. Consequently,
we have $\Pr[\GoodTrace_\alpha \mid \Event] \ge 1 - \negl(\secp)$ and
$\Pr[\BadTrace_\alpha \mid \Event] \le \negl(\secp)$, which
contradicts $\Pr[\BadBit'_\epsilon \land \Event] = \nonnegl(\secp)$.
This is because $\BadBit'_\epsilon$ corresponds to $w_c^d \neq \beta$,
where $w_c^d \la \TPRF.\qTrace(\tprf.\msk, \qB^*, \alpha)$. As a
result, it cannot be possible that $\Pr[\BadCode_\epsilon \land
\APILive_\epsilon] = \nonnegl(\secp)$.

Hence, we can now move on
to the case when $\Pr[\BadID_\epsilon \land \APILive_\epsilon] =
\nonnegl(\secp)$, where $\BadID_\epsilon$ is the event that there
exists some $s \in [k]$ such that $w_s^* \in F(Q_s)$ but $\id_s^*
\notin Q_s$, in the MLTT game. In this case, we can break the security of the fingerprinting code
$\FC$ using the following reduction $\qS$:

\begin{description}
\item
\underline{Execution of $\qS$ in Experiment
$\expb{\FC,\qS}{fc}{trace}(1^\secp, N):$}
\begin{enumerate}
\item $\qCh$ samples $(\Gamma, \tk) \la \FC.\Setup(1^\secp, N)$.
\item $\qS$ samples $(\msk, f, \aux_f) \la \Setup(1^\secp, N, k)$ and sends
$\aux_f$ to $\qA$.
\item $\qS$ samples $s \la [k]$.
\item For each $i \in [k] \times [q]$, $\qS$ samples $\id_i^j \la
[N]$. For $i \neq s$, $\qS$ computes $\sk_i^j \la \KG(\msk, i,
\id_i^j)$ for each $j\in[q]$. For $i = s$, $\qS$ computes $\sk_s^j$
for each $j\in[q]$ as in $\KG$, but by first querying $\id_s^1,
\ldots, \id_s^j$ to $\qCh$. $\qS$ sends
$\{\sk_i^j\}_{(i,j)\in[k]\times[q])}$ to $\qA$.
Let $W \seteq \{\id_s^j\}_{j\in[q]}$.

\item $\qA$ outputs a quantum program $\qP^*$.

\item $\qS$ applies the $\epsilon$-good test wrt $(f, \qE, 1/2)$ to
$\qP^*$. Then, $\qS$ applies $\qTrace(\msk, \qP^*,\allowbreak 0.9\epsilon)$ until
the beginning of Step 5. of the algorithm to obtain $w_1^*, \ldots,
w_k^*$.

\item $\qS$ sends $w_s^*$ to $\qCh$. If $w_s^* \notin F(W)$, $\qCh$
outputs $\bot$.

\item $\qCh$ runs $\id^* \la \FC.\tTrace(\tk, w^*_s)$. If 
$\id^* \notin W$, $\qCh$ outputs $\top$. Else, it outputs $\bot$.
\end{enumerate}
\end{description}

It is easy to see that the view of $\qA$ in the reduction is identical
to its view in the MLT-PRF game. By assumption, we have
$\Pr[\BadID_\epsilon] = \nonnegl(\secp)$. Since $\qS$ guesses an index
uniformly at random, it wins the game with probability
$1/k\cdot(\nonnegl(\secp)) = \nonnegl(\secp)$. This breaks the
traceability of $\FC$ (Definition \ref{def:fc-tr}), a contradiction.
Hence, it cannot be possible that $\Pr[\BadID_\epsilon \land
\APILive_\epsilon] = \nonnegl(\secp)$.

Let us summarize what we showed until now. We started by assuming that
$\Pr[\lnot \GoodExt_\epsilon \land \APILive_\epsilon] \ge
\nonnegl(\secp)$. Notice that $\lnot \GoodExt_\epsilon$ only occurs if
either $\BadCode_\epsilon$ occurs, or if $\BadID_\epsilon$ occurs.
Hence, we have two cases: either $\Pr[\BadCode_\epsilon \land
\APILive_\epsilon] = \nonnegl(\secp)$ or $\Pr[\BadID_\epsilon \land
\APILive_\epsilon] = \nonnegl(\secp)$. We showed that when the former
is true, we arrive at a contradiction by utilizing the security of
$\TPRF$. When the latter condition is true, we arrive at a
contradiction based on the traceability of $\FC$. Consequently, it
must be that $\Pr[\lnot \GoodExt_\epsilon \land \APILive_\epsilon] \le
\negl(\secp)$, i.e., $\MLTPRF$ satisfies traceability (Definition
\ref{def:mltt-tr}).
\end{proof}

\subsection{Bounded Collusion-Resistant PRF-SKL}\label{sec:bcr-prf-skl}

Using our compiler from Section \ref{sec:const} and the
MLT-PRF scheme from Section \ref{sec:mlt-prf-const}, we obtain a bounded
collusion-resistant PRF-SKL scheme. First, we define a PRF-SKL scheme
to be an SKL scheme (Definition \ref{def:skl}) with algorithms
$(\Setup, \qKG, \qEval, \qDel, \Vrfy)$ 
for the following
cryptographic application $(\qF, \widetilde{\qE}, t)$. Note that
$\Setup$ samples a PRF key $\prfk$, followed by setting $f
\seteq \PRF(\prfk, \cdot)$ and $\aux_f \seteq \bot$.

\begin{description}
\item $\underline{\qF(\qg, f, r)}:$
\begin{itemize}
\item Sample $x \la \cX$ where $\cX$ is the domain of the PRF $f$, using
the random tape $r$.
\item If $\qg(x) = f(x)$, output $1$. Else, output $0$.
\end{itemize}

\item $\underline{\widetilde{\qE}(f, \qP, (r, \key))}$:
\begin{itemize}
\item Ignore $\key$.
\item Sample $x \la \cX$ where $\cX$ is the domain of the PRF $f$,
using the random tape $r$. Sample $b \la \bit$ also using $r$.
\item If $b=0$, compute $y \seteq f(x)$. Otherwise, sample $y \la
\cY$ using $r$, where $\cY$ is the range of the PRF $f$.

\item Run $b' \la \qP(x, y)$.
\item Output $1$ if $b = b'$ and $0$ otherwise.
\end{itemize}

\item $\underline{t} \seteq \frac12$
\end{description}

In Corollary \ref{cor:mlt-prf-lwe}, we obtained an MLT-PRF, which is
an MLTT scheme for the application $(\qF, \qE, t)$ defined in
\ref{def:mlt-prf}. Using Theorem \ref{thm:const}, we obtain an SKL
scheme for the functionality $(\qF, \qE, t)$. Importantly, this SKL
scheme immediately implies an SKL scheme for the functionality $(\qF,
\widetilde{\qE}, t)$. This can be seen from the fact that the values
$(b, x, y)$ which determine the success probability of $\qP$, are
sampled in $\qE, \widetilde{\qE}$ from distributions which are
computationally indistinguishable. Note that this holds because of the
security of the quantum-accessible PRF $\QPRF$ utilized by $\qE$.
Consequently, the corresponding API measurements must produce similar
outcomes based on Theorem \ref{thm:api} and Theorem \ref{thm:pi-ind}.
As a result, we have the following theorem:

\begin{theorem}\label{thm:bcr-prf-skl_lwe}
Assuming the quantum hardness of LWE with sub-exponential modulus, for
every collusion-bound $q = \poly(\secp)$, there exists a PRF-SKL
scheme satisfying $q$-bounded standard-KLA security (Definition
\ref{def:std-kla}).
\end{theorem}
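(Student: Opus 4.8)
The plan is to obtain the scheme by composing the two main results established immediately above and then bridging a purely definitional gap between the two flavors of security predicate. First I would invoke \cref{cor:mlt-prf-lwe} to instantiate an $\MLTPRF$, that is, an $\MLTT$ scheme for the application $(\qF, \qE, t)$ with $t = 1/2$, whose traceability holds under LWE with sub-exponential modulus. Feeding this scheme into the compiler of \cref{thm:const}, I immediately obtain an $\SKL$ scheme $\SKL$ for $(\qF, \qE, t)$ satisfying $q$-bounded standard-KLA security for every $q = \poly(\secp)$. Evaluation and verification correctness are inherited verbatim from the compiler, since they depend only on $\qF$ and on the algorithms $(\Setup, \qKG, \qEval, \qDel, \Vrfy)$, not on the security predicate.

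The remaining step is to argue that $\SKL$, viewed as a scheme for the target application $(\qF, \widetilde{\qE}, t)$, still satisfies $q$-bounded standard-KLA security. The two experiments $\expb{\SKL,\qA}{std}{kla}$ instantiated with $\qE$ versus $\widetilde{\qE}$ are syntactically identical through the point where $\qA$ outputs $(\cert_1, \ldots, \cert_q)$ and $\qP^*$, and the certificate checks $\Vrfy(\vk_i, \cert_i) = \top$ coincide; hence the only divergence is the final $\epsilon$-good test. I would cast both tests using the \emph{same} family of projective measurements $\cP = \{\cP_{b,x,y}\}_{b,x,y}$, where $\cP_{b,x,y}$ runs $\qP^*(x,y)$ and accepts iff the output equals $b$, so that $\qE$ and $\widetilde{\qE}$ differ only in the distribution over $(b,x,y)$: pseudo-random via $\QPRF(\key, \cdot)$ for $\qE$, and uniform for $\widetilde{\qE}$. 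By security of $\QPRF$ with a uniformly hidden key, these distributions $D_{\qE}$ and $D_{\widetilde{\qE}}$ are computationally indistinguishable. Then \cref{thm:pi-ind} bounds the $\epsilon'$-shift distance between $\ProjImp(\cP_{D_{\qE}})$ and $\ProjImp(\cP_{D_{\widetilde{\qE}}})$ by $\negl(\secp)$, while \cref{thm:api} bounds the $\epsilon'$-shift distance between each $\ProjImp$ and its efficient realization $\API^{\cP, D}_{\epsilon', \delta}$ (with $\delta = 2^{-\secp}$). Chaining these via the triangle inequality for shift distance shows that the two $\API$ outcomes differ by at most $O(\epsilon')$ with overwhelming probability.

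Given a QPT $\qA$ winning the $\widetilde{\qE}$-experiment with $\nonnegl(\secp)$ probability for some inverse polynomial $\epsilon$, I would run the identical $\qA$ against the $\qE$-experiment with a slightly smaller inverse-polynomial advantage parameter, absorbing the $O(\epsilon')$ slack. Conditioned on all certificates being valid (an event of equal probability in both games), the shift-distance bound guarantees that $\qP^*$ is deemed good wrt $(f, \qE, t)$ with probability negligibly close to its probability of being $\epsilon$-good wrt $(f, \widetilde{\qE}, t)$, contradicting the $(\qF, \qE, t)$-security established in the first step. This yields the claimed PRF-SKL scheme with $q$-bounded standard-KLA security.

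The main obstacle I anticipate is the bridging step rather than the composition, which is immediate. Specifically, one must verify that the penalty for transferring from $\widetilde{\qE}$ to $\qE$ is an inverse-polynomial \emph{shift} (and not a constant loss or an additive probability loss), so that it is absorbed by the universal quantification over all $\epsilon = 1/\poly(\secp)$ in \cref{def:std-kla}. The delicate points are invoking \cref{thm:pi-ind} with the correct notion of indistinguishability, namely against distinguishers that can themselves construct $\qP^*$, and the bookkeeping of $\epsilon' = 0.1\epsilon$ and $\delta = 2^{-\secp}$ so that the almost-projective and shift-distance error terms both remain negligible while the threshold $t + 0.9\epsilon$ is preserved up to the allowed slack.
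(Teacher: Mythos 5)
Your proposal is correct and follows essentially the same route as the paper: instantiate the MLT-PRF via \cref{cor:mlt-prf-lwe}, feed it through the compiler of \cref{thm:const}, and bridge from $\qE$ to $\widetilde{\qE}$ by noting the underlying $(b,x,y)$ distributions are computationally indistinguishable (by $\QPRF$ security with a freshly sampled, hidden key), so that \cref{thm:api} and \cref{thm:pi-ind} force the corresponding $\API$ measurements to produce close outcomes. The paper states this bridging step tersely; your elaboration of the shift-distance chaining, the $\epsilon' = 0.1\epsilon$ bookkeeping, and the caveat that indistinguishability must hold against distinguishers able to construct $\qP^*$ is a faithful unpacking of the same argument, not a different one.
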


\begin{remark}
Observe that the predicate $\qE$ only tests $\qP$ using a single
challenge input $(x, y)$. However, weak pseudo-randomness security
should provide polynomially many challenge inputs to $\qP$. Note that
if $\qP$ is given oracle access to $\Dwprf$ (Definition
\ref{def:wprf}), then the single-challenge and multi-challenge notions
are equivalent due to a hybrid argument.  Importantly, we do not have
to consider access to $\Dwprf$ in the SKL security game. This is
because the SKL adversary $\qA$ can easily construct a program
$\widetilde{\qP}$ that receives sufficiently many samples from
$\Dwprf$, pre-computed using some leased-key $\qsk$.  The pirate
$\widetilde{\qP}$ can then simulate $\qP$ that expects such oracle
access.
\end{remark}

\begin{remark}
By a similar argument, testing on a single challenge input is also
sufficient for MLT-PRF. In this context, we can rely on the fact that
$\qA$ can provide $\widetilde{\qP}$ with multiple samples from
$\Dwprf$ without affecting the traceability. This is because with
overwhelming probability, all the evaluations are independent of the
identities of the generating keys, by the evaluation correctness
guarantee.
\end{remark}


\section{Verification Oracle Resilience from Tokenized MAC}\label{sec:vo}

Until now, we have focussed on obtaining standard-KLA security,
with either bounded or unbounded collusion resistance. We will now
show a black-box compiler that transforms any standard-KLA secure
scheme into one with verification oracle (VO-KLA) security. The
compiler requires a single ingredient; a uniquely quantum primitive
called tokenized MAC. Since tokenized MACs are known from OWFs
\cite{BSS21}, the cryptographic overhead of the compiler is
minimal.

\subsection{Tokenized MACs}

A tokenized MAC scheme $\TMac$ is a tuple of 4 algorithms $(\KG, \TG,
\qSign, \Vrfy)$, that are described as follows:

\begin{description}
\item $\KG(1^\secp) \ra \sk$: The key-generation algorithm takes the
security parameter as input and outputs a secret-key $\sk$.

\item $\TG(\sk) \ra \qtk$: The token-generation takes a secret-key as
input and outputs a quantum ``token'' state $\qtk$.

\item $\qSign(\qtk, \msg) \ra \sigma:$ The quantum signing algorithm takes as
input a quantum token $\qtk$ and a message $\msg \in \bit^\ell$. It
outputs a classical signature $\sigma$.

\item $\Vrfy(\sk, \sigma, \msg) \ra \top / \bot$: The classical verification algorithm
takes as input a secret key $\sk$, a signature $\sigma$ and a message
$\msg$. It outputs $\top$ or $\bot$.

A tokenized MAC scheme must satisfy the following correctness and
security requirements:

\paragraph{Correctness:} For every $\msg \in \bit^\ell$, the following
holds:

\[
\Pr\left[
\Vrfy(\sk, \sigma, \msg) \ra \bot
 \ :
\begin{array}{rl}
 &\sk \la \KG(1^\secp)\\
 &\qtk \la \TG(\sk)\\
 & \sigma \la \qSign(\qtk, \msg) \\
\end{array}
\right] \le \negl(\secp).
\]

\paragraph{Unforgeability:} The following holds for every QPT
adversary $\qA$ with classical oracle access to the verification
algorithm $\Vrfy(\sk, \cdot, \cdot)$:

\[
\Pr\left[
\begin{array}{rl}
 &\;\;\; \msg_1 \neq \msg_2\\
 &\land \; \Vrfy(\sk, \msg_1, \sigma_1) \ra \top\\
 &\land \; \Vrfy(\sk, \msg_2, \sigma_2) \ra \top\\
\end{array}
 \ :
\begin{array}{rl}
 &\sk \la \KG(1^\secp)\\
 &\qtk \la \TG(\sk)\\
 & (\msg_i, \sigma_i)_{i\in[2]} \la \qA^{\Vrfy(\sk, \cdot, \cdot)}(\qtk) \\
\end{array}
\right] \le \negl(\secp).
\]
\end{description}

\begin{theorem}[\cite{BSS21}]
Tokenized MACs exist, assuming OWFs exist.
\end{theorem}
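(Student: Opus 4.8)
The plan is to recall the subspace-coset-state construction of Ben-David and Sattath \cite{BSS21} and to reduce unforgeability to the (oracle-robust) direct-product hardness of such states, invoking a post-quantum PRF (implied by OWFs) only to keep $\sk$ short and to simulate the verification oracle. Concretely, for message space $\bit^\ell$, I would let $\KG(1^\secp)$ sample a PRF key $K$ and set $\sk \seteq K$; this $K$ deterministically defines, via the PRF, $\ell$ uniformly random subspaces $A_1,\ldots,A_\ell \subseteq \F_2^\secp$, each of dimension $\secp/2$ (assume $\secp$ even). The token is $\qtk \seteq \bigotimes_{i\in[\ell]}\ket{A_i}$, where $\ket{A_i} = |A_i|^{-1/2}\sum_{v\in A_i}\ket{v}$. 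To compute $\qSign(\qtk,\msg)$ for $\msg = \msg[1]\cdots\msg[\ell]$, I would, for each $i$, measure the $i$-th register in the computational basis to obtain $\sigma_i \in A_i$ if $\msg[i]=0$, and apply $H^{\otimes\secp}$ before measuring to obtain $\sigma_i \in \dual{A_i}$ if $\msg[i]=1$; the signature is $\sigma \seteq (\sigma_i)_{i\in[\ell]}$. Finally, $\Vrfy(\sk,\sigma,\msg)$ recomputes the $A_i$ from $\sk$ and accepts iff each $\sigma_i$ is a nonzero vector of $A_i$ (when $\msg[i]=0$) or of $\dual{A_i}$ (when $\msg[i]=1$).

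For correctness, I would observe that measuring $\ket{A_i}$ in the computational basis yields a uniform element of $A_i$, and that $H^{\otimes\secp}$ maps $\ket{A_i}$ to $\ket{\dual{A_i}}$, so a Hadamard-basis measurement yields a uniform element of $\dual{A_i}$. Since each subspace has dimension $\secp/2$, the outcome is nonzero except with probability at most $2^{-\secp/2}$, and a union bound over $i\in[\ell]$ gives verification correctness up to $\negl(\secp)$.

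For unforgeability, I would suppose a QPT $\qA$, given $\qtk$ and classical oracle access to $\Vrfy(\sk,\cdot,\cdot)$, outputs $(\msg_1,\sigma_1),(\msg_2,\sigma_2)$ with $\msg_1\neq\msg_2$ both accepted with non-negligible probability. The two messages differ at some index $i^*$; without loss of generality $\msg_1[i^*]=0$ and $\msg_2[i^*]=1$, so $\qA$ has extracted a nonzero $u\in A_{i^*}$ and a nonzero $w\in\dual{A_{i^*}}$ from the single copy $\ket{A_{i^*}}$. The reduction would guess $i^*$ (losing a factor $1/\ell$), embed its challenge subspace $A$ at position $i^*$, and sample the remaining $A_i$ itself. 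It simulates each verification query by checking membership directly for $i\neq i^*$ (where it knows $A_i$) and by using the membership oracle of the direct-product game for $i=i^*$. Extracting $(u,w)$ from the two forgeries then contradicts the direct-product hardness of $\ket{A}$. Replacing the PRF-generated subspaces by truly uniform ones changes $\qA$'s behavior by at most $\negl(\secp)$ by post-quantum PRF security, which is where the OWF assumption is used.

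The hard part will be the verification oracle: each query to $\Vrfy$ leaks information about $\sk$, and membership queries can in principle reveal the subspaces $A_i$ entirely. The crux is therefore to invoke the strengthened monogamy/direct-product property of subspace states that remains hard even when the adversary is granted a membership-testing oracle for both $A$ and $\dual{A}$ (equivalently, full knowledge of $A$); this is precisely the oracle-robust hardness established in \cite{BSS21}, and it is what renders the verification oracle harmless in the reduction above. I would treat this oracle-robust hardness as the key lemma, noting that the underlying quantum impossibility is information-theoretic and that OWFs enter only to compress $\sk$ and to answer verification queries at the non-challenge coordinates.
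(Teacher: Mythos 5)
The paper offers no proof of this theorem; it is imported verbatim from \cite{BSS21}, and your reconstruction---subspace states $\ket{A_i}$ signed by computational/Hadamard-basis measurements, secret-key verification by membership checks, correctness from $H^{\otimes\secp}\ket{A} = \ket{A^{\perp}}$, and a reduction that guesses the differing index $i^*$ (losing a factor $1/\ell$), embeds the challenge subspace there, and extracts nonzero $u \in A_{i^*}$, $w \in A_{i^*}^{\perp}$ from two valid forgeries---is exactly the construction and security argument of that cited work, so your approach matches the intended one.

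One correction is needed, though it does not break your proof: the parenthetical asserting that membership-oracle access to $A$ and $A^{\perp}$ is ``equivalently, full knowledge of $A$'' is wrong, and the key lemma would be \emph{false} in that form. Given a classical description of $A$, producing nonzero $u \in A$ and $w \in A^{\perp}$ is trivial linear algebra requiring no token whatsoever. The oracle-robust direct-product hardness of \cite{BSS21} is an unconditional query lower bound for adversaries given the single state $\ket{A}$ together with (even quantumly accessible) membership oracles---access under which $A$ itself remains hidden from a query-bounded adversary. Your reduction in fact only uses this weaker oracle access (the verification oracle reveals at position $i^*$ no more than membership bits, and strictly less, since it returns only the conjunction over all positions), so the argument goes through once the lemma is stated correctly. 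Your closing observation is also right that OWFs enter only via the PRF for key compression; with the subspaces stored explicitly in $\sk$ (still polynomial size), the single-token scheme would be statistically secure against query-bounded adversaries.
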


\subsection{The Compiler}

Let $\widetilde{\SKL} = \widetilde{\SKL}.(\Setup, \widetilde{\qKG},
\widetilde{\qEval}, \widetilde{\qDel}, \widetilde{\Vrfy})$ be an SKL
scheme for application $(\qF, \qE, t)$, satisfying standard-KLA
security (Definition \ref{def:std-kla}). Let $\TMac =
\TMac.(\allowbreak\KG,
\TG, \qSign, \Vrfy)$ be a tokenized MAC scheme.  Then, the construction
$\SKL = \SKL.(\Setup, \allowbreak\qKG, \qEval, \qDel, \Vrfy)$ satisfies VO-KLA
security (Definition \ref{def:vo-kla}). The scheme is described as follows:

\begin{description}
\item $\underline{\qKG(\msk)}:$
\begin{itemize}
\item Generate $(\widetilde{\qsk}, \widetilde{\vk}) \la \widetilde{\qKG}(\msk)$.
\item Generate $\tmac.\sk \la \TMac.\KG(1^\secp)$ and $\tmac.\qtk \la
    \TMac.\TG(\tmac.\sk)$.
\item Output $\qsk \seteq (\widetilde{\qsk}, \tmac.\qtk)$ and $\vk
    \seteq (\widetilde{\vk}, \tmac.\sk)$.
\end{itemize}

\item $\underline{\qEval(\qsk, x)}:$
\begin{itemize}
\item Parse $\qsk$ as $\qsk = (\widetilde{\qsk}, \tmac.\qtk)$.
\item Output $y \la \widetilde{\qEval}(\widetilde{\qsk}, x)$.
\end{itemize}

\item $\underline{\qDel(\qsk)}:$
\begin{itemize}
\item Parse $\qsk$ as $\qsk = (\widetilde{\qsk}, \tmac.\qtk)$.
\item Compute $\widetilde{\cert} \la \widetilde{\qDel}(\widetilde{\qsk})$.
\item Compute $\tmac.\sigma \la \TMac.\qSign(\tmac.
   \qtk, \widetilde{\cert})$.
\item Output $\cert \seteq (\widetilde{\cert}, \tmac.\sigma)$
\end{itemize}

\item $\underline{\Vrfy(\vk, \cert)}:$
\begin{itemize}
\item Parse $\vk$ as $\vk = (\widetilde{\vk}, \tmac.\sk)$ and $\cert$
as $\cert = (\widetilde{\cert}, \tmac.\sigma)$.
\item If $\TMac.\Vrfy(\tmac.\sk, \tmac.\sigma, \widetilde{\cert}) =
    \top \land \widetilde{\Vrfy}(\widetilde{\vk}, \widetilde{\cert}) =
    \top$, output $\top$. Else, output $\bot$.
\end{itemize}
\end{description}

\paragraph{Evaluation Correctness:} This follows directly from the
evaluation correctness of $\widetilde{\SKL}$, as the TMAC part is not
involved in the algorithm $\qEval$.

\paragraph{Verification Correctness:} From the correctness of $\TMac$,
we have that a signature $\tmac.\sigma$ produced using a valid token
$\tmac.\qtk$ on any (message) certificate $\widetilde{\cert}$ will be
verified successfully. Hence, verification correctness follows from
that of the underlying SKL scheme $\widetilde{\SKL}$.

\begin{theorem}[VO-Resilience]\label{thm:vo-res} If the SKL scheme $\widetilde{\SKL}$
for application $(\qF, \qE, t)$ satisfies
($q$-bounded/unbounded) standard-KLA security, then the scheme
$\SKL$ for $(\qF, \qE, t)$ satisfies ($q$-bounded/unbounded)
VO-KLA security, assuming the TMAC scheme $\TMac$ satisfies
unforgeability.
\end{theorem}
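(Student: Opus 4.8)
The plan is to prove the statement by contraposition, through a short sequence of hybrids that culminates in two reductions: one to the unforgeability of $\TMac$ and one to the standard-KLA security of $\widetilde{\SKL}$. Fix a QPT adversary $\qA$ against the VO-KLA security of $\SKL$ (Definition \ref{def:vo-kla}) and an inverse polynomial $\epsilon$, and let $G_0$ denote the real experiment $\expb{\SKL,\qA}{vo}{kla}$. I first introduce a hybrid $G_1$ that is identical to $G_0$ except that its verification oracle is made \emph{committing} in the TMAC component: for each index $i$, the first query $(i,(\widetilde{\cert},\tmac.\sigma))$ for which $\TMac.\Vrfy(\tmac.\sk_i,\tmac.\sigma,\widetilde{\cert})=\top$ causes the oracle to record $\widetilde{\cert}_i^\ast\seteq\widetilde{\cert}$, and thereafter any query whose message $\widetilde{\cert}\neq\widetilde{\cert}_i^\ast$ yet still passes the TMAC check is answered with $\bot$; the genuine $\widetilde{\Vrfy}$ check is otherwise applied exactly as in $G_0$. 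Since $G_0$ and $G_1$ diverge only when $\qA$ submits, for some index $i$, two distinct messages $\widetilde{\cert}\neq\widetilde{\cert}'$ that both pass $\TMac.\Vrfy(\tmac.\sk_i,\cdot,\cdot)$, the gap is bounded by the probability of that event. I bound it by a reduction that guesses the index $i^\ast\in[q]$, embeds the challenge token $\tmac.\qtk$ of the unforgeability game as the $i^\ast$-th token, answers TMAC-verification queries at $i^\ast$ using its own $\TMac.\Vrfy$ oracle, and generates the entire $\widetilde{\SKL}$ instance itself (so it holds every $\widetilde{\vk}_i$). If $\qA$ produces two distinct TMAC-valid messages at $i^\ast$, the reduction outputs them as a forgery, giving $\lvert\Pr[G_0\ra\top]-\Pr[G_1\ra\top]\rvert\le q\cdot\negl(\secp)=\negl(\secp)$.

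Next I reduce $G_1$ to standard-KLA security (Definition \ref{def:std-kla}). The reduction $\widetilde{\qA}$ forwards $\aux_f$ and $q$ to its challenger, receives $\{\widetilde{\qsk}_i\}_{i\in[q]}$, samples $\tmac.\sk_i\la\TMac.\KG(1^\secp)$ and $\tmac.\qtk_i\la\TMac.\TG(\tmac.\sk_i)$ itself, sets $\qsk_i\seteq(\widetilde{\qsk}_i,\tmac.\qtk_i)$, and runs $\qA$. It answers each verification query $(i,(\widetilde{\cert},\tmac.\sigma))$ as follows: if the TMAC check (which it can perform, holding $\tmac.\sk_i$) fails, or if $\widetilde{\cert}$ is a second distinct TMAC-valid message for index $i$, it returns $\bot$; otherwise it records or confirms the committed $\widetilde{\cert}_i^\ast$ and returns $\top$ \emph{optimistically}, i.e.\ without invoking $\widetilde{\Vrfy}$. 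When $\qA$ halts with $\qP^\ast$, the reduction outputs $(\widetilde{\cert}_1^\ast,\ldots,\widetilde{\cert}_q^\ast)$ together with the unchanged $\qP^\ast$, and the same $\epsilon$, to its challenger, so that the $\epsilon$-good test on $\qP^\ast$ coincides in both games.

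The main obstacle, and the crux of the argument, is that $\widetilde{\qA}$ does not hold the keys $\widetilde{\vk}_i$ and hence cannot evaluate $\widetilde{\Vrfy}$ inside the oracle; its optimistic ``always $\top$'' answer may over-report acceptance relative to $G_1$. I resolve this by coupling the simulated run to $G_1$ on identical randomness. After the TMAC restriction each index has at most one relevant message $\widetilde{\cert}_i^\ast$, and the TMAC checks agree (the reduction holds $\tmac.\sk_i$), so the two runs proceed identically up to the first query at which some committed $\widetilde{\cert}_i^\ast$ has $\widetilde{\Vrfy}(\widetilde{\vk}_i,\widetilde{\cert}_i^\ast)=\bot$; this is the only possible point of divergence. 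On every run in which $\qA$ wins $G_1$ we have $V_i=\top$ for all $i$, and since $\widetilde{\cert}_i^\ast$ is the unique TMAC-valid message for index $i$, this forces $\widetilde{\Vrfy}(\widetilde{\vk}_i,\widetilde{\cert}_i^\ast)=\top$ for every $i$. Thus no divergence occurs on winning runs, the simulated transcript equals the $G_1$ transcript there, and the reduction extracts genuinely valid certificates $\{\widetilde{\cert}_i^\ast\}_{i\in[q]}$ together with an $\epsilon$-good $\qP^\ast$, so $\Pr[\widetilde{\qA}\text{ wins}]\ge\Pr[G_1\ra\top]$. Chaining the two steps yields $\Pr[G_0\ra\top]\le\Pr[\widetilde{\qA}\text{ wins}]+\negl(\secp)$, which is negligible by standard-KLA security of $\widetilde{\SKL}$. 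The bounded and unbounded cases are identical, with $\widetilde{\qA}$ forwarding either the fixed or the adversarially chosen $q$; correctness of $\TMac$ and of $\widetilde{\SKL}$ guarantees that honestly generated certificates still verify, so the transformation preserves the advertised collusion regime.
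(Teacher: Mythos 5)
Your proof is correct and rests on the same two pillars as the paper's: TMAC unforgeability pins the adversary down to at most one relevant certificate $\widetilde{\cert}_i^\ast$ per index, and the recorded certificates together with the unchanged $\qP^\ast$ are then handed to a standard-KLA challenger. The structural difference is where the $\widetilde{\Vrfy}$ check lives. The paper's $\Hyb_1$ removes $\widetilde{\Vrfy}$ from the oracle entirely—the oracle answers $\top$ on any TMAC-valid query, and the genuine verification of the recorded certificates is deferred to the final output condition—so the standard-KLA reduction simulates $\Hyb_1$ \emph{perfectly} (the deferred check is exactly what the KLA challenger performs on the submitted certificates), and the hybrid jump is closed with a difference-lemma computation over the forgery events $\Event_0,\Event_1$. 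Your $G_1$ instead keeps the genuine $\widetilde{\Vrfy}$ inside the oracle and only adds the committing rule; since your reduction $\widetilde{\qA}$ cannot evaluate $\widetilde{\Vrfy}$, you answer optimistically and close the resulting gap with a one-sided coupling argument: a committed certificate that fails $\widetilde{\Vrfy}$ forces $V_i$ to remain $\bot$ forever under the committing rule, so winning runs of $G_1$ are divergence-free and $\Pr[G_1 \ra \top] \le \Pr[\widetilde{\qA}\text{ wins}]$. That argument is sound (the oracle interface is classical, so identical-until-divergence reasoning applies even with a quantum adversary), and the one-sided inequality suffices. In effect you merge the paper's second hybrid transition into the reduction analysis: the paper's deferred-verification hybrid buys a verbatim simulation at the cost of a hybrid further from the real game, while yours stays closer to the real experiment but pushes the verification-removal work into the coupling step; your explicit index-guessing TMAC reduction (with the $q\cdot\negl(\secp)$ loss) also fills in a step the paper leaves implicit. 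One minor point you gloss over: for indices at which no TMAC-valid query is ever made, $\widetilde{\cert}_i^\ast$ is undefined, but then $V_i$ never becomes $\top$, such runs are losing in $G_1$ anyway, and submitting $\bot$ there is harmless.
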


\begin{proof}

\fuyuki{Can't we simply use the difference lemma? Let $\Hyb_0$ and $\Hyb_1$ be experiments. Let $\Event_b$ for $b\in\bit$ is an event defined in $\Hyb_b$. Suppose we have $\Pr[1\gets\Hyb_0 \land \lnot \Event_0]=\Pr[1\gets\Hyb_1 \land \lnot \Event_1]$ and $\Pr[\Event_0]=\Pr[\Event_1]$. Then. the difference lemma says that we have $\abs{\Pr[1\gets\Hyb_0]-\Pr[1\gets\Hyb_1]}\le\Pr[\Event_0]=\Pr[\Event_1]$. By using it, I thought the proof can be simplified. We set $\Hyb_0$ as $\expb{\SKL,\qA}{vo}{kla}(1^\secp, \qF, \qE, \gamma)$.
$\Hyb_1$ is the same as $\Hyb_0$ except the behavior of the verification oracle. For each $j\in[q]$, let $(j,\cert_j^*=(\widetilde{\cert}_j^*,\tmac.\sigma_j^*))$ be the first query with respect to $j$ such that $\tmac.\sigma_j^*$ is valid. Then, in $\Hyb_1$, after $(j,\cert_j^*=(\widetilde{\cert}_j^*,\tmac.\sigma_j^*))$ is queried, given $(j,\widetilde{\cert},\tmac.\sigma)$, the verification oracle returns $\top$ only if $\widetilde{\cert}=\widetilde{\cert}_j^*$ and $\tmac.\sigma_j^*$ is valid. (This means after $(j,\cert_j^*=(\widetilde{\cert}_j^*,\tmac.\sigma_j^*))$ is queried, the verification oracle does not need to check the validity of the deletion certificates with respect to $j$.) 
We also define $\Event_b$ as the event that in $\Hyb_b$, for some $j\in[q]$, after $(j,\cert_j^*=(\widetilde{\cert}_j^*,\tmac.\sigma_j^*))$ is queried, $\qA$ queries $(j,\widetilde{\cert},\tmac.\sigma)$ such that $\widetilde{\cert}\ne\widetilde{\cert}_j^*$ and $\tmac.\sigma$ is valid. Then, we have $\Pr[1\gets\Hyb_0 \land \lnot \Event_0]=\Pr[1\gets\Hyb_1 \land \lnot \Event_1]$ and $\Pr[\Event_0]=\Pr[\Event_1]$, since $\Hyb_0$ and $\Hyb_1$ are identical unless $\qA$ makes ``a bad query''.
We can prove that $\Pr[\Event_0]=\Pr[\Event_1]=\negl(\secp)$ from the security of $\TMac$.
We can also prove that $\Pr[1\gets\Hyb_1]=\negl$ from the security of $\widetilde{\SKL}$, since in $\Hyb_1$, the verification oracle (thus the reduction) needs to check the validity of only a single deletion certificate $\widetilde{\cert}_j^*$ for each $j\in[q]$.
What do you think?
}

\nikhil{In this proof strategy, it is unclear to me how $\Pr[1 \la \Hyb_0 \land \lnot E_0] = \Pr[1 \la \Hyb_1 \land \lnot E_1]$. The problem is that even the first query $\qA$ makes can be a ``bad'' query, which means $\tmac.\sigma$ is valid, but the actual certificate $\widetilde{\cert}$ is invalid. In this case, $\Hyb_0$ will respond with $\bot$ while $\Hyb_1$ responds with $\top$. This is because $\Hyb_1$ should not actually perform verification of $\widetilde{\cert}$ but should just check $\tmac.\sigma$. Otherwise the reduction to $\widetilde{\SKL}$ will not work, because it doesn't have the verification keys. Note that the reduction to $\widetilde{\SKL}$ cannot check the validity of even a single deletion certificate $\widetilde{\cert}$. This means that the view of $\qA$ is already different in the hybrids after the first query, even before it outputs any $\TMac$ forgery.}
\fuyuki{
I thought ``$1\gets\Hyb_b \land \lnot E_b$'' implies that $\widetilde{\cert}_j^*$ is valid for every $j\in[q]$ in $\Hyb_b$ (otherwise, $\Hyb_b$ never outputs $1$), thus we do not have to consider the situation you mentioned when we focus on the event ``$1\gets\Hyb_b \land \lnot E_b$''.
More concretely, let $F_b$ be the event that $\widetilde{\cert}_j^*$ is valid for every $j\in[q]$ in $\Hyb_b$.
Then, we have $\Pr[1 \la \Hyb_b \land \lnot E_b] = \Pr[1 \la \Hyb_b \land \lnot E_b \land F_b]$, since $\Pr[1 \la \Hyb_b \land \lnot E_b \land \lnot F_b]=0$.
We also have $\Pr[1 \la \Hyb_0 \land \lnot E_0 \land F_0]=\Pr[1 \la \Hyb_1 \land \lnot E_1 \land F_1]$. (The event $F_b$ excludes the case you mentioned.)
As a result, we obtain $\Pr[1 \la \Hyb_0 \land \lnot E_0] = \Pr[1 \la \Hyb_1 \land \lnot E_1]$.
What do you think?
}
\fuyuki{I realized that $\Pr[\Event_0]$ might be different from $\Pr[\Event_1]$, but we still have $\Pr[\Event_0]=\negl(\secp)$ and $\Pr[\Event_1]=\negl(\secp)$,which would be sufficient, since
\begin{align}
&\abs{\Pr[1\gets\Hyb_0]-\Pr[1\gets\Hyb_1]}\\
&\le\abs{\Pr[1\gets\Hyb_0\land E_0]-\Pr[1\gets\Hyb_1\land E_1]}
+\abs{\Pr[1\gets\Hyb_0\land \lnot E_0]-\Pr[1\gets\Hyb_1\land \lnot E_1]}\\
&=\abs{\Pr[1\gets\Hyb_0\land E_0]-\Pr[1\gets\Hyb_1\land E_1]}\\
&=\abs{\Pr[E_0]\cdot\Pr[1\gets\Hyb_0 | E_0]-\Pr[E_1]\cdot\Pr[1\gets\Hyb_1| E_1]}
=\negl(\secp).
\end{align}
}
\nikhil{Got it. Changed the proof.}

Assume that $\SKL$ does not satisfy unbounded VO-KLA security (the
case of $q$-bounded security is similar). Then,
for some $\epsilon = 1/\poly(\secp)$,
there exists a QPT attacker $\qA$ that wins with $\nonnegl(\secp)$ 
probability in the experiment
$\expb{\SKL,\qA}{vo}{kla}(1^\secp, \qF, \qE, t, \epsilon)$.
Consider now the following hybrid experiments:

\begin{description}
\item \underline{$\Hyb_0$:} This is the same as the experiment 
$\expb{\SKL,\qA}{vo}{kla}(1^\secp, \qF, \qE, t, \epsilon)$, which
executes as follows:

\begin{enumerate}
\item $\qCh$ samples $(\msk, f, \aux_f) \la \Setup(1^\secp, \bot)$ and
sends $\aux_f$ to $\qA$.
\item $\qA$ sends $q = \poly(\secp)$ to $\qCh$.
\item For each $i \in [q]$, $\qCh$ computes $(\qsk_i, \vk_i) \la
    \qKG(\msk)$. It sends $(\qsk_1, \ldots, \qsk_q)$ to $\qA$.
\item For each $i\in[q]$, $\qCh$ sets $V_i \seteq \bot$.
\item When $\qA$ makes a query $(j, \cert)$ to $\Oracle{\Vrfy}$,
$\qCh$ performs the following:
\begin{itemize}
\item Compute $d \la \Vrfy(\vk_j, \cert)$.
\item If $V_j = \bot$, set $V_j \seteq d$. Return $d$.
\end{itemize}
\item $\qA$ outputs a quantum program $\qP^* = (U, \rho)$ to $\qCh$.
\item If $V_i = \top$ for each $i \in [q]$ and $\qP^*$ is tested to
be $\epsilon$-good wrt $(f, \qE, t)$, then $\qCh$ outputs $\top$.
Else, it outputs $\bot$.
\end{enumerate}

\item \underline{$\Hyb_1$:} This is similar to $\Hyb_0$, with the
following differences colored in $\textcolor{red}{red}$.

\begin{enumerate}
\item $\qCh$ samples $(\msk, f, \aux_f) \la \Setup(1^\secp, \bot)$ and
sends $\aux_f$ to $\qA$.
\item $\qA$ sends $q = \poly(\secp)$ to $\qCh$.
\item For each $i \in [q]$, $\qCh$ computes $(\qsk_i, \vk_i) \la
\qKG(\msk)$. It sends $(\qsk_1, \ldots, \qsk_q)$ to $\qA$.
\textcolor{red}{For each $i\in[q]$, parse $\vk_i = (\widetilde{\vk}_i,
\tmac.\sk_i)$}.

\item \textcolor{red}{For each $i\in[q]$, $\qCh$ sets
$\widetilde{\cert}_i \seteq \bot$.}
\item When $\qA$ makes a query $(j, \cert)$ to $\Oracle{\Vrfy}$,
$\qCh$ parses $\cert = (\widetilde{\cert}, \tmac.\sigma)$ and
performs the following, where $\{\tmac.\sk_j\}_{j\in[q]}$ are
generated as part of $\{\vk_j\}_{j\in[q]}$.

\begin{itemize}
\item \textcolor{red}{Check if $\TMac.\Vrfy(\tmac.\sk_j,
\widetilde{\cert}, \tmac.\sigma) = \top$. If so, set
$\widetilde{\cert}_j \seteq
\widetilde{\cert}$ and output $\top$. Else, output $\bot$.}
\end{itemize}

\item $\qA$ outputs a quantum program $\qP^* = (U, \rho)$ to $\qCh$.
\item \textcolor{red}{If
$\widetilde{\Vrfy}(\widetilde{\vk}_i, \widetilde{\cert_i}) = \top$ for each
$i\in[q]$} and $\qP^*$ is tested to
be $\epsilon$-good wrt $(f, \qE, t)$, then $\qCh$ outputs $\top$.
Else, it outputs $\bot$.
\end{enumerate}

\end{description}

Now, let $\Event_0$ be the event that $\qA$ makes queries $(j,
\cert = (\widetilde{\cert}, \tmac.\sigma))$ and $(j, \cert' =
(\widetilde{\cert}', \tmac.\sigma'))$ such that
$\TMac.\Vrfy(\tmac.\sk_j, \widetilde{\cert}, \tmac.\sigma) =
\TMac.\Vrfy(\tmac.\sk_j, \widetilde{\cert}', \tmac.\sigma') = \top$
and $\widetilde{\cert} \neq \widetilde{\cert}'$
in $\Hyb_0$. Let $\Event_1$ be defined similarly, but corresponding
to $\Hyb_1$. It is easy to see that $\Pr[\Event_0] = \negl(\secp)$ and
$\Pr[\Event_1] = \negl(\secp)$, as otherwise, the security of $\TMac$ is
compromised.

Observe now that $\Pr[\Hyb_0 \ra \top  \land \lnot \Event_0] =
\Pr[\Hyb_1 \ra \top \land \lnot \Event_1]$. This is because the
only time $\qA$ obtains a different response from $\Hyb_0$ and
$\Hyb_1$ is when it makes a ``bad'' query $(j, \cert =
(\widetilde{\cert}, \tmac.\sigma))$ such that
$\widetilde{\Vrfy}(\widetilde{\vk}_j, \widetilde{\cert}) = \bot$ and
$\TMac.\Vrfy(\tmac.\sk_j, \widetilde{\cert}, \tmac.\sigma) = \top$.
However, when the events $\lnot \Event_0, \lnot \Event_1$ occur in
$\Hyb_0, \Hyb_1$ respectively, it is clear that $\qA$ would cause
both these
hybrids to output $\bot$, assuming it makes such a ``bad'' query at
any point. Consequently, we have that:

\begin{align}
&\abs{\Pr[\Hyb_0 \ra \top]-\Pr[\Hyb_1 \ra \top]}\\
&\le\abs{\Pr[\Hyb_0 \ra \top\land \Event_0]-\Pr[\Hyb_1 \ra \top\land
\Event_1]}\\
&+\abs{\Pr[\Hyb_0 \ra \top\land \lnot \Event_0]-\Pr[\Hyb_1 \ra
\top\land \lnot \Event_1]}\\
&=\abs{\Pr[\Hyb_0 \ra \top \land \Event_0]-\Pr[\Hyb_1 \ra \top\land
\Event_1]}\\
&=\abs{\Pr[\Event_0]\cdot\Pr[\Hyb_0 \ra \top|
\Event_0]-\Pr[\Event_1]\cdot\Pr[\Hyb_1 \ra \top| \Event_1]}
=\negl(\secp).
\end{align}

Now, by the assumption that $\Pr[\Hyb_0 \ra \top] =
\nonnegl(\secp)$, we have that $\Pr[\Hyb_1 \ra \top] =
\nonnegl(\secp)$. We now show the following reduction $\qR$ that
simulates $\Hyb_1$ for $\qA$ and breaks the standard-KLA security of
$\widetilde{\SKL}$.

\begin{description}
\item \underline{Execution of $\qR^\qA$ in
$\expb{\widetilde{\SKL},\qR}{std}{kla}(1^\secp, \qF, \qE, t,
\epsilon):$}

\begin{enumerate}
\item $\qCh$ samples $(\msk, f, \aux_f) \la
\Setup(1^\secp, \bot)$ and sends $\aux_f$ to $\qR$.
\item $\qR$ sends $\aux_f$ to $\qA$. $\qA$ sends $q = \poly(\secp)$ to
$\qR$. $\qR$ sends $q$ to $\qCh$.
\item For each $i \in [q]$, $\qCh$ computes $(\widetilde{\qsk}_i,
\widetilde{\vk}_i) \la \widetilde{\qKG}(\msk)$. It sends
$(\widetilde{\qsk}_i)_{i\in[q]}$ to $\qR$.
\item For each $i \in [q]$, $\qR$ computes $\tmac.\sk_i \la
\TMac.\KG(1^\secp)$ and $\tmac.\qtk_i \la \TMac.\TG(\tmac.\sk_i)$.
It then sets $\qsk_i \seteq (\widetilde{\qsk}_i, \tmac.\qtk_i)$. It
sends $(\qsk_i)_{i\in[q]}$ to $\qA$.

\item For every $i \in [q]$, $\qR$ initializes $\widetilde{\cert}_i
\seteq \bot$.

\item When $\qA$ makes a query $(j, \cert)$, $\qR$ parses $\cert =
(\widetilde{\cert}, \tmac.\sigma)$. If $\TMac.\Vrfy(\tmac.\sk_j,
\tmac.\sigma, \widetilde{\cert}) = \top$, $\qR$ sets
$\widetilde{\cert}_j \seteq
\widetilde{\cert}$ and responds with $\top$. Else, it responds with $\bot$.

\item $\qA$ outputs a quantum program $\qP^*$ to $\qR$. $\qR$
sends $\widetilde{\cert}_1, \ldots, \widetilde{\cert}_q$ to $\qCh$ along with $\qP^*$.

\item If for each $i \in [q]$, it holds that
$\widetilde{\Vrfy}(\widetilde{\vk}_i,
\widetilde{\cert}_i) = \top$ and $\qP^*$ is tested to be $\epsilon$-good wrt
$(f, \qE, t)$, then $\qCh$ outputs $\top$. Else, it outputs $\bot$.

\end{enumerate}
\end{description}

Observe that the view of $\qA$ is identical in the hybrid $\Hyb_1$
and the reduction $\qR$. This means
that with non-negligible
probability, $\widetilde{\Vrfy}(\widetilde{\vk}_i, \widetilde{\cert}_i) =
\top$ for each $i \in [q]$ and $\qP^*$ is $\epsilon$-good wrt $(f,
\qE, t)$. This breaks the standard-KLA security of $\widetilde{\SKL}$.
\end{proof}

Finally, from Theorem \ref{thm:const} and Theorem \ref{thm:vo-res}, we
have the following corollary:

\begin{corollary}\label{cor:mltt_vo-kla}
If there exists an MLTT scheme for application $(\qF, \qE, t)$
satisfying traceability (Definition \ref{def:mltt-tr}), there exists
an SKL scheme for $(\qF, \qE, t)$ satisfying $q$-bounded VO-KLA
security (Definition \ref{def:vo-kla}).
\end{corollary}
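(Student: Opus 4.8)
The plan is to obtain the claimed scheme by composing the two compilers established earlier, namely the SKL-from-MLTT construction of Section \ref{sec:const} (Theorem \ref{thm:const}) and the verification-oracle resilience compiler of Section \ref{sec:vo} (Theorem \ref{thm:vo-res}). Concretely, starting from the given MLTT scheme $\MLTT$ for $(\qF, \qE, t)$ satisfying traceability (Definition \ref{def:mltt-tr}), I would first instantiate the Section \ref{sec:const} compiler to obtain an intermediate scheme $\widetilde{\SKL}$ for the same application $(\qF, \qE, t)$. Since the MLTT traceability notion guarantees correctness and traceability for \emph{every} $N = \poly(\secp)$, the compiler's setup may set $N \seteq 128q^2$ for any fixed collusion-bound $q = \poly(\secp)$, and Theorem \ref{thm:const} then yields that $\widetilde{\SKL}$ satisfies $q$-bounded standard-KLA security (Definition \ref{def:std-kla}).

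Next I would feed $\widetilde{\SKL}$ into the VO-resilience compiler of Section \ref{sec:vo}, using a tokenized MAC $\TMac$ as the only additional ingredient. Theorem \ref{thm:vo-res} states precisely that if $\widetilde{\SKL}$ has $q$-bounded standard-KLA security for $(\qF, \qE, t)$ and $\TMac$ is unforgeable, then the resulting scheme $\SKL$ has $q$-bounded VO-KLA security (Definition \ref{def:vo-kla}) for the \emph{same} application. The key observation making the composition immediate is that the application $(\qF, \qE, t)$ is preserved verbatim through both transformations: neither compiler modifies the correctness or security predicates, so no re-derivation of application-level semantics is required, and the output precondition of Theorem \ref{thm:const} (standard-KLA security for $(\qF,\qE,t)$) exactly matches the input hypothesis of Theorem \ref{thm:vo-res}. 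Chaining the two theorems therefore produces an SKL scheme with $q$-bounded VO-KLA security, as claimed, and the ``$q$-bounded'' qualifier carries through consistently since Theorem \ref{thm:vo-res} is stated to preserve the bounded/unbounded distinction.

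The only point requiring care is the provenance of the tokenized MAC. Theorem \ref{thm:vo-res} assumes $\TMac$ is unforgeable, which by the cited result of \cite{BSS21} holds assuming one-way functions. I would note that OWFs are already available in this setting: the MLTT syntax (Definition \ref{def:mltt-cor}) treats $\KG$ as deterministic, and this is used essentially in the Section \ref{sec:const} compiler, which applies $\MLTT.\KG(\mltt.\msk, i, \cdot)$ in superposition and hence needs it to be a fixed classical function; the wlog justification for deterministic $\KG$ presupposes post-quantum PRFs, which imply post-quantum OWFs. Thus the OWF (and hence $\TMac$) requirement is subsumed by the assumptions already underlying the existence of a usable $\MLTT$ scheme, and no additional hypothesis is introduced. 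I do not expect any genuine obstacle here, as the corollary is a pure composition of two previously proven compilers; the mildest subtlety is simply this bookkeeping of the $\TMac$/OWF ingredient.
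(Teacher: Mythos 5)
Your proposal is correct and follows exactly the paper's route: the corollary is obtained by chaining Theorem \ref{thm:const} (MLTT traceability $\Rightarrow$ $q$-bounded standard-KLA SKL for the same $(\qF, \qE, t)$) with Theorem \ref{thm:vo-res} (standard-KLA plus an unforgeable tokenized MAC $\Rightarrow$ $q$-bounded VO-KLA), with the application preserved verbatim through both compilers. Your bookkeeping remark on the tokenized-MAC ingredient is consistent with the paper, which does assume OWFs for this upgrade, though your specific inference --- that the deterministic-$\KG$ convention presupposes post-quantum PRFs and hence OWFs --- is not strictly forced (a given MLTT scheme could have deterministic $\KG$ unconditionally); this side point does not affect the composition itself.
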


Likewise, from Theorem \ref{thm:bcr-prf-skl_lwe} and Theorem \ref{thm:vo-res}, we have

\begin{corollary}
Assuming the quantum hardness of LWE with sub-exponential modulus, for
every collusion-bound $q = \poly(\secp)$, there exists a PRF-SKL
scheme satisfying $q$-bounded VO-KLA security (Definition
\ref{def:vo-kla}).
\end{corollary}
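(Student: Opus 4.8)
The plan is to obtain the claimed scheme by composing the two results that immediately precede the statement, while checking that every primitive needed reduces to LWE. First I would invoke Theorem~\ref{thm:bcr-prf-skl_lwe}: under the quantum hardness of LWE with sub-exponential modulus, for every polynomial collusion-bound $q = \poly(\secp)$ there is a PRF-SKL scheme $\widetilde{\SKL}$ for the application $(\qF, \widetilde{\qE}, t)$ (with threshold $t = 1/2$) satisfying $q$-bounded standard-KLA security. This scheme will play the role of the base scheme to be upgraded.

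Next I would supply the only other ingredient required by the verification-oracle resilience compiler of Section~\ref{sec:vo}, namely a tokenized MAC $\TMac$ satisfying unforgeability. Since tokenized MACs exist assuming one-way functions, and one-way functions are implied by LWE, the sub-exponential-modulus LWE assumption already yields such a $\TMac$; in particular no assumption beyond LWE is introduced at this step.

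Then I would instantiate the compiler of Theorem~\ref{thm:vo-res} (VO-Resilience) with $\widetilde{\SKL}$ as the standard-KLA secure base scheme and $\TMac$ as the tokenized MAC. Because that theorem preserves the underlying application $(\qF, \widetilde{\qE}, t)$ and maps $q$-bounded standard-KLA security to $q$-bounded VO-KLA security (for the very same $q$), the resulting scheme $\SKL$ is a PRF-SKL scheme satisfying $q$-bounded VO-KLA security, as desired.

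The argument is a direct two-step composition, so there is no genuine technical obstacle; the only point requiring care is bookkeeping. Specifically, one must confirm that the PRF application is carried unchanged through both steps — i.e., that the base scheme produced in the first step is the PRF-SKL scheme for $(\qF, \widetilde{\qE}, t)$ (and not merely the intermediate predicate $\qE$ used inside the MLT-PRF analysis, which is equivalent for the end goal as argued in Section~\ref{sec:bcr-prf-skl}) — and that the collusion parameter $q$ is preserved by the compiler, which Theorem~\ref{thm:vo-res} asserts explicitly in the $q$-bounded case. With these checks in place the corollary follows immediately.
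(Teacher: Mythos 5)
Your proposal is correct and follows essentially the same route as the paper, which derives this corollary directly by composing Theorem~\ref{thm:bcr-prf-skl_lwe} (the $q$-bounded standard-KLA secure PRF-SKL scheme from LWE with sub-exponential modulus) with the VO-resilience compiler of Theorem~\ref{thm:vo-res}, instantiating the tokenized MAC from OWFs, which LWE implies. Your additional bookkeeping checks---that the application $(\qF, \widetilde{\qE}, t)$ is carried unchanged through the compiler and that the collusion bound $q$ is preserved---match how the paper applies these theorems.
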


\section{Unbounded Collusion-Resistant SKL for Signatures}\label{sec:sig}

In this section, we construct unbounded collusion-resistant digital
signatures with secure key leasing (DS-SKL). We rely on a DS-SKL
scheme that is secure given a single leased key, along with a
classical post-quantum digital signature scheme. Since the former
building block is known from the SIS assumption
(\cite{EC:KitMorYam25}) and the latter is implied by it, we obtain our
result assuming SIS holds. Note that we focus on the more natural
notion of DS-SKL with static signing keys as in \cite{EC:KitMorYam25},
and unlike in \cite{TQC:MorPorYam24}.

\subsection{Preparation}

\begin{definition}[DS-SKL]\label{def:ds-skl}
An SKL scheme for digital signatures (DS-SKL) consists of algorithms
$(\Setup, \qKG, \qEval, \qDel, \allowbreak \Vrfy)$ as per the syntax of SKL
(Definition \ref{def:skl}) along with algorithms $(\Sign, \SigVrfy)$.
Note that $\Setup$ samples a pair of keys $\ssk, \svk$ which are the
signature signing key and the signature verification key respectively.
Then,
$f$ is
interpreted as $f \seteq \Sign(\ssk, \cdot) \| \svk$ and $\aux_f$ as
$\aux_f \seteq \svk$. The correctness and security guarantees are
described by the tuple $(\qF, \qE, t)$ which are specified as
follows:

\begin{description}
\item $\underline{\qF(\qg, f, r)}:$
\begin{itemize}
\item Obtain $\svk$ from $f$.
\item Sample $\msg$ uniformly from the message space using the
randomness $r$.
\item If $\SigVrfy(\svk, \msg, \qg(\msg)) = 1$, output $1$. Else, output $0$.
\end{itemize}

\item $\underline{\qE(f, \qP, (r, \key))}$:
\begin{itemize}
\item Ignore $\key$.
\item Obtain $\svk$ from $f$.
\item Sample $\msg$ uniformly from the message space using the
randomness $r$.
\item If $\SigVrfy(\svk, \msg, \qP(1^\secp, \msg)) = 1$, output 1.
Else, output 0.
\end{itemize}

\item $\underline{t} \seteq 0$.
\end{description}
\end{definition}

We now define a classical digital signature scheme as follows:

\begin{definition}[Digital Signature Scheme] A digital signature
scheme consists of algorithms $(\Gen, \Sign, \Vrfy)$ which are
specified as follows:

\begin{description}
\item $\Gen(1^\secp) \ra (\sk, \vk):$ The key generation algorithm
outputs a secret signing key $\sk$ and a public verification key $\vk$.
\item $\Sign(\sk, \msg) \ra \sigma:$ The signing algorithm takes the signing key
$\sk$ and a message $\msg \in \cM$ as inputs. It generates a signature
$\sigma$ as output.
\item $\Vrfy(\vk, \msg, \sigma) \ra \top / \bot:$ The verification algorithm takes the
verification key $\vk$, a message $\msg$ and a signature $\sigma$ as
inputs. It outputs $\top$ (valid) or $\bot$ (invalid).
\end{description}

\paragraph{Correctness:}

The correctness requires the following to hold for all $\msg \in \cM$
for a message space $\cM$:

\[
\Pr\left[
\Vrfy(\vk, \msg, \sigma) = \bot
 \ :
\begin{array}{rl}
 &(\sk, \vk) \la \Gen(1^\secp)\\
 &\sigma \la \Sign(\sk, \msg)
\end{array}
\right] \le \negl(\secp).
\]
\end{definition}

\paragraph{Unforgeability:}

The unforgeability security guarantees that the following holds for
all QPT adversaries $\qA$, where $Q$ is the set of messages queried by
$\qA$ to the oracle $\Sign(\sk, \cdot)$:

\[
\Pr\left[
\Vrfy(\vk, \msg, \sigma) = \top \land \msg \notin Q
 \ :
\begin{array}{rl}
 &(\sk, \vk) \la \Gen(1^\secp)\\
 &(\msg, \sigma) \la \qA^{\Sign(\sk, \cdot)}(1^\secp, \vk)
\end{array}
\right] \le \negl(\secp).
\]

\subsection{The Compiler}

The DS-SKL scheme $\DSSKL$ uses the DS-SKL scheme $\widetilde{\DSSKL}
= (\widetilde{\Setup}, \widetilde{\qKG}, \allowbreak\widetilde{\qEval},
\widetilde{\qDel}, \widetilde{\Vrfy}, \widetilde{\Sign},
\widetilde{\SigVrfy})$ and the digital signature
scheme $\DSig = (\Gen, \Sign, \allowbreak\Vrfy)$ as building blocks. It consists
of $(\Setup, \qKG, \qEval, \qDel, \Vrfy)$ and the algorithms $(\Sign,
\SigVrfy)$:

\begin{description}
\item $\underline{\Setup(1^\secp, \bot):}$
\begin{itemize}
\item Sample $(\sig.\sk, \sig.\vk) \la \DSig.\Gen(1^\secp)$.
\item Output $\big(\msk \seteq (\sig.\sk, \sig.\vk), f \seteq
\DSig.\Sign(\sig.\sk, \cdot) \| \sig.\vk, \aux_f \seteq
\sig.\vk\big)$.
\end{itemize}

\item $\underline{\qKG(\msk)}:$

\begin{itemize}
\item Parse $\msk$ as $\msk = (\sig.\sk, \sig.\vk)$.
\item Sample $(\widetilde{\msk}, \widetilde{f},
    \widetilde{\aux}_{\widetilde{f}})
\la \widetilde{\Setup}(1^\secp, \bot)$.
\item Sample $(\widetilde{\qsk}, \widetilde{\vk}) \la
    \widetilde{\qKG}(\widetilde{\msk})$.
\item Obtain $\widetilde{\svk}$ from $\widetilde{f}$ and compute
$\sig.\sigma \la \DSig.\Sign(\sig.\sk, \widetilde{\svk})$.

\item Set $\vk \seteq \widetilde{\vk}$ and $\qsk \seteq
(\widetilde{\qsk}, \widetilde{\svk}, \sig.\sigma)$.
\item Output $(\qsk, \vk)$.
\end{itemize}

\item $\underline{\qEval(\qsk, \msg)}:$
\begin{itemize}
\item Parse $\qsk = (\widetilde{\qsk}, \widetilde{\svk},
\sig.\sigma)$.
\item Compute $\widetilde{\sigma} \la
\widetilde{\qEval}(\widetilde{\qsk}, \msg)$.
\item Output $\sigma' \seteq (\widetilde{\sigma}, \widetilde{\svk}, \sig.\sigma)$.
\end{itemize}

\item $\underline{\qDel(\qsk)}:$
\begin{itemize}
\item Parse $\qsk = (\widetilde{\qsk}, \widetilde{\svk},
\sig.\sigma)$.
\item Output $\widetilde{\cert} \la
\widetilde{\qDel}(\widetilde{\qsk})$.
\end{itemize}

\item $\underline{\Vrfy(\vk, \cert)}:$
\begin{itemize}
\item Parse $\vk = \widetilde{\vk}$.
\item Output $\widetilde{\Vrfy}(\widetilde{\vk}, \cert)$.
\end{itemize}

\item $\underline{\Sign(\sig.\sk, \msg)}:$
\begin{itemize}
\item Output $\sig.\sigma' \la \DSig.\Sign(\sig.\sk, \msg)$.
\end{itemize}

\item $\underline{\SigVrfy(\sig.\vk, \msg, \sigma')}:$
\begin{itemize}
\item If $\DSig.\Vrfy(\sig.\vk, \msg, \sigma') = \top$, output $\top$.
\item Otherwise, parse $\sigma' = (\widetilde{\sigma},
    \widetilde{\svk}, \sig.\sigma)$. If $\DSig.\Vrfy(\sig.\vk,
    \widetilde{\svk}, \sig.\sigma) = \top \land
    \widetilde{\SigVrfy}(\widetilde{\svk}, \msg, \widetilde{\sigma}) =
    \top$, output $\top$.
\item Else, output $\bot$.
\end{itemize}

\end{description}

\paragraph{Evaluation Correctness:}
Observe that $\qEval(\qsk, \msg)$ outputs $\sigma' =
(\widetilde{\sigma}, \widetilde{\svk}, \sig.\sigma)$ where $\qsk =
(\widetilde{\qsk}, \widetilde{\svk}, \sig.\sigma)$. Notice that $\qKG$
generates $\sig.\sigma \la \DSig.\Sign(\sig.\sk, \widetilde{\svk})$.
Consequently, $\DSig.\Vrfy(\sig.\vk,\allowbreak \widetilde{\svk},
\sig.\sigma) = \top$ holds with overwhelming probability by the
correctness of $\DSig$. Moreover, $\widetilde{\SigVrfy}(\widetilde{\svk},
\msg, \widetilde{\sigma})\allowbreak = \top$ holds with overwhelming
probability by the correctness of $\widetilde{\DSSKL}$, since
$\widetilde{\sigma}$ is generated as $\widetilde{\sigma} \la
\widetilde{\Eval}(\widetilde{\qsk}, \msg)$. Consequently,
$\SigVrfy(\sig.\vk, \msg, \sigma')$ outputs $\top$ with overwhelming
probability. By the correctness of $\widetilde{\DSSKL}$, we also have that $\widetilde{\qsk}$ is almost undisturbed. Hence, the post-evaluation state of $\DSSKL$ is close in trace distance to $\qsk$.

\begin{theorem}\label{thm:ds-skl-comp}
Let $\widetilde{\DSSKL}$ be a DS-SKL scheme satisfying $1$-bounded
standard-KLA security and $\DSig$ be a classical post-quantum digital signature scheme. Then, the construction $\DSSKL$ satisfies unbounded
standard-KLA security (Definition \ref{def:std-kla}).
\end{theorem}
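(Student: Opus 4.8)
The plan is to show that a successful unbounded standard-KLA attacker on $\DSSKL$ yields either a forger for $\DSig$ or a $1$-bounded standard-KLA attacker against a single instance of $\widetilde{\DSSKL}$. Fix a QPT adversary $\qA$ and suppose it wins $\expb{\DSSKL,\qA}{std}{kla}(1^\secp,\qF,\qE,0,\epsilon)$ with non-negligible probability for some $\epsilon=1/\poly(\secp)$; let $\{\widetilde{\svk}_i\}_{i\in[q]}$ and $\{\sig.\sigma_i\}_{i\in[q]}$ be the inner verification keys and their $\DSig$-signatures embedded in the leased keys $\qsk_i=(\widetilde{\qsk}_i,\widetilde{\svk}_i,\sig.\sigma_i)$. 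The starting point is that $\qA$'s winning condition forces $\widetilde{\Vrfy}(\widetilde{\vk}_i,\cert_i)=\top$ for every $i$ (because $\Vrfy(\vk_i,\cert_i)=\widetilde{\Vrfy}(\widetilde{\vk}_i,\cert_i)$), and that $\qP^*$ is $\epsilon$-good with threshold $t=0$, i.e., it outputs values accepted by $\SigVrfy(\sig.\vk,\cdot,\cdot)$ on random messages with noticeable probability. By definition of $\SigVrfy$, any accepting output on a fresh $\msg$ falls into exactly one of three disjoint cases: (a) a plain $\DSig$ signature verifying on $\msg$; (b) a tuple $(\widetilde{\sigma},\widetilde{\svk},\sig.\sigma)$ whose $\sig.\sigma$ verifies on a fresh $\widetilde{\svk}\notin\{\widetilde{\svk}_i\}_{i\in[q]}$; or (c) such a tuple with $\widetilde{\svk}=\widetilde{\svk}_j$ for some $j\in[q]$ (the $\widetilde{\svk}_i$ are distinct with overwhelming probability). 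First I would rule out (a) and (b), then exploit (c).

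Cases (a) and (b) are forgery events for $\DSig$. I would build a forger that, given $\sig.\vk$ and a signing oracle, samples all $\widetilde{\DSSKL}$ instances itself, obtains each $\sig.\sigma_i$ by querying the oracle on the message $\widetilde{\svk}_i$, and hence simulates the entire $\DSSKL$ game for $\qA$ perfectly. After $\qA$ returns $\qP^*$, the forger runs it on one freshly sampled $\msg$ and, if the output is of type (a) or (b), outputs the underlying $\DSig$ pair, namely $(\msg,\sigma')$ or $(\widetilde{\svk},\sig.\sigma)$. In both cases the forged message (either $\msg$ or the fresh $\widetilde{\svk}$) was never queried to the signing oracle, so unforgeability of $\DSig$ forces the expected mass of $\qP^*$ on cases (a) and (b) to be negligible. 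To align this with the $\epsilon$-good test I would express the forgery mass as the value of the projective implementation of the corresponding binary predicate, and invoke the gentle measurement lemma (Lemma~\ref{lma:gentle}) to argue that the residual state after an accepting $\epsilon$-good test still carries this mass, bounding it by $\negl(\secp)$.

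It then remains to exploit case (c). The structural fact I would use is that, as projectors on the classical output register of $\qP^*(1^\secp,\msg)$, the accepting projector of $\qE$ splits as the orthogonal sum of the type-(a)/(b) forgery projector and projectors $\Pi_j$ for ``uses $\widetilde{\svk}_j$'' over $j\in[q]$. By linearity of projective implementations, on any fixed residual state the expected success mass under $\qE$ equals the forgery mass plus the sum over $j\in[q]$ of the per-instance masses. Since the forgery mass is negligible and an accepting $\epsilon$-good test certifies total mass $\gtrsim 0.9\epsilon$, an averaging (pigeonhole) argument produces an index $j$ whose per-instance mass is at least $\approx 0.9\epsilon/q=1/\poly(\secp)$. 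For a uniformly guessed $j\la[q]$ I would define the sub-program $\qP_j'$ that runs $\qP^*(1^\secp,\msg)$ and outputs $\widetilde{\sigma}$ whenever the produced tuple uses $\widetilde{\svk}_j$ (and $\bot$ otherwise); this $\qP_j'$ is precisely a pirate for the inner application and is $\epsilon'$-good with $\epsilon'\approx\epsilon/(2q)$ whenever $\qA$ wins and $j$ is the good index.

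Finally I would assemble the reduction $\qB_j$ to $1$-bounded standard-KLA security of $\widetilde{\DSSKL}$: it samples $(\sig.\sk,\sig.\vk)$ itself, generates all instances $i\neq j$ honestly, embeds its single challenge leased key and the challenge $\widetilde{\svk}_j$ (obtained from $\widetilde{\aux}$) as instance $j$, signs each $\widetilde{\svk}_i$ under $\sig.\sk$, and runs $\qA$. Crucially $\qB_j$ never needs $\widetilde{\vk}_j$ or $\widetilde{\msk}_j$: it forwards $\qA$'s certificate $\cert_j=\widetilde{\cert}_j$ (since $\DSSKL.\qDel$ merely runs $\widetilde{\qDel}$) to its own challenger for verification, and submits $\qP_j'$ with parameter $\epsilon'$ as its pirate. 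Faithful simulation of $\qA$'s view together with the averaging argument gives that $\qB_j$ wins with probability $\ge\frac1q\cdot\nonnegl(\secp)$, contradicting $1$-bounded security. The hard part is the third paragraph: converting the clean classical intuition ``$\widetilde{\svk}$ must coincide with some $\widetilde{\svk}_j$, then break instance $j$'' into a sound quantum statement, namely decomposing the $\epsilon$-good test into per-instance success masses and certifying that a guessed instance remains $\epsilon'$-good; this is where the projective-implementation and API tools (Theorems~\ref{thm:api} and \ref{thm:pi-ind}) and the careful degraded parameter $\epsilon'\approx\epsilon/(2q)$ are needed.
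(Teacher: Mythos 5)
Your proposal is correct and follows essentially the same route as the paper's proof: the same three-case split of the pirate's output (a direct $\DSig$ forgery on $\msg$, a forged $\sig.\sigma$ on a fresh $\widetilde{\svk} \notin \{\widetilde{\svk}_i\}_{i\in[q]}$, or reuse of some $\widetilde{\svk}_j$), the same forger that simulates the whole $\DSSKL$ game using a $\DSig$ signing oracle on the messages $\widetilde{\svk}_i$, and the same guess-$j$ reduction that embeds the single challenge instance (with $\svk^*$ taken from $\widetilde{\aux}_{\widetilde{f}}$, signed under $\sig.\sk$) and forwards $\cert_j$ together with the pirate to the $1$-bounded challenger. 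The only substantive difference is that your third paragraph is more careful than the paper at the final step: the paper forwards $\qP^*$ unmodified and asserts it passes the inner $\epsilon$-good test at the same parameter $\epsilon$, whereas you wrap it into $\qP_j'$ (so the output parses as an inner signature) and degrade the goodness parameter to $\epsilon' \approx \epsilon/(2q)$ via the per-instance mass-decomposition and averaging argument — a legitimate sharpening, permissible since $1$-bounded standard-KLA security is quantified over every inverse-polynomial parameter.
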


\begin{proof}
Let $\qA$ be a QPT adversary that succeeds with non-negligible
probability in $\expb{\DSSKL,\qA}{std}{kla}(1^\secp,
\qF, \qE, t, \epsilon)$ for some $\epsilon = 1/\poly(\secp)$. Let
$\qA$ obtain keys $\qsk_1, \ldots, \qsk_q$. By assumption, $\qA$
outputs a quantum program $\qP^*$ such that when provided a uniformly
random message $\msg$, $\qP^*$ outputs a valid signature $\sigma'$
with non-negligible probability.

For each $i \in [q]$, let $\qsk_i = (\widetilde{\qsk}_i,
\widetilde{\svk}_i, \sig.\sigma_i)$, where $\sig.\sigma_i \la
\DSig.\Sign(\sig.\sk, \allowbreak\widetilde{\svk}_i)$. Consider first the case
that $\qP^*$ outputs $\sigma'$ which satisfies $\DSig.\Vrfy(\sig.\vk,
\msg, \allowbreak\sigma') = \top$ (the first accept condition of
$\SigVrfy(\sig.\vk, \msg, \cdot)$). Let $Q =
\{\widetilde{\svk}_i\}_{i\in[q]}$. In this case,
$\qP^*$ can be used to break unforgeability of $\DSig$ in a
straightforward manner, assuming that $\msg \notin Q$. Since $\msg$ is
chosen uniformly at random from a super-polynomial size space, this
holds with overwhelming probability. Now, consider the case when
$\qP^*$ outputs $\sigma' = (\widetilde{\sigma}, \widetilde{\svk},
\sig.\sigma)$ where $\DSig.\Vrfy(\sig.\vk, \widetilde{\svk},
\sig.\sigma) = \top$ holds but $\widetilde{\svk} \notin Q$. Once
again, by the unforgeability of $\DSig$, this is infeasible. As a
result, we can consider the case where $\qP^*$ outputs $\sigma' =
(\widetilde{\sigma}, \widetilde{\svk}, \sig.\sigma)$ for some
$\widetilde{\svk} \in Q$. In this case, we show the following
reduction $\qR$ which breaks $1$-bounded standard KLA security of
$\widetilde{\DSSKL}$:

\begin{description}
\item \underline{Execution of $\qR$ in 
$\expb{\widetilde{\DSSKL},\qR}{std}{kla}(1^\secp, 1, \qF, \qE, t,
\epsilon)$:}
\begin{enumerate}
    \item $\qCh$ samples $(\widetilde{\msk}, \widetilde{f},
        \widetilde{\aux}_{\widetilde{f}}) \la
        \widetilde{\Setup}(1^\secp, \bot)$ and sends
        $\widetilde{\aux}_{\widetilde{f}}$ to $\qR$.

    \item $\qCh$ computes $(\qsk^*, \vk^*) \la
        \widetilde{\qKG}(\widetilde{\msk})$ and sends $\qsk^*$ to $\qR$.

    \item $\qR$ samples $(\sig.\sk, \sig.\vk) \la
        \DSig.\Gen(1^\secp)$ and computes $\msk, f, \aux_f$ as per
        $\Setup(1^\secp, \bot)$. It sends $\aux_f$ to $\qA$.
    \item $\qA$ sends $q = \poly(\secp)$ to $\qR$.
    \item $\qR$ picks a random index $j \in [q]$. For each $i \in [q]
        \setminus \{j\}$, $\qR$ computes $(\qsk_i, \vk_i) \la
        \qKG(\msk)$.
    \item $\qR$ obtains $\svk^*$ from
        $\widetilde{\aux}_{\widetilde{f}}$ and computes $\sig.\sigma^*
        \la \DSig.\Sign(\sig.\sk, \svk^*)$. It sets $\qsk_j \seteq (\qsk^*, \svk^*, \sig.\sigma^*)$.
        Finally, it sends $(\qsk_i)_{i\in[q]}$ to $\qA$.
    \item $\qA$ sends $(\cert_1, \ldots, \cert_q)$ and a program
        $\qP^* = (U, \rho)$ to $\qR$.
    \item $\qR$ outputs $\cert_j$ along with $\qP^*$ to $\qCh$.

    \item $\qCh$ tests if $\widetilde{\Vrfy}(\vk^*, \cert_j) = \top$
        and if $\qP^*$ is $\epsilon$-good wrt $(\widetilde{f},
        \qE, t)$. If these hold, then it outputs $\top$, and
        otherwise outputs $\bot$.
\end{enumerate}
\end{description}

Notice that the view of $\qA$ as run by $\qR$ is identical to its view
in $\expb{\DSSKL,\qA}{std}{kla}(1^\secp, \allowbreak\qF, \qE, t, \epsilon)$. By assumption, 
when
$\qCh$ provides $\msg$ to $\qP^*$ as part of the
$\epsilon$-good test, $\qP^*$ must output $\sigma' = (\widetilde{\sigma},
\widetilde{\svk}, \sig.\sigma)$ such that $\widetilde{\svk} =
\svk^*$ and $\widetilde{\SigVrfy}(\widetilde{\svk}, \msg,
\widetilde{\sigma}) = \top$ with non-negligible probability. This is because
the index $j$ is chosen by $\qR$ uniformly at random from $[q]$.
Consequently, $\qP^*$ would pass the $\epsilon$-good test run by $\qCh$.
Clearly, $\widetilde{\Vrfy}(\vk^*, \cert_j) = \top$ also holds simultaneously with non-negligible probability by assumption.
Hence, $\qR$ ends up breaking the $1$-bounded
standard-KLA security of $\widetilde{\DSSKL}$.
\end{proof}

We now state a theorem from prior work that essentially achieves
$1$-bounded standard-KLA secure DS-SKL based on the SIS assumption.

\begin{theorem}[\cite{EC:KitMorYam25}]\label{thm:sis_ds-skl}
Assuming the SIS assumption holds, there exists a DS-SKL scheme with 1-bounded standard-KLA security.
\end{theorem}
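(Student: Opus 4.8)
The plan is to instantiate the framework of Kitagawa, Morimae and Yamakawa \cite{EC:KitMorYam25}, which yields SKL for signatures from the certified-deletion property of BB84 states combined with a lattice primitive whose hardness rests on SIS. Rather than re-derive everything, I would exhibit the construction and argue why single-key standard-KLA security holds. First I would fix a post-quantum signature scheme whose unforgeability reduces to SIS (a hash-and-sign / preimage-sampleable trapdoor construction), and prepare the leased key following the same BB84 template used for PRF-SKL in the overview. Concretely, sample $x, \theta \la \bit^\ell$; at each position with $\theta[i]=0$ store $\ket{x[i]}\ket{\sk_{i,x[i]}}$, and at each position with $\theta[i]=1$ store $\tfrac{1}{\sqrt2}\big(\ket{0}\ket{\sk_{i,0}} + (-1)^{x[i]}\ket{1}\ket{\sk_{i,1}}\big)$, where $\sk_{i,0}, \sk_{i,1}$ are two correlated signing-key components. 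Set $\qsk \seteq (\rho_i)_{i\in[\ell]}$, let $\vk$ record $\theta$ together with the data needed to check Hadamard outcomes, and take $\Sign, \SigVrfy$ to be those of the underlying lattice scheme.

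Evaluation correctness would follow by applying the signing circuit coherently on the register $\qreg{D}$ holding the key components: with overwhelming probability a valid signature is produced, and by the gentle measurement lemma (Lemma \ref{lma:gentle}) the state is almost undisturbed and hence reusable for polynomially many signatures. The deletion algorithm measures every qubit in the Hadamard basis and outputs the outcomes as $\cert$, which $\Vrfy$ checks for consistency against $\theta$ exactly as in the standard BB84 deletion check; verification correctness is immediate from the fact that a Hadamard measurement of $\tfrac{1}{\sqrt2}(\ket{a}+(-1)^b\ket{c})$ yields data consistent with the encoded phase.

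The main obstacle is the security reduction, and it is precisely where the single-key restriction and SIS are essential. Given a pirate $\qP^*$ that still signs random messages after producing a valid deletion certificate, the reduction must extract the computational-basis values $\{x[i] : \theta[i]=0\}$ from $\qP^*$ — invoking a differing-point-hiding / one-share-per-position property of the signature primitive, available precisely because at each Hadamard position only one of $\sk_{i,0}, \sk_{i,1}$ is information-theoretically determined by the collapsed state — while the valid certificate yields the Hadamard-basis values $\{x[i] : \theta[i]=1\}$. Producing both sets simultaneously contradicts the BB84 certified-deletion guarantee. The delicate point is that this extraction must be carried out by a reduction that does not know $\theta$ and embeds a single SIS challenge, so that the infeasibility of opening a second short preimage is what binds the adversary after deletion.

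Finally I would note that the one-share-per-position argument collapses once the adversary holds many correlated leased keys, since a gentle-measurement attack can then recover hidden components across keys; this is exactly why only $1$-bounded standard-KLA security is obtained here, with the unbounded upgrade deferred to the generic compiler of Theorem \ref{thm:ds-skl-comp}. Since all of the above is established in \cite{EC:KitMorYam25}, the cleanest route is to cite that construction and verify only that its security notion matches $1$-bounded standard-KLA security with static signing keys as required by Definition \ref{def:std-kla}.
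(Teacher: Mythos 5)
Your proposal is correct and takes essentially the same route as the paper: the paper gives no proof of Theorem~\ref{thm:sis_ds-skl} at all, importing it directly from \cite{EC:KitMorYam25}, and your closing move---cite that construction and verify that its guarantee matches $1$-bounded standard-KLA security with static signing keys---is exactly what the paper does, while your expository sketch of the BB84 template agrees with the paper's own overview in Section~\ref{sec:chall}. One caveat on that sketch: the differing-point-hiding (one-key-hidden) property is invoked at the \emph{computational}-basis positions ($i$ with $\theta[i]=0$), where the adversary receives only $\sk_{i,x[i]}$, not at the Hadamard positions, whose superposition contains both keys; as written that step of your reduction sketch would fail, though this does not affect the citation-based proof.
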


Now, from Theorem \ref{thm:sis_ds-skl}, Theorem \ref{thm:ds-skl-comp}
and Theorem \ref{thm:vo-res}, we have the following corollary:

\begin{corollary}
There exists a DS-SKL scheme satisfying unbounded VO-KLA security
(Definition \ref{def:vo-kla}), given that the SIS assumption holds.
\end{corollary}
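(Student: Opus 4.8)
The plan is to chain the three cited results, observing that all the required building blocks reduce to the $\SIS$ assumption. First, by Theorem \ref{thm:sis_ds-skl}, the $\SIS$ assumption yields a DS-SKL scheme $\widetilde{\DSSKL}$ satisfying $1$-bounded standard-KLA security. Since $\SIS$ implies classical post-quantum digital signatures in a standard way, we also obtain a scheme $\DSig$ as required by the compiler of Theorem \ref{thm:ds-skl-comp}. Instantiating that compiler with $\widetilde{\DSSKL}$ and $\DSig$ gives a DS-SKL scheme $\DSSKL$ satisfying \emph{unbounded} standard-KLA security (Definition \ref{def:std-kla}).

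Next I would invoke the verification-oracle resilience compiler of Theorem \ref{thm:vo-res}. That compiler upgrades any standard-KLA secure SKL scheme to a VO-KLA secure one, preserving the (bounded/unbounded) collusion regime, provided a tokenized MAC scheme $\TMac$ satisfying unforgeability exists. Since $\SIS$ implies OWFs, and tokenized MACs are known from OWFs \cite{BSS21}, such a $\TMac$ is available under $\SIS$. Applying Theorem \ref{thm:vo-res} to the unbounded standard-KLA secure $\DSSKL$ obtained above therefore produces a DS-SKL scheme with unbounded VO-KLA security (Definition \ref{def:vo-kla}). As all three ingredients---single-key DS-SKL, $\DSig$, and $\TMac$---follow from $\SIS$, the final scheme is obtained under $\SIS$ alone.

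The main point requiring care is verifying that the VO-resilience compiler, which is phrased for a generic SKL scheme via the algorithms $(\Setup, \qKG, \qEval, \qDel, \Vrfy)$, is compatible with the extra $(\Sign, \SigVrfy)$ algorithms that distinguish a DS-SKL scheme. This should follow because the compiler only augments the deletion certificate with a $\TMac$ signature and correspondingly modifies $\qKG$ and $\Vrfy$, while $\qEval$ merely calls the underlying evaluation; the leased-signing functionality and the associated $(\Sign, \SigVrfy)$ are untouched, so the DS-SKL application $(\qF, \qE, t)$ of Definition \ref{def:ds-skl} is preserved verbatim. I would also note that the compiled scheme retains classical revocation, as $\widetilde{\DSSKL}$ has classical certificates (the scheme of \cite{EC:KitMorYam25}) and the added $\TMac$ signature is classical, so the composition is legitimate and nothing in the security reductions of Theorems \ref{thm:ds-skl-comp} and \ref{thm:vo-res} is obstructed. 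This step is essentially bookkeeping rather than a genuine obstacle, since the heavy lifting is already carried out in the referenced theorems.
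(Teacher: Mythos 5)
Your proposal is correct and matches the paper exactly: the corollary is obtained by chaining Theorem \ref{thm:sis_ds-skl}, Theorem \ref{thm:ds-skl-comp}, and Theorem \ref{thm:vo-res}, noting that $\SIS$ supplies the post-quantum signature scheme and (via OWFs) the tokenized MAC needed by the compilers. Your bookkeeping remarks about the extra $(\Sign, \SigVrfy)$ algorithms and classical certificates are sound but essentially automatic, since Definition \ref{def:skl} already requires $\qDel$ to output a classical certificate and the VO compiler leaves the evaluation functionality untouched.
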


\ifnum\anonymous=1
\else

\fi

	\ifnum\llncs=1
\bibliographystyle{splncs04}
\bibliography{bib/abbrev3,bib/crypto,bib/siamcomp_jacm,bib/other}
	\else
\bibliographystyle{alpha} 
\bibliography{bib/abbrev3,bib/crypto,bib/siamcomp_jacm,bib/other}
	\fi

\ifnum\cameraready=0
	\ifnum\llncs=0
	\appendix

\else
	\newpage
        \AddToHook{cmd/appendix/before}{\gdef\theHsection{\Alph{section}}}
	 	\appendix
	 	\setcounter{page}{1}
 	{
	\noindent
 	\begin{center}
	{\Large SUPPLEMENTARY MATERIAL}
	\end{center}
 	}
	\setcounter{tocdepth}{2}
	\ifnum\noaux=1
 	\else
	\fi


\section{More Related Work}\label{sec:more-related}

\subsubsection*{Collusion-Resistant Copy Protection.}
The first collusion resistant copy protection schemes in the plain
model were shown in the work of Liu et al. \cite{TCC:LLQZ22}. They
constructed copy protection schemes for PKE, PRFs and digital
signatures that are $k \ra k+1$ secure, i.e., an adversary receiving
$k = \poly(\secp)$ many copies cannot produce $k+1$ copies.
Importantly, their scheme is bounded collusion-resistant, i.e., the
parameter sizes grow linearly with the collusion-bound $k$. In the work of
{\c C}akan and Goyal~\cite{TCC:CakGoy24}, unbounded collusion-resistant copy protection
schemes were constructed in the plain model for PKE, PKFE, PRFs, and
digital signatures. Although these schemes imply SKL with similar
collusion-resistance guarantees (See the discussion in
\cite{EC:AKNYY23}), they all rely on iO. Since, copy protection is known
to imply public-key quantum money in general, achieving it with weaker than iO
assumptions is a major open problem.

\subsubsection*{Quantum-Secure Traitor Tracing.}

Traitor tracing in the quantum setting was first explored in the work
of Zhandry \cite{TCC:Zhandry20}. The work identifies several
challenges of dealing with quantum adversaries that output quantum
pirate programs. Firstly, it is not possible to know the success
probability of a quantum pirate until it is measured. This is because
the pirate may be in a superposition of ``successful'' and
``unsuccessful'' pirates.  Moreover, a measurement may disturb the
pirate and render it useless. Consequently, the work presented
workarounds for such definitional issues. Additionally, estimating the
success probability is also challenging, as it requires testing the
adversary on several samples from a distribution. The work shows an
efficient quantum procedure for this task by building on the work of
Marriott and Watrous \cite{CC:MarWat05}. Furthermore, the work shows a useful
property: consecutive estimations of the adversary's success
probability on computationally close distributions produce similar
outcomes. These quantum tools were leveraged to extend classical
private linear broadcast encryption (PLBE) based tracing schemes
\cite{EC:BonSahWat06} to the quantum setting. In a followup work by Zhandry
\cite{C:Zhandry23}, a new quantum rewinding technique due to Chiesa et
al. \cite{FOCS:CMSZ21} was utilized to further expand the kind of
probability estimations that can be performed without destroying the
pirate. The work utilized this
technique to extend several more classical tracing
schemes for PKE to the quantum setting, including several collusion-resistant ones.

The work of Kitagawa and Nishimaki \cite{EC:KitNis22} explores the setting of
watermarking in the quantum setting, which is a notion similar to that
of traitor tracing. Specifically, they showed a watermarkable PRF
secure against quantum adversaries (that output quantum pirate
programs), based on the hardness of LWE. We utilize this PRF as a
building block in our MLTT construction for PRFs (Section
\ref{sec:mlt-prf}).  Recently, Kitagawa and Nishimaki
\cite{EPRINT:KitNis25}
constructed a watermarkable digital signature scheme that is secure
against quantum adversaries, in the setting of white-box
traitor-tracing introduced by Zhandry \cite{C:Zhandry21}.

\subsubsection*{Certified Deletion.}

The notion of certified deletion for encryption was introduced in the
work of Broadbent and Islam \cite{TCC:BroIsl20}. This primitive allows
the generation of quantum ciphertexts, which can be provably deleted
by presenting a classical certificate. After deletion, even if the
secret-key is revealed, one cannot learn the contents of the
ciphertext they once held. Observe that the difference between this
notion and SKL is that here, it is access to secret data that is being
``revoked'', rather than the ability to evaluate some function.
Following this work, several other works have studied certified
deletion for different primitives
\cite{AC:HMNY21,ITCS:Poremba23,C:BarKhu23,EC:HKMNPY24,EC:BGKMRR24} and
with publicly-verifiable deletion
\cite{AC:HMNY21,EC:BGKMRR24,TCC:KitNisYam23,TCC:BKMPW23}. Recently,
the work of Ananth, Mutreja and Poremba \cite{EPRINT:AnaMutPor24}
introduced multi-copy revocable encryption.  This is a notion similar
to certified deletion but guarantees security in a setting where
multiple copies of the quantum ciphertext are provided to the
adversary. Note that this is in contrast to the collusion-resistant
setting we consider, where multiple i.i.d leased-keys are provided,
instead of identical copies of the same quantum state.


\section{Omitted Preliminaries}\label{sec:omitted-prelim}

\begin{definition}[Positive Operator-Valued Measure (POVM)]
A POVM is a set of Hermitian positive semi-definite matrices
$\cM = \{M_i\}_{i\in\cI}$
s.t. $\sum_{i\in\cI}M_i =
\mathbb{I}$ holds. Applying $\cM$ to a state $\qstate{q}$ produces
outcome $i \in \cI$ with probability $\Trace(\qstate{q}M_i)$. Let
$\cM(\ket{\psi})$ denote the distribution of the output of applying
$\cM$ to $\ket{\psi}$.
\end{definition}

\begin{definition}[Quantum Measurement] A quantum measurement is a set
of matrices $\cE = \{E_i\}_{i\in\cI}$ satisfying
$\sum_{i\in\cI}E^\dagger_iE_i = \mathbb{I}$. Applying $\cE$ to a state
$\qstate{q}$ produces outcome $i$ with probability
$p_i \seteq \Tr(\qstate{q}E_i^\dagger E_i)$ with the corresponding
post-measurement state being $E_i \qstate{q} E_i^\dagger / p_i$.
For any quantum state $\qstate{q}$, let $\cE(\qstate{q})$ denote the
distribution of the outcome of applying $\cE$ to $\qstate{q}$. For any
states $\qstate{q_0}$ and $\qstate{q_1}$, the statistical distance
between $\cE(\qstate{q_0})$ and $\cE(\qstate{q_1})$ is bounded above
by $\TD(\qstate{q_0}, \qstate{q_1})$.
\end{definition}

\begin{definition}[Projective Measurement/POVM]
A quantum measurement $\cE = \{E_i\}_{i\in\cI}$ is a projective
measurement if for each $i \in \cI$, $E_i$ is a projector. A binary
projective measurement is of the form $\cE = \{\Pi, \mathbb{I} -
\Pi\}$ where $\Pi$ corresponds to the outcome $0$ and $\mathbb{I} -
\Pi$ to the outcome $1$. Likewise, a POVM $\cM = \{M_i\}_{i\in\cI}$ is
projective if for each $i \in \cI$, $M_i$ is a projector.
\end{definition}

\begin{lemma}[Gentle Measurement \cite{Winter99}]
\label{lma:gentle} Let $\qstate{q}$ be a
quantum state and $(\Pi, \mathbb{I} - \Pi)$ be a binary projective
measurement such that $\Tr(\Pi\qstate{q}) \ge 1- \delta$. Let the
post-measurement state on applying
$(\Pi, \mathbb{I} - \Pi)$ and conditioning on the first outcome be:
$\qstate{q}' = \Pi \qstate{q} \Pi/\Trace(\Pi\qstate{q})$. Then, we
have that
$\TD(\qstate{q}, \qstate{q}') \le 2\sqrt{\delta}$.
\end{lemma}

\begin{theorem}[Indistinguishability of Projective Implementation \cite{TCC:Zhandry20}]\label{thm:pi-ind}
Let $\qstate{q}$ be an efficiently constructible and possibly mixed
state, and $D_0, D_1$ be computationally indistinguishable
distributions. Then, for any inverse polynomial $\epsilon$ and any
function $\delta$, the following holds:
$$\Delta_{\epsilon}\big(\ProjImp(\cP_{D_0})(\qstate{q}),
\ProjImp(\cP_{D_1})(\qstate{q})\big) \le \delta$$
\end{theorem}

\begin{remark}
As noted in prior work \cite{EC:KitNis22}, for Theorem
\ref{thm:pi-ind} to hold, the indistinguishability of $D_0, D_1$
needs to hold for distinguishers that can efficiently construct
$\qstate{q}$.
\end{remark}

\begin{theorem}[State Repair \cite{FOCS:CMSZ21}]\label{thm:repair}
Let $\cE$ be a projective measurement on register $\cH$ with outcomes
in set $S$. Let $\cM$ be an $(\epsilon, \delta)$-almost projective
measurement on $\cH$. Consider parameters $T \in \mathbb{N}, s \in S,
p \in [0,1]$. Then, there exists a procedure
$\Repair_{T,p,s}^{\cM,\cE}$ which is to be applied on $\cH$ as follows:

\begin{itemize}
\item Apply $\cM$ to obtain $p \in [0,1]$.
\item Next, apply $\cE$ to obtain $s \in S$.
\item Then, apply $\Repair_{T,p,s}^{\cM,\cE}$.
\item Finally, apply $\cM$ again to obtain $p' \in [0,1]$.
\end{itemize}

Then, it is guaranteed that:

$$\Pr[\lvert p - p'\rvert > 2\epsilon] \le |S|\cdot\delta + |S|/T + 4\sqrt{\delta}$$

\end{theorem}


\section{Proof of Theorem \ref{thm:two-sup}}\label{sec:proof-two-sup}

\begin{proof}
We will consider the following sequence of hybrids:

\begin{description}
\item $\underline{\Hyb_0(1^\secp)}:$ This is the same as the
experiment $\expb{\qA}{two}{sup}(1^\secp, q, N)$:

\begin{enumerate}
\item For each $i \in [q]$, $\qCh$ performs the following:
\begin{itemize}
    \item Sample $x_i^0, x_i^1 \la [N]$ such that $x_i^0 \neq x_i^1$ 
and $b_i \la \bit$. Define $Q_i$ as $Q_i \seteq \{x_i^0, 
x_i^1\}$.
\item Let $b$ be a placeholder for $b_i$. Construct the following on register $\qreg{A_i}$:
    $$\sigma_i \seteq \frac{1}{\sqrt{2}}\ket{x_i^0} +
    (-1)^{b}\frac{1}{\sqrt{2}}\ket{x_i^1}$$
\end{itemize}
\item $\qCh$ sends the registers $(\qreg{A_1}, \ldots, \qreg{A_q})$
to $\qA$.
\item $\qA$ sends $(g_1, \ldots, g_q)$ and a value $\msg$ to $\qCh$.
\item $\qCh$ checks if there exists $i \in [q]$ such that $\msg \in
Q_i$. If not, it outputs $\bot$. Let $s$ be such an index.
\item If $g_s(x_s^0, x_s^1) = b_s$ holds, $\qCh$ outputs
$\top$. Else, it outputs $\bot$.
\end{enumerate}

\item $\underline{\Hyb_1(1^\secp)}:$ This is similar to
$\Hyb_0(1^\secp)$, with the following differences colored in red:

\begin{enumerate}
\item For each $i \in [q]$, $\qCh$ performs the following:
\begin{itemize}
\item \textcolor{red}{Sample $x_i^0, x_i^1 \la [N]$ such that $x_i^0 \neq x_i^1$. Define $Q_i$ as $Q_i \seteq \{x_i^0, 
        x_i^1\}$.}

    \item \textcolor{red}{$\qCh$ constructs the following state $\sigma_i$ on registers
$\qreg{C_i}$ and $\qreg{A_i}$.
$$\sigma_i \seteq \frac12
\sum_{c \in
\bit}\ket{c}_{\qreg{C_i}}\otimes \big(\ket{x_i^0} +
(-1)^{c}\ket{x_i^1}\big)_{\qreg{A_i}}$$}
\end{itemize}

\item $\qCh$ sends the registers $(\qreg{A_1}, \ldots, \qreg{A_q})$
to $\qA$.
\item $\qA$ sends $(g_1, \ldots, g_q)$ and a value $\msg$ to $\qCh$.
\item $\qCh$ checks if there exists $i \in [q]$ such that $\msg \in
Q_i$. If not, it outputs $\bot$. Let $s$ be such an index.
\item \textcolor{red}{$\qCh$ measures register $\qreg{C_s}$ in the
computational basis to obtain outcome $c_s$.}
\item If $g_s(x_s^0, x_s^1) = \textcolor{red}{c_s}$ holds, $\qCh$ outputs
$\top$. Else, it outputs $\bot$.
\end{enumerate}

\item $\underline{\Hyb_2(1^\secp)}:$ This is similar to
$\Hyb_1(1^\secp)$, with the following differences colored in red:

\begin{enumerate}
\item For each $i \in [q]$, $\qCh$ performs the following:
\begin{itemize}
\item \textcolor{black}{Sample $x_i^0, x_i^1 \la [N]$ such that $x_i^0 \neq x_i^1$. Define $Q_i$ as $Q_i \seteq \{x_i^0, 
        x_i^1\}$.}

    \item \textcolor{black}{$\qCh$ constructs the following state $\sigma_i$ on registers
$\qreg{C_i}$ and $\qreg{A_i}$.
$$\sigma_i \seteq \frac12
\sum_{c \in
\bit}\ket{c}_{\qreg{C_i}}\otimes \big(\ket{x_i^0} +
(-1)^{c}\ket{x_i^1}\big)_{\qreg{A_i}}$$}
\end{itemize}

\item $\qCh$ sends the registers $(\qreg{A_1}, \ldots, \qreg{A_q})$
to $\qA$.
\item $\qA$ sends $(g_1, \ldots, g_q)$ and a value $\msg$ to $\qCh$.
\item $\qCh$ checks if there exists $i \in [q]$ such that $\msg \in
Q_i$. If not, it outputs $\bot$. Let $s$ be such an index.

\item \textcolor{red}{Let $\chec[x_s^0, x_s^1, \msg]$ be a function such that
$\chec[x_s^0, x_s^1, \msg](u) = \top$ if $\msg = x_s^u$ and $\bot$ otherwise.
Apply the following map to the register $\qreg{C_s}$ in the Hadamard
basis, and an ancilla register $\qreg{OUT}$ initialized to $\ket{0}$:
$$
\ket{u}_{\qreg{C_s}}\ket{w}_{\qreg{OUT}} \mapsto
\ket{u}_{\qreg{C_s}}\ket{w \xor \chec[x_s^0, x_s^1, \msg](u)}_{\qreg{OUT}}$$
}

\item \textcolor{red}{$\qCh$ measures register $\qreg{OUT}$ in the computational
basis to get outcome $\out$. If $\out = \bot$, output $\bot$.}

\item \textcolor{black}{$\qCh$ measures register $\qreg{C_s}$ in the
computational basis to obtain outcome $c_s$.}
\item If $g_s(x_s^0, x_s^1) = \textcolor{black}{c_s}$ holds, $\qCh$ outputs
$\top$. Else, it outputs $\bot$.
\end{enumerate}

\item $\underline{\Hyb_3(1^\secp)}:$ This is similar to
$\Hyb_2(1^\secp)$, with the following differences colored in red:

\begin{enumerate}
\item For each $i \in [q]$, $\qCh$ performs the following:
\begin{itemize}
\item \textcolor{black}{Sample $x_i^0, x_i^1 \la [N]$ such that $x_i^0 \neq x_i^1$. Define $Q_i$ as $Q_i \seteq \{x_i^0, 
        x_i^1\}$.}

    \item \textcolor{black}{$\qCh$ constructs the following state $\sigma_i$ on registers
$\qreg{C_i}$ and $\qreg{A_i}$.
$$\sigma_i \seteq \frac12
\sum_{c \in
\bit}\ket{c}_{\qreg{C_i}}\otimes \big(\ket{x_i^0} +
(-1)^{c}\ket{x_i^1}\big)_{\qreg{A_i}}$$}
\end{itemize}

\item $\qCh$ sends the registers $(\qreg{A_1}, \ldots, \qreg{A_q})$
to $\qA$.
\item $\qA$ sends $(g_1, \ldots, g_q)$ and a value $\msg$ to $\qCh$.
\item $\qCh$ checks if there exists $i \in [q]$ such that $\msg \in
Q_i$. If not, it outputs $\bot$. Let $s$ be such an index.

\item \textcolor{red}{$\qCh$ measures registers $\qreg{C_1}, \ldots,
\qreg{C_q}$ in the Hadamard basis to get outcomes $c_1' \ldots, c_s'$
respectively. For $u \seteq c_s'$, if $\msg \neq x_s^u$, $\qCh$
outputs $\bot$.}

\item \textcolor{black}{$\qCh$ measures register $\qreg{C_s}$ in the
computational basis to obtain outcome $c_s$.}
\item If $g_s(x_s^0, x_s^1) = \textcolor{black}{c_s}$ holds, $\qCh$ outputs
$\top$. Else, it outputs $\bot$.
\end{enumerate}

\item $\underline{\Hyb_4(1^\secp)}:$ This is similar to
$\Hyb_3(1^\secp)$, with the following differences colored in red:

\begin{enumerate}
\item For each $i \in [q]$, $\qCh$ performs the following:
\begin{itemize}
\item \textcolor{black}{Sample $x_i^0, x_i^1 \la [N]$ such that $x_i^0 \neq x_i^1$. Define $Q_i$ as $Q_i \seteq \{x_i^0, 
        x_i^1\}$.}

    \item \textcolor{black}{$\qCh$ constructs the following state $\sigma_i$ on registers
$\qreg{C_i}$ and $\qreg{A_i}$.
$$\sigma_i \seteq \frac12
\sum_{c \in
\bit}\ket{c}_{\qreg{C_i}}\otimes \big(\ket{x_i^0} +
(-1)^{c}\ket{x_i^1}\big)_{\qreg{A_i}}$$}
\end{itemize}

\item \textcolor{red}{$\qCh$ measures registers $\qreg{C_1}, \ldots,
        \qreg{C_q}$ in
the Hadamard basis to get outcomes $c_1', \ldots, c_q'$.}

\item $\qCh$ sends the registers $(\qreg{A_1}, \ldots, \qreg{A_q})$
to $\qA$.
\item $\qA$ sends $(g_1, \ldots, g_q)$ and a value $\msg$ to $\qCh$.
\item $\qCh$ checks if there exists $i \in [q]$ such that $\msg \in
Q_i$. If not, it outputs $\bot$. Let $s$ be such an index.

\item \textcolor{red}{For $u \seteq c_s'$, if $\msg \neq x_s^u$, $\qCh$
outputs $\bot$.}

\item \textcolor{black}{$\qCh$ measures register $\qreg{C_s}$ in the
computational basis to obtain outcome $c_s$.}
\item If $g_s(x_s^0, x_s^1) = \textcolor{black}{c_s}$ holds, $\qCh$ outputs
$\top$. Else, it outputs $\bot$.
\end{enumerate}

\end{description}

\begin{remark}
The statements of the following claims are meant for arbitrary
$t\in\mathbb{N}$ and appropriately chosen $N = O(t^2q^2)$.
\end{remark}

\begin{claim}
The probability that $\qCh$ outputs $\bot$ in Step 6. of $\Hyb_4$ is
at most $1/4t^2$.
\end{claim}
\begin{proof}
Notice that for $i \in [q]$, the states $\sigma_i$ are of the
following form:

$$\sigma_i = \frac12
\sum_{c\in\bit}\ket{c}_{\qreg{C_i}}\otimes\big(\ket{x_i^0} +
    (-1)^c\ket{x_i^1}\big) =
    \frac{1}{\sqrt2}\big(\ket{+}_{\qreg{C_i}}\ket{x_i^0}_{\qreg{A_i}} +
    \ket{-}_{\qreg{C_i}}\ket{x_i^1}_{\qreg{A_i}}\big)$$

Since $\qCh$ measures the registers $\qreg{C_1}, \ldots, \qreg{C_q}$
in the Hadamard basis to get $c_1', \ldots, c_q'$, when $\qA$
receives the registers $\qreg{A_1}, \ldots, \qreg{A_q}$, the states
are of the following form:

$$\ket{c_1'}_{\qreg{C_1}}\ket{x_1^{c_1'}}_{\qreg{A_1}}, \ldots,
\ket{c_q'}_{\qreg{C_q}}\ket{x_q^{c_q'}}_{\qreg{A_q}}$$

Now, let $Q$ be a multi-set such that $Q \seteq Q_1 \cup \ldots \cup
Q_q$. Let $\Event$ be the event that all the elements of $Q$ are
distinct. It is easy to see from a birthday bound analysis that if
$N = O(q^2 t^2)$, we can
ensure that the probability $\Event$ does \emph{not} occur is bounded by
$1/8t^2$. Consider now the set $S \seteq
Q \setminus \{x_1^{c_1'}, \ldots, x_q^{c_q'}\}$. Conditioned on
$\Event$ occurring, $S$ is a uniformly random subset of $[N]$ of size
$q$. Notice that for $\qA$ to cause an abort, it must produce $\msg$
such that $\msg \in S$. However, the only information $\qA$ has are
the values $x_1^{c_1'}, \ldots, x_q^{c_q'}$ which are
independent of the values in $S$.
Therefore, we have that $\Pr[\msg \in S | \Event]
\le q/N$, which is less than $1/8t^2$ for appropriate $N =
O(q^2t^2)$. Consequently, the probability that $\qCh$ outputs $\bot$
in Step 6. of $\Hyb_4$ is bounded by $1/8t^2 + 1/8t^2 = 1/4t^2$.
\fuyuki{If I understand correctly, this claim want to bound the probability that the experiment aborts as the last sentence of the proof says, not the probability that the experiment outputs $\bot$ as the statement of the claims says. This is confusing. We should not use the same symbol to indicate the adversary's failure and the experiment's (artificial) abort.}
\nikhil{Changed last sentence of the proof to match the claim.}
\end{proof}

\begin{claim}
The probability that $\qCh$ outputs $\bot$ in Step 5. of $\Hyb_3$ is
at most $1/4t^2$.
\end{claim}
\begin{proof}
This follows from the fact that operations performed on different
registers commute. Hence, the Hadamard measurements on $\qreg{C_1},
\ldots, \qreg{C_q}$ can be performed after $\qA$ submits its response,
without altering the probability of $\bot$ from $\Hyb_4$.
\end{proof}

\begin{claim}
The probability that $\qCh$ outputs $\bot$ in Step 6. of $\Hyb_2$ is
at most $1/4t^2$.
\end{claim}
\begin{proof}
This follows directly because the same check as the one introduced in
$\Hyb_3$ is applied coherently in $\Hyb_2$.
\end{proof}

\begin{claim}
Probability $\qCh$ outputs $\top$ in $\Hyb_2$ is at most $1/2$.
\end{claim}
\begin{proof}
Notice that the measurement on register $\qreg{OUT}$ necessarily collapses
$\qreg{C_s}$ in the Hadamard basis. Consequently, the following
computational basis measurement on $\qreg{C_s}$ yields a truly random
bit $c_s$ that is independent of $x_s^0, x_s^1$ and $g_s$.
\end{proof}

\begin{claim}
Probability $\qCh$ outputs $\top$ in $\Hyb_1$ is at most $1/2 + 1/t$.
\end{claim}
\begin{proof}
The only difference between $\Hyb_1$ and $\Hyb_2$ is the measurement
on register $\qreg{OUT}$ in $\Hyb_2$. Since we argued that this
measurement outputs $\bot$ with probability at most $1/4t^2$,
it follows from the gentle measurement lemma (Lemma
\ref{lma:gentle}) that these hybrids are $1/t$ close in
trace distance.
\end{proof}

\fuyuki{I did not see the reason why we need $\Hyb_2$. Why can't we go from $\Hyb_1$ to $\Hyb_3$ directly? (If I remember correctly, there isn't something like $\Hyb_2$ in the proof of BKM+.)}
\nikhil{I thought it was easier to understand this way. In $\Hyb_2$,
since the measurement is performed coherently, it would only end up
collapsing $\qreg{C_s}$. Also, I think the gentle measurement argument
relating $\Hyb_1$ and $\Hyb_2$ makes more sense due to the coherent
measurement.}

Notice that in $\Hyb_1$, if the measurement on register $\qreg{C_s}$
were performed at the beginning itself, then it is equivalent to
$\Hyb_0$. Since operations on different registers commute, this
measurement can be performed after the response of $\qA$. This
proves that for every $q, t \in \mathbb{N}$,
$\Pr[\expb{\qA}{two}{sup}(1^\secp, q, N) \ra \top] \le 1/2 +
1/t$ for some $N = O(t^2q^2)$. It is also easy to see that for every
$q \in \mathbb{N}$ and $N=128q^2$,
$\Pr[\expb{\qA}{two}{sup}(1^\secp, q, N) \ra \top] \le 3/4$.
\end{proof}


\section{Proof of Theorem \ref{thm:const}}\label{sec:proof-const}

\begin{proof}
Let $\qA$ be a QPT adversary in
$\expb{\SKL,\qA}{std}{kla}(1^\secp, q, \qF,\qE,t,\epsilon)$ for some
$\epsilon = 1/\poly(\secp)$ and let $\win$
denote the event that $\qA$ wins the SKL experiment. Assume $\win$
occurs with $\nonnegl(\secp)$ probability. Let $(\qsk^1, \vk^1),
\ldots, (\qsk^q, \vk^q)$ denote the leased secret-key and
verification-key pairs sampled by $\qKG$ in the experiment. For each
$j \in [q]$, we have $\qsk^j = (\qsk^j_i)_{i\in[k]}$ and $\vk^j =
(\vk_i^j)_{i\in[k]}$ by construction, where $\vk_i^j = (v_i^j,
w_i^j, \sk_{i,v}^j, \sk_{i,w}^j, b_i^j)$. For each $(i, j) \in [k]
\times [q]$, define the set $Q_i^j = \{v_i^j, w_i^j\}$. For each $i
\in [k]$, let $Q_i \seteq Q_i^1 \cup \ldots \cup Q_i^q$. Recall that
$\qA$ outputs certificates $\cert^1, \ldots, \cert^q$ and a quantum
program $\qP^*$.

Let $(\id_1^*, \ldots, \id_k^*)$ be computed as $(\id_1^*, \ldots,
\id_k^*) \la \MLTT.\qTrace(\mltt.\msk, \qP^*, \allowbreak0.9\epsilon)$ and
$\GoodExt_\epsilon$ denote the event that
$(\id_1^*, \ldots, \id_k^*) \in Q_1 \times \ldots \times Q_k$. Let
$\APILive_\epsilon$ denote the event that the $\epsilon$-good test wrt
$(f, \qE, t)$ outputs 1 when applied to $\qP^*$.

Assume
for contradiction that $\Pr[\lnot \GoodExt_\epsilon \mid
\APILive_\epsilon] =
\nonnegl(\secp)$. We will then construct the following reduction
algorithm $\qR^\qA$ that breaks the traceability of the MLTT scheme
$\MLTT$ for $N = 128q^2$ and $k = \secp$:

\begin{description}
\item \underline{Execution of $\qR^\qA$ in
    $\expb{\MLTT,\qR}{multi}{trace}(1^\secp,\qF,\qE,t,\epsilon, N, k)$}:

\begin{enumerate}
\item $\qCh$ samples $(\mltt.\msk, f, \aux_f) \la \MLTT.\Setup(1^\secp, N, k)$ and
sends $\aux_f$ to $\qR$. $\qR$ sends $\aux_f$ to $\qA$.
\item $\qR$ sends $2q$ to $\qCh$.
\item For each $(i, j) \in [k] \times [2q]$, $\qCh$ samples $\id_i^j
\la [N]$ and computes $\sk_i^j \la \KG(\mltt.\msk, i, \id_i^j)$. It sends
$\{\id_i^j, \sk_i^j\}_{(i,j)\in [k] \times [2q]}$ to $\qR$.
Define $Q_i$ to be the multi-set $Q_i \seteq
\{\id_i^j\}_{j \in [2q]}$.

\item For each $(i, j) \in [k] \times [q]$, $\qR$ does the
following:
\begin{itemize}
    \item Sample $b_i^j \la \bit$. Set $v_i^j \seteq \id_i^j$ and
        $w_i^j \seteq \id_i^{q+j}$.
\item Compute $\sk_{i,v}^j \seteq \sk_i^j$ and
    $\sk_{i,w}^j \seteq \sk_i^{q+j}$.
\item Compute the state $\rho_i^{j}$ on registers $\qreg{C_i^j,
D_i^j}$ similar to that of $\SKL.\qKG$.
\end{itemize}
\item For each $j \in [q]$, $\qR$ computes $\qsk^j \seteq
(\rho^j_i)_{i\in[k]}$ and $\vk^j \seteq (\vk^j_i)_{i\in[k]}$ where
$\vk_i^j \seteq (v_i^j, w_i^j, \sk_{i,v}^j, \sk_{i,w}^j, b_i^j)$.
Then, it sends $\qsk^1, \ldots, \qsk^q$ to $\qA$.
\item $\qA$ sends $(\cert^1, \ldots, \cert^q)$ and a quantum program $\qP^* =
(U, \rho)$ to $\qR$.\ryo{Shouldn't $\qR$ check
$(\cert^1,\ldots,\cert^q)$ are valid or not?}\nikhil{I believe its not
needed. As long as $\Pr[\lnot \GoodExt \land \APILive] = \nonnegl(\secp)$,
security of $\MLTT$ is broken.}
\item $\qR$ outputs the quantum program $\qP^*$. \nikhil{$\qR$ doesn't
need to check if $\qP^*$ is good either.}
\item $\qCh$ tests if $\qP^*$ is $\epsilon$-good wrt $(f, \qE, t)$. If not,
    it outputs $\bot$.
\item $\qCh$ runs $(\id_1^*, \ldots, \id_k^*) \la
    \MLTT.\qTrace(\mltt.\msk,
\qP^*, 0.9\epsilon)$.

\item If $(\id_1^*, \ldots,
\id_k^*) \in Q_1 \times \ldots \times Q_k$, $\qCh$ outputs $\bot$.
Else, it outputs $\top$.
\end{enumerate}
\end{description}

Observe that the view of $\qA$ in $\qR$ is indistinguishable from its
view in the standard-KLA experiment for $\SKL$.
Since we have $\Pr[\Win] = \nonnegl(\secp)$ and that
$\APILive_\epsilon$ occurs when $\Win$ occurs, we have
$\Pr[\APILive_\epsilon] = \nonnegl(\secp)$. This means $\Pr[\lnot
\GoodExt_\epsilon \land \APILive_\epsilon] = \nonnegl(\secp)$, which breaks
the security of $\MLTT$.

Now, assume that $\Pr[\lnot \GoodExt_\epsilon \mid
\APILive_\epsilon] \le \negl(\secp)$, which means that
$\Pr[\GoodExt_\epsilon \mid \APILive_\epsilon] \ge 1 -
\negl(\secp)$. We have that
$\Pr[\GoodExt_\epsilon \land \Win] =
\Pr[\Win]\cdot\Pr[\GoodExt_\epsilon \mid \Win] =
\nonnegl(\secp)\cdot \nonnegl(\secp) = \nonnegl(\secp)$ by assumption
and because $\APILive_\epsilon$ occurs whenever $\Win$ occurs.

Moreover, we must also have 
$\Vrfy(\vk^1, \cert^1) = \ldots = \Vrfy(\vk^q, \cert^q) =
\top$ conditioned on $\win$.
We will exploit this fact to construct the following
reduction $\qB$, which breaks the collusion-resistant security of
two-superposition states (for $k = \secp$ and $N = 128q^2$).

\begin{description}
\item \underline{Execution of $\qB^\qA$ in
    $\expb{\qB}{two}{sup}(1^\secp, k, q, N)$}:

\begin{enumerate}
\item $\qB$ samples $(\mltt.\msk, f, \aux_f) \la \MLTT.\Setup(1^\secp, N, k)$
and sends $\aux_f$ to $\qA$.

\item For each $(i, j) \in [k] \times [q]$, $\qCh$ performs the
    following:
\begin{itemize}
    \item Sample $v_i^j, w_i^j \la [N]$ and $b_i^j \la \bit$.
\item Set $b \seteq b_i^j$ and construct the following state on
    register $\qreg{C_{i,j}}$:
    $$\sigma_i^j \seteq \frac{1}{\sqrt{2}}\ket{v_i^j}_{\qreg{C_{i,j}}} +
    (-1)^{b}\frac{1}{\sqrt{2}}\ket{w_i^j}_{\qreg{C_{i,j}}}$$
\end{itemize}
\item For each $i \in [k]$, $\qCh$ sets $\sigma_i \seteq \sigma_i^1 \otimes
    \ldots \otimes \sigma_i^q$ and sends $\sigma_i$ to $\qB$.

\item For each $(i, j) \in [k] \times [q]$, $\qB$ performs the
following:
\begin{itemize}
\item Initialize a register $\qreg{D_{i,j}}$ to $\ket{0\ldots0}$.
Apply the following map to the registers $\qreg{C_{i,j}},
\qreg{D_{i,j}}$:

$$\ket{u}_{\qreg{C_{i,j}}}\ket{z}_{\qreg{D_{i,j}}} \mapsto
\ket{u}_{\qreg{C_{i,j}}}\ket{z \xor
\MLTT.\KG(\mltt.\msk,i,u)}_{\qreg{D_{i,j}}}
$$

Let the resulting state be denoted as $\qsk_i^j$.
\end{itemize}

\item For each $j \in [q]$, $\qB$ sets $\qsk^j \seteq
(\qsk_i^j)_{i\in[k]}$ and sends $\qsk^j$ to $\qA$.

\item $\qA$ sends $(\cert^1, \ldots, \cert^q)$ and a program $\qP^*
= (U, \rho)$ to $\qB$.
\item For each $j \in [q]$, $\qB$ parses $\cert^j$ as $\cert^j =
(c_i^j, d_i^j)_{i \in [k]}$.

\item For each $(i, j) \in [k] \times [q]$, $\qB$ considers the
following function $g_i^j$:
\begin{description}
\item $\underline{g_i^j(v_i^j, w_i^j)}:$
\begin{itemize}
\item Compute $\sk_{i,v}^j \la \MLTT.\KG(\mltt.\msk, i, v_i^j)$.
\item Compute $\sk_{i,w}^j \la \MLTT.\KG(\mltt.\msk, i, w_i^j)$.
\item Output $c_i^j \cdot (v_i^j \xor w_i^j) \xor d_i^j \cdot
(\sk_{i,v}^j \xor \sk_{i,w}^j)$.
\end{itemize}
\end{description}

\item $\qB$ tests if $\qP^*$ is $\epsilon$-good wrt $(f, \qE, t)$.

\item $\qB$ runs $(\id_1^*, \ldots, \id_k^*)
\la \MLTT.\qTrace(\mltt.\msk, \qP^*, 0.9\epsilon)$.

\item For each $i \in [k]$, $\qB$ sends $(g_i^1, \ldots, g_i^q)$ and
$\id_i^*$ to $\qCh$.

\item For each $(i, j) \in [k] \times [q]$, let $Q_i^j \seteq
\{v_i^j, w_i^j\}$. For each $i\in[k]$, $\qCh$ performs the following:
\begin{itemize}
\item Check if there exists $j \in [q]$ such that $\id_i^* \in
Q^j_i$. If not, output $\bot$. Let $s \in [q]$ be such an index.
\item Check if $g_i^s(v_i^s, w_i^s) = b_i^s$ holds. If not, output $\bot$.
\end{itemize}
\item Output $\top$.
\end{enumerate}
\end{description}

Notice that the view of $\qA$ in the experiment is indistinguishable
from its view in the SKL experiment.
Observe now that the condition $g^j_i(v_i^j, w_i^j) = b_i^j$ holds for
every $(i, j) \in [k] \times [q]$, whenever $\win$ occurs.
This is because the algorithms $\Vrfy(\vk^1, \cert^1), \ldots,
\Vrfy(\vk^q, \cert^q)$ will check that for
every $(i, j) \in [k] \times [q]$, it holds that $b_i^j = (c_i^j \|
d_i^j) \cdot (v_i^j \| \sk_{i,v}^j \xor w_i^j \| \sk_{i,w}^j) =
c_i^j \cdot (v_i^j \xor w_i^j) \xor d_i^j \cdot (\sk_{i,v}^j \xor
\sk_{i,w}^j) = g_i^j(v_i^j, w_i^j)$. Moreover, when
$\GoodExt_\epsilon$ occurs, 
for every $i \in [k]$, $\id_i^* \in Q_i$
where $Q_i \seteq Q_i^1 \cup
\ldots \cup Q_i^q$. Recall that $k = \secp$ and $N=128q^2$ as per the
construction. Consequently, $\qB$ breaks the collusion-resistant
security of two-superposition states (\cref{thm:two-sup-par}), giving us a
contradiction. Therefore, $\win$ can only occur with $\negl(\secp)$ probability.
\end{proof}

	\setcounter{tocdepth}{2}
	\tableofcontents

\fi
\else
\fi

\end{document}